\newtheorem{definition}{\bf Definition}	%
\newtheorem{theorem}{\bf Theorem}		%
\newtheorem{lemma}{\bf Lemma}		%
\newtheorem{proposition}{\bf Proposition}		%
\newcounter{appdx}
\newcommand{\kETAL}     {{\em et~al.}}		%
\newcommand{\kIE}     {{\em i.e.}}		%
\newcommand{\smallfont}  {}
\definecolor{dgreen}{rgb}{0,0.655,0.149}
\newcommand{\redc}[1]{{\color{red}}}
\newcommand{\myendbox}{\hfill \qed}
\newcommand{\nosemic}{\renewcommand{\@endalgocfline}{\relax}}%
\newcommand{\dosemic}{\renewcommand{\@endalgocfline}{\algocf@endline}}%
\let\oldnl\nl%
\newcommand{\nonl}{\renewcommand{\nl}{\let\nl\oldnl}}%
\begin{document}

%===========================================================%
%	Parent: paper.tex											%
%	Child: main.tex												%
%===========================================================%

\title{FENDI: Toward High-\underline{F}idelity \underline{En}tanglement \underline{Di}stribution in the Quantum Internet%
\vspace{-2mm}
\author{Huayue~Gu$^{\ddagger}$,~\IEEEmembership{Student Member,~IEEE,}
        Zhouyu~Li$^{\ddagger}$,~\IEEEmembership{Student Member,~IEEE,}
        Ruozhou~Yu,~\IEEEmembership{Senior Member,~IEEE,}\\
        Xiaojian~Wang,~\IEEEmembership{Student Member,~IEEE,}
        Fangtong~Zhou,~\IEEEmembership{Student Member,~IEEE,}\\
        Jianqing~Liu,~\IEEEmembership{Member,~IEEE,}
        and~Guoliang~Xue,~\IEEEmembership{Fellow,~IEEE}}
\IEEEcompsocitemizethanks{\IEEEcompsocthanksitem
$^{\ddagger}$ Both authors contributed equally to this research.
\IEEEcompsocthanksitem Gu, Li, Yu, Wang, Zhou and Liu (\{hgu5, zli85, ryu5, xwang244, fzhou, jliu96\}@ncsu.edu) are with NC State University, Raleigh, NC 27606, USA. Xue (xue@asu.edu) is with Arizona State University, Tempe, AZ 85281, USA.
\IEEEcompsocthanksitem Gu, Li, Yu, Wang, Zhou were supported in part by NSF grant 2045539. 
\IEEEcompsocthanksitem Liu was supported in part by NSF grants 2304118 and 2326746.
\IEEEcompsocthanksitem Xue was supported in part by
NSF grants 2007083, 2007469,
and by the PiQPsi project of Advanced Scientific Computing Research program,
U.S. Department of Energy under FWP No. ERKJ432.
\IEEEcompsocthanksitem The information reported herein does not reflect the position or the policy
of the funding agencies.
}
}

\markboth{Manuscript Currently Under Review at IEEE/ACM TRANSACTIONS ON NETWORKING}%
{How to Use the IEEEtran \LaTeX \ Templates}
%
%
%
%

%
%

%
%
%

%

%===========================================================%
%	Parent: main.tex											%
%	Child: 0.tex												%
%===========================================================%

%
\IEEEtitleabstractindextext{
\begin{abstract}
A quantum network distributes quantum entanglements between remote nodes, and is key to many applications in secure communication, quantum sensing and distributed quantum computing.
This paper explores the fundamental trade-off between the throughput and the quality of entanglement distribution in a multi-hop quantum repeater network.
Compared to existing work which aims to heuristically maximize the entanglement distribution rate (EDR) and/or entanglement fidelity, our goal is to characterize the maximum achievable worst-case fidelity, while satisfying a bound on the maximum achievable expected EDR between an arbitrary pair of quantum nodes.
This characterization will provide fundamental bounds on the achievable performance region of a quantum network, which can assist with the design of quantum network topology, protocols and applications.
However, the task is highly non-trivial and is NP-hard as we shall prove.
Our main contribution is a \emph{fully polynomial-time approximation scheme} to approximate the achievable worst-case fidelity subject to a strict expected EDR bound, combining an optimal fidelity-agnostic EDR-maximizing formulation and a worst-case isotropic noise model.
The EDR and fidelity guarantees can be implemented by a post-selection-and-storage protocol with quantum memories.
By developing a discrete-time quantum network simulator, we conduct simulations to show the characterized performance region (the approximate Pareto frontier) of a network, and demonstrate that the designed protocol can achieve the performance region while existing protocols exhibit a substantial gap.
\end{abstract}

\begin{IEEEkeywords}
Quantum network, entanglement routing, entanglement fidelity, network optimization, approximation algorithm
\end{IEEEkeywords}
}
\maketitle
\IEEEdisplaynontitleabstractindextext

\IEEEpeerreviewmaketitle

%===========================================================%
%	Back to parent: main.tex										%
%===========================================================%

%===========================================================%
%	Parent: main.tex											%
%	Child: 1.tex												%
%===========================================================%

\vspace{-1em}
\ifCLASSOPTIONcompsoc
\IEEEraisesectionheading{\section{Introduction}\label{sec:intro}}
\else
\section{Introduction}
\label{sec:intro}
\noindent
\fi
\IEEEPARstart{A} quantum network enables efficient quantum communication based on the principle of quantum entanglement~\cite{koashi1998no}. The ability to transmit quantum information between remote nodes is key to many astonishing quantum applications, such as quantum secure communication~\cite{bennett2020quantum}, distributed quantum computing~\cite{cacciapuoti2019quantum,cicconetti2022resource}, and quantum sensor network~\cite{Xia2020}.

While the concept has been proposed for years, practical quantum networking has only come around the corner with recent real-world implementations~\cite{elliott2002building,peev2009secoqc,sasaki2011field,yin2017satellite,dahlberg2019link}.
Though current systems are built in ideal conditions and small-scale in nature, research has looked into how such small-scaled networks could possibly be extended to a fully-fledged, global-scale quantum internet~\cite{zhao2021redundant}.
A key functionality of a quantum internet is to distribute entangled quantum states between remote nodes across long distances.
For instance, an entangled pair of photons can teleport one quantum bit (qubit) between a pair of nodes that are arbitrarily far away from each other.
Future applications would require a steady stream of high-quality entanglements between arbitrary remote ends.

This paper considers a first-generation quantum network built with quantum repeaters~\cite{Muralidharan2016}, which performs entanglement distribution via \emph{entanglement generation} and \emph{entanglement swapping}.
If a quantum link connects a pair of repeaters, a remote entanglement between them can be generated by preparing a pair of entangled photons at an entanglement source, and directly sending each photon to one node.
Entanglements generated over multiple links can further be swapped at joint intermediate nodes to entangle qubits at indirectly connected nodes. %
This way, each end-to-end entanglement is generated along an \emph{entanglement path} in a quantum network.

As entanglements are a critical resource, attention has been drawn to the design of efficient entanglement distribution protocols to ensure the quantity (\emph{aka} entanglement distribution rate or EDR) of entanglement distribution.
A quantum network has unique characteristics imposed by the underlying physics or technology deficits.
First, entanglement distribution efficiency is fundamentally limited by transmission loss of entangled photons, and failures in entanglement swapping.
To mitigate these, many existing works have studied \emph{efficient entanglement routing} to find entanglement paths with maximum success probability~\cite{zhao2021redundant,Dai2020a,shi2020concurrent}.
Second, uncontrollable noise and operation errors can degrade the quality (\emph{aka} fidelity) of distributed entanglements.
Low fidelity results in low communication efficiency due to excessive error correction needed, even when the EDR is high.
Thus when considering entanglement distribution to support various applications, it is essential to consider both EDR and fidelity.

This paper explores the tradeoff between the achievable EDR and fidelity of a quantum network with a general topology.
We start with characterizing the end-to-end fidelity of entanglements distributed over an entanglement path.
Combining it with a recently proposed achievable upper bound of the expected EDR between a pair of nodes, we propose a novel decomposition theorem that is essential for characterizing both the achievable EDR and fidelity between a pair of nodes.
As a next step, we formulate the problem of computing the maximum achievable worst-case fidelity, while trying to satisfy a lower bound on the achievable expected EDR.
This bi-criteria formulation can be used to optimize for many applications, which desire a steady entanglement rate and can benefit from improved end-to-end fidelity.
Our proposed solution, named \textbf{FENDI}, is a \emph{fully polynomial-time approximation scheme} to the formulated bi-criteria problem, which we prove to be NP-hard.
We further show that the computed solution can be implemented with a post-selection-and-storage protocol to achieve both the expected EDR and the end-to-end fidelity.
With the help of discrete event simulation, we demonstrate that FENDI can be used to approximate the EDR-fidelity Pareto frontier of a network efficiently, and show that existing algorithms exhibit a substantial gap from the approximate frontier that can be achieved by the post-selection-and-storage protocol.
Our main contributions are summarized as follows:
\vspace{-0.25em}
\begin{enumerate}
    \item We model a general quantum network with Werner states, and derive an end-to-end fidelity parameter as a product of link and node attributes based on isotropic noise.
    \item We prove a novel decomposition theorem, enabled by a new \emph{primitive entanglement flow (pflow)} abstraction, to characterize the worst-case end-to-end fidelity of entanglement distribution with optimal EDR.
    \item Based on the above, we formulate a bi-criteria problem called \emph{high-fidelity remote entanglement distribution (HF-RED)} between a pair of nodes, and prove it is NP-hard.
    \item We propose a \emph{fully polynomial-time approximation scheme (FPTAS)} to maximize the worst-case end-to-end fidelity subject a lower bound on the expected EDR, and realize the solution with a post-selection-and-storage protocol.
    \item We develop a discrete event quantum network simulator implementing the protocol, characterize the (approximate) EDR-fidelity frontier, and compare existing protocols to the post-selection-and-storage protocol.
\end{enumerate}
\vspace{-0.25em}

\noindent\textbf{Organization:}
\S\ref{sec:rw} reviews background and related work.
\S\ref{sec:model} introduces our quantum network model.
\S\ref{sec:abs} presents our new abstraction and our decomposition theorem for characterizing the EDR-fidelity trade-off, formulate the HF-RED problem, and show its NP-hardness.
\S\ref{sec:fptas} presents our approximation scheme, analysis and discussion.
\S\ref{pe} presents implementation and simulation results.
\S\ref{sec:conclusions} concludes the paper.
%

%===========================================================%
%	Back to parent: main.tex										%
%===========================================================%

%===========================================================%
%	Parent: main.tex											%
%	Child: 2.tex												%
%===========================================================%

\section{Background and Related Work}
\label{sec:rw}
\noindent
The promises of quantum communication advantages in many practical applications (such as sensing, communication and computing) have attracted huge attention across the globe~\cite{yin2017satellite,esdi,liao2017satellite}.
The idea of a quantum network was first proposed by the DARPA quantum network project \cite{elliott2002building}.
Early work in quantum networking focused on feasibility demonstration in ideal situations.
Much of the literature has derived analytical and simulation models for quantum repeater chains~\cite{feynman1982simulating,caleffi2018quantum} and other specialized topologies including lattices~\cite{pant2019routing}, star~\cite{vardoyan2021stochastic} and ring-like topologies~\cite{schoute2016shortcuts,chakraborty2019distributed}.
In reality, a quantum internet is unlikely to have such ideal topologies due to physical and geographical limitations.

Recent studies have focused on \emph{entanglement routing} in general quantum networks~\cite{Dai2020b,chang2022order,kaipingxue_2022}.
A common approach was to find paths with highest success probability using modified shortest path algorithms~\cite{van2013designing}.
Shi~\kETAL{}~\cite{shi2020concurrent} first showed that maximum-success paths do not lead to the highest throughput, and proposed algorithms QCAST and QPASS with optimal single-path routing metrics.
Zhao~\kETAL{}~\cite{zhao2021redundant} proposed an algorithm to achieve higher throughput by provisioning redundant intermediate entanglements for swapping.
Zeng~\kETAL{}~\cite{zeng2022multi} proposed an integer programming-based solution using branch-and-price with very limited quantum memories.
Dai~\kETAL{}~\cite{Dai2020a,Dai2020b} proposed the first optimal remote entanglement distribution (ORED) protocol for end-to-end EDR maximization, giving an upper bound on the achievable expected EDR between a pair of nodes in an arbitrary network. % with the assumption of the buffered quantum network.
The above works only considered the success probability but ignored the quality (fidelity) of entanglements.
%
%'

To enable high-quality quantum communication, % in the quantum network, %there exists some work focusing on fidelity ensurance and improvement.
some works have focused on ensuring or improving fidelity.
Zhao~\kETAL{}~\cite{zhao2022e2e} first studied fidelity-aware entanglement routing.
They derived an end-to-end fidelity model based on \emph{bit flip errors} and proposed a purification-based fidelity-aware routing algorithm with heuristic path selection, linear programming, and rounding.
Pouryousef~\kETAL{}~\cite{pouryousef2022quantum} proposed a quantum overlay network architecture, utilizing entanglement purification to maximize the weighted entanglement generation rate for multiple users.
Panigrahy~\kETAL{}~\cite{panigrahy2022capacity} also proposed a max-weight scheduling policy and proved its stability for all arrival requests in a star-shaped network topology.
However, these studies are based on strong assumptions and constraints, and do not provide theoretical guarantee on the achievable EDR and fidelity region of a general network.
Our study on theoretical guarantees for characterizing the EDR-fidelity trade-off in a quantum network is motivated by the above works and their limitations.

%
%
%
%
%
%

%===========================================================%
%	Back to parent: main.tex										%
%===========================================================%

%===========================================================%
%	Parent: main.tex											%
%	Child: 3.0.tex												%
%===========================================================%

%

\section{System Model}
\label{sec:model}
\noindent
In this section, we present preliminaries of a quantum network. Notations related to modeling are summarized in Table~\ref{notation}.

\subsection{Quantum Basics}
\noindent
Consider a common $2$-state quantum system with orthonormal basis states $|0\rangle$ and $|1\rangle$.
A quantum bit (\emph{qubit}) is a superposition of $|0\rangle$ and $|1\rangle$, written as $|b\rangle\! = \!\alpha |0\rangle + \beta |1\rangle$, satisfying $|\alpha|^2 + |\beta|^2 = 1$.
A perfect measurement on $|b\rangle$ yields classical state $0$ with probability $|\alpha|^2$ and $1$ with probability $|\beta|^2$.
A two-qubit system is a superposition of four basis states $|00 \rangle$, $|01 \rangle$, $|10 \rangle$ and $|11 \rangle$.
Let $|b_1b_2\rangle\!\! = \!\alpha_{00} |00\rangle \!+ \!\alpha_{01}|01\rangle +\! \alpha_{10}|10\rangle + \alpha_{11}|11\rangle$, such that $|\alpha_{00}|^2 + |\alpha_{01}|^2 + |\alpha_{10}|^2 + |\alpha_{11}|^2 = 1$.
Simultaneous measurement on the two qubits will yield $00$, $01$, $10$ and $11$ with probabilities $|\alpha_{00}|^2$, $|\alpha_{01}|^2$, $|\alpha_{10}|^2$ and $|\alpha_{11}|^2$ respectively.

A maximally entangled pair (Bell pair) is a two-qubit system in one of the four \emph{Bell states}:
$|\Phi^{\pm}\rangle = \frac{1}{\sqrt{2}}( |00\rangle \pm |11\rangle )$, and $|\Psi^{\pm}\rangle = \frac{1}{\sqrt{2}}( |01\rangle \pm |10\rangle )$.
A Bell pair is \emph{maximally entangled} since it only contains two of the four basic states with equal probability, where in both states the two qubits are perfectly correlated.
For instance, in state $|\Psi^+\rangle$, if one of the qubits measures into $x$, then the other must measure into $(1-x)$, for $x \in \{ 0, 1\}$.
Bell pairs (also called \emph{\textbf{ebits}}) form the basis of two-party quantum communications: if Alice and Bob each holds one of two entangled qubits, they can use this pair to send any single-qubit quantum state via \emph{local operations and classical communication (LOCC)}.
Bell pairs can also be used to construct arbitrary multipartite entangled states needed by applications such as distributed quantum sensing~\cite{Xia2020}.

\begin{table}
\caption{Key Notations in Modeling}
\label{notation}
\footnotesize
\begin{tabular}{p{1.7cm}p{6.4cm}}
\hline
Parameters & Description\\
\hline
$G=(N, L)$ & quantum network with nodes $N$ and links $L$\\
$c_{l},F_l$ & capacity and fidelity of link $l$ \\ 
$W_l, W_n$ & fidelity loss parameters of link $l$ and node $n$\\
$q_l, q_n$ & ebit generation \& swapping success probabilities\\
$F^{\sf E2E},P_{st}^{\sf E2E}$ & the end to end fidelity and success probability\\
$\mathcal{G} = (\mathcal{V}, \mathcal{E})$ & induced graph of an eflow or pflow \\
$\eta_{st}$ & the expected EDR between SD $st$\\
$m{{{{{}}}}}n$ & an enode, \kIE{}, an unordered pair of nodes $m, n \in N$\\
$\Psi_{st}$ & the set of all possible pflows between $s$ and $t$ \\
$\Delta_{st}$ & expected EDR bound between $s$ and $t$\\
$\Upsilon_{st}$ & end-to-end fidelity bound between $s$ and $t$\\
$\mathcal{P}_{st}$ & the set of $st$-pflows with fidelity no lower than $\Upsilon_{st}$\\
\hline
Variables & Description\\
\hline
$g_{m{{{{{}}}}}n}$ & elementary ebit generation rate along link ${m{{{{{}}}}}n} \in L$ divided by the capacity $c_{mn}$ \\
$f^{m{{{{{}}}}}k}_{m{{{{{}}}}}n}$ &  rate of ${m{{{{{}}}}}k}$-ebits to be swapped to generate ${m{{{{{}}}}}n}$-ebits\\
$I(mn)$ & total ebit rate generated between node pair $m{{{{{}}}}}n$ \\ 
$\Omega(mn)$ & total ebit rate contributed by $m{{{{{}}}}}n$ to swapping\\
$\eta(\psi)$ & the pflow value (expected EDR) assigned to $\psi \in \Psi_{st}$\\
\hline

\hline
\end{tabular}

\end{table}

\subsection{Quantum Operations}
\noindent
Quantum operations and their characteristics crucially differentiate quantum networking from classical networking.

\textbf{Entanglement generation:} 
Quantum network mainly relies on the generation and transmission of photonic entangled states.
A pair of entangled photons is first generated by a physical process such as \emph{spontaneous parametric down-conversion (SPDC)} at an entanglement source.
Then, both photons are transmitted to two nearby nodes via a quantum link\footnotemark{}.
\footnotetext{Alternatively, the entanglement source can be placed at a repeater, then only one photon needs to traverse the link to the other repeater.}%
The photons can be transmitted via different types of links---optical fiber, free space, or an optical switch network---but suffer from transmission loss that is commonly exponential to the distance traversed~\cite{singh2021quantum, photon-loss}.
We consider generating an entangled photon pair and transmitting one/both photons jointly as the \emph{entanglement generation} process.
Entanglements generated via this process is called \emph{elementary ebits}.

Notably, this is a \emph{probabilistic} process because of both the generation process with non-linear optics and the probabilistic transmission loss.
A heralding and post-selection process is commonly employed after this process to detect successfully entangled and transmitted pairs, and the process can be repeated for many times until one entangled pair is generated.
\textbf{Entanglement swapping:} Considering photon loss during transmission, entanglement swapping via quantum repeaters is essential for long-distance entanglement distribution.
An entanglement swap takes as input two remote entangled pairs---each with one photon on a shared repeater node.
The repeater first entangles the two local photons at the repeater, performs a Bell state measurement (BSM) on the two photons, and then sends the measurement result to either of the two remote nodes via classical communication.
The node receiving the result then performs a local unitary operation on its own qubit, and the two remote photons become entangled without physical interaction.

Similar to generation, swapping is also \emph{probabilistic} with near-term devices. %, but the reason is different.
Fundamentally, BSM with linear optics can only succeed with no more than $50\%$ probability, since two of the four Bell states are not distinguishable~\cite{bayerbach2023bell}.
Hence when the measurement result matches either of the two indistinguishable states, the remote qubits must be discarded.
Device deficits may further degrade the success probability.

Fig.~\ref{fig:swap} illustrates the process of distributing an entanglement between two nodes connected by a quantum repeater.
\begin{figure}[t!]
\centering
\includegraphics[width=0.35\textwidth]{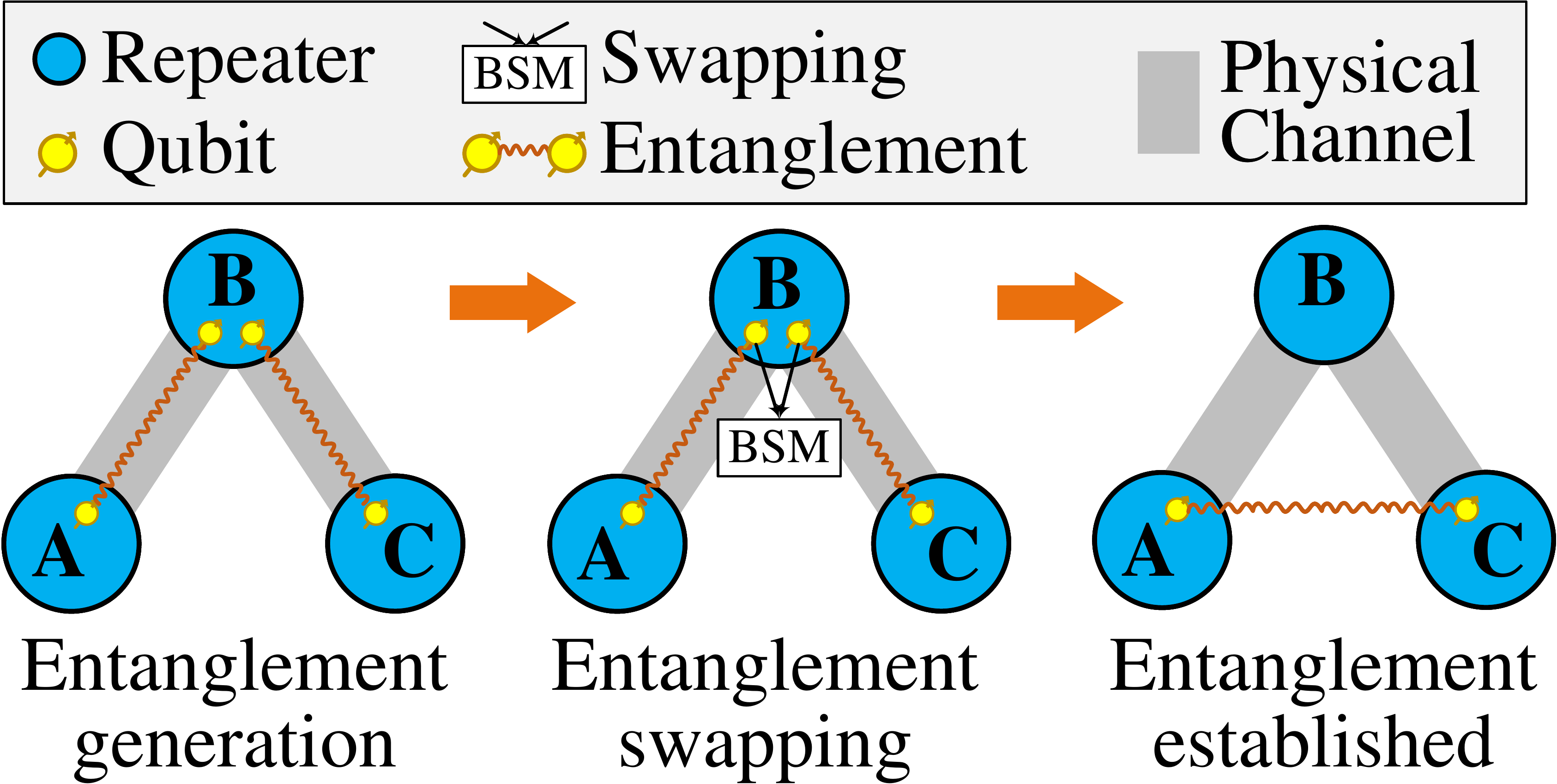}
\caption{\smallfont Basic quantum network operations: entanglement generation and entanglement swapping.}
\label{fig:swap}
\end{figure}
Two \emph{elementary ebits} are first generated along links $A$--$B$ and $B$--$C$ via entanglement generation.
To swap, $B$ entangles and measures its two local qubits and sends the result to either $A$ or $C$ via classical communications.
According to the result, $A$ or $C$ applies a unitary operation on its qubit. 
If all operations succeed, the two qubits at $A$ and $C$ are then entangled without interacting with each other.
This can be done recursively along a path until an end-to-end ebit between source and destination\footnotemark{} is established for quantum information exchange.
\footnotetext{Although entanglements are undirected, we use traditional network terms ``source'' and ``destination'' to denote an undirected pair of end nodes involved in quantum communications for simplicity.}

\subsection{Quantum Network Model}
\noindent
A quantum internet is a distributed facility
distributing remote ebits between source-destination ({{SD}}) pairs, via entanglement generation and swapping.
Formally, we model a quantum internet with an undirected graph $G = (N, L)$, where $N$ is the set of quantum repeaters, and $L$ is the set of physical channels (links) between repeaters.
Each link $l \in L$ has a capacity $c_l \in \mathbb{Z}^+$, denoting the number of channels that can be attempted for ebit generation along the link; $\mathbb{Z}^+$ denotes the positive integer set.
To model the aforementioned probabilistic processes, we further assume each link $l \in L$ has a success probability, $q_l$, denoting the probability of successfully generating one elementary ebit over one channel in unit time; each repeater $n \in N$ also has a swapping success probability $q_n$.
We adopt a time-slotted system model following~\cite{zhao2022e2e,zhao2021redundant}, while all our definitions and algorithms can be trivially extended to continuous-time asynchronous operations~\cite{zhao2023scheduling}.
In each time slot, the following phases are carried out in order:
\begin{enumerate}
\item\textbf{Entanglement generation:} For a pair of nodes $mn \!\in\!L$ with a direct link, they will attempt to generate elementary $mn$-ebits at a pre-defined rate\footnotemark{}\footnotetext{We use $mn$ to abbreviate an unordered node pair $\{ m,\! n \}$. Hence $mn \!=\! nm$.}.
\item\textbf{Entanglement swapping:} When ebits are available between both node pair $mk$ and node pair $kn$ sharing a common repeater $k$, repeater $k$ can attempt to perform entanglement swapping between each pair of $mk$- and $kn$-ebits to create ebits between remote nodes $m$ and $n$.

\end{enumerate}

We assume a central controller controls entanglement generation and swapping in the network~\cite{zhao2021redundant,shi2020concurrent,zhao2022e2e} by defining the rates of generation and directions of swapping across all links or nodes. 
The controller collects network-wide information, monitors network status, such as success probabilities and fidelities, and allocates resources in the network.

\vspace{-0.5em}
\subsection{Quantum Noise and Fidelity}

\noindent
While the above models assume perfect quantum channels and operations---meaning the final distributed ebits are the exact same state as the generated ones---the inevitable noise in quantum operation and transmission can introduce error and make the final state to differ from the initial state.
In classical communication, errors can be measured, detected and corrected on-the-fly or end-to-end.
In quantum, however, errors cannot be detected without destroying the quantum state due to the quantum no-cloning theorem.
Thus when a pure entangled state is affected by noise, it becomes a \emph{mixed state} that cannot be distinguished from the pure state without measurement.

Let $|\Phi^+\rangle$ be our desired pure entangled state\footnotemark{}.
\footnotetext{Since all Bell states are symmetric, we use $|\Phi^+\rangle$ as the desired state without loss of generality throughout this paper.}
A mixed state $M$ can result from $|\Phi^+\rangle$ going through a noisy channel, or noise in quantum operations.
Fidelity is a key quantum metric quantifying how close a mixed state is to the desired state, defined as $F \triangleq \langle \Phi^+| M | \Phi^+ \rangle$, and denoting the probability that $M$ (represented by a density matrix) is in the desired state $| \Phi^+ \rangle$.
To provide rigorous fidelity guarantee, we assume a worst-case isotropic error model~\cite{victora2020purification}, as compared to the bit flip error model in~\cite{zhao2022e2e}.
As shown by Bennett~\kETAL~\cite{bennett1996mixed}, an arbitrary mixed state $M$ with fidelity $F$ can be transformed to a Werner state with the same $F$ via \emph{random bilateral rotations} (RBR).
The Werner state with fidelity $F$ is defined as
\begin{equation*}
    \smallfont
    \mathcal{W}_F \!=\! F | \Phi^+ \rangle \langle \Phi^+ | + \frac{1\!-\!F}{3}( | \Phi^- \rangle \langle \Phi^- | + | \Psi^+ \rangle \langle \Psi^+ | + | \Psi^- \rangle \langle \Psi^- | ).
\end{equation*}
This Werner state can be viewed as a mixture of the pure state $|\Phi^+ \rangle$ with isotropic noise~\cite{victora2020purification}.
Hereafter, we assume all elementary and intermediate mixed-state ebits are transformed to the Werner state above before further operation.

For an elementary ebit established along a physical channel, its fidelity is decided by the quantum circuit that generates the entanglement, and the channel noise during transmission.
We define $F_l \in [0, 1]$ to model the fidelity of elementary ebits generated along each physical link $l$.

Given two ebits with fidelity $F_1$ and $F_2$ respectively, 
consider a \emph{perfect} entanglement swap performed between the two ebits implemented via BSM that consists of a CNOT gate followed by two single-qubit gates.
Since the ebits are mixed with noises, even a perfect entanglement swap may still fail due to the two ebits not being in the desired state $|\Phi^+\rangle$, leading to measurement error.
Two cases may result in a successful swap:
1) both ebits were in $|\Phi^+\rangle$ with probability $F^* = F_1F_2$, in which case the swapped ebit is also in $|\Phi^+\rangle$;
2) both ebits were not in $|\Phi^+\rangle$ but had equal states, with probability $F^{**} = 3 \frac{(1-F_1)}{3}\frac{(1-F_2)}{3}$, in which case the swapped ebit is in another Bell state instead of $| \Phi^+ \rangle$, but can be transformed to $| \Phi^+ \rangle$ via LOCC~\cite{bennett1996mixed}.
In the other cases, the swap fails because of unknown and unequal states of the two ebits.%

By combining these cases,
a \emph{perfect} entanglement swap will result in a new ebit with fidelity $F'$~\cite{Dur1999}, where
\begin{equation}
    \smallfont
    \label{eq:fed1}
    F' = F^* + F^{**} = \frac{1}{4} \cdot \left( 1 + 3\frac{(4F_1 - 1)}{3} \frac{(4F_2 - 1)}{3} \right).
\end{equation}

In practice, the swapping operation is also noisy or imperfect, and hence incurs additional fidelity loss.
Such loss is due to the (un)reliability of BSM, 1-qubit operation, and 2-qubit operation involved.
For instance, if a swap is performed with two elementary ebits with $F_1$ and $F_2$ at a node $n$ where the accuracy of BSM and probabilities of ideal 1-qubit, 2-qubit operations are $\alpha_n$, $o_{1,n}$, and $o_{2,n}$, respectively, the fidelity of a successfully generated ebit after swapping \cite{Dur1999} is 
\begin{equation}
\label{eq:newf}
    \smallfont
    F' = \frac{1}{4} \cdot \left( 1 + 3o_{1,n}o_{2,n}\frac{4\alpha_n^2-1}{3}\frac{4F_1 - 1}{3}\frac{4F_2 - 1}{3} \right).
\end{equation}

Based on Eq.~\eqref{eq:newf}, we facilitate notation by defining fidelity parameters $W_l \triangleq \frac{4F_l - 1}{3}$ and $W_n \triangleq o_{1,n}o_{2,n}\frac{4\alpha_n^2-1}{3}$ for each link $l$ and repeater $n$ respectively, and the fidelity of a successfully generated ebit after swapping is
\begin{equation}
    \label{eq:fed2}
    \smallfont
    F' = \frac{1}{4} \cdot \left( 1 + 3W_1W_2W_n \right).
\end{equation}
Assume an end-to-end ebit is established by swapping elementary ebits created along links $\{ l_1, l_2, \dots, l_{X+1} \}\! \subseteq L$ recursively at nodes $\{ n_1, n_2, \dots, n_{X} \} \!\subseteq\! N$.
Recursively applying Eq.~\eqref{eq:fed2}, the end-to-end fidelity of the ebit is
\begin{equation}
    \label{eq:fed3}
    \smallfont
    F^{\sf E2E} = \frac{1}{4} \cdot \left( 1 + 3 \prod_{i = 1}^{X+1} W_{l_i} \prod_{j = 1}^{X} W_{n_j} \right).
\end{equation}
From Eq.~\eqref{eq:fed3}, the end-to-end fidelity decreases exponentially with increasing number of hops~\cite{kozlowski2020designing}. 
Eq.~\eqref{eq:fed3} will serve as the basic tool to quantify and optimize the end-to-end fidelity of ebits distributed in a quantum internet.

Note that fidelity cannot be measured for a single ebit---the measurement itself will destroy the ebit.
As such, fidelity parameters can only be inferred from measuring and profiling ebits generated on elementary links (or after swap) for many times.
We assume that each node or link will be independently profiling the $W_n$ or $W_l$ value continuously throughout the network operation, and will regard these values (or some binary encoding of them) as input to further modeling and formulation.

\subsection{Network Performance Metrics}

\noindent
When optimizing operation of a quantum network, two performance metrics have been widely considered in the literature.

\noindent\textbf{Entanglement distribution rate (EDR):} 
similar to throughput in classical network, EDR is the number of ebits distributed between an SD pair in unit time. 
Due to the probabilistic operations, we use $\eta_{st}$ to denote the \emph{expected EDR} between the source $s$ and destination $t$.

\noindent\textbf{End-to-end fidelity:}
as another desired metric, a higher end-to-end fidelity $F^{\sf E2E}$ leads to higher communication efficiency.%, which is another desired metric. % in entanglement distribution.

\section{Characterizing Achievable EDR and Fidelity} 
\label{sec:abs}

\subsection{Characterizing Achievable Expected EDR}
\noindent
We start with the question of how to characterize the maximum achievable EDR between two nodes in a given network.
Assuming no quantum memory is available, the generated ebits would decohere within one time slot, meaning that all generation and swapping processes along an end-to-end path must succeed within one time slot in order to successfully generate an end-to-end ebit.
The probability of successful generation along one path is thus the product of all node and link probabilities: $P^{\sf E2E}_{st} = \prod_{i = 1}^{X+1} q_{l_i} \prod_{j = 1}^{X} q_{n_j}$ for a path $\rho = (n_0, n_2, \dots, n_{X+1})$ where $s = n_0$, $t = n_{X+1}$ and $l_i = n_{i-1}n_i \in L$.
The achievable expected EDR is then the bottleneck capacity $c^*_{st} \triangleq \min_{i} \{ c_{l_i} \}$ times the end-to-end success probability: $\eta_{st} = c^*_{st} \cdot P^{\sf E2E}_{st}$.

It is expected that future quantum repeaters will be equipped with quantum memories acting as temporary buffers. %~\cite{qunet-quantum-memory}.
In this case, rate characterization becomes more complicated.
In~\cite{shi2020concurrent}, it has been shown that post-selection and storage can increase the maximum achievable EDR beyond the simple product of probabilities times capacity, since the quantum memories can temporarily buffer and rematch the post-selected ebits that are unmatched for swapping due to unsuccessful ebit generation on other links.
Subsequently, many works have explored how to design entanglement routing and distribution protocols with limited or ephemeral quantum memories to improve EDR~\cite{zhao2022e2e,zhao2021redundant,zhao2023scheduling}.
However, to what extent can post-selection and storage increase the optimal expected EDR remains unclear.

A recent breakthrough is a tight \emph{upper bound} on the maximum achievable EDR between a pair of nodes with post-selection and storage, due to Dai~\kETAL~\cite{Dai2020a, Dai2020b}.
Their result is based on an abstraction called the \emph{entanglement flow}, or \textbf{eflow}, which formulates the maximum achievable expected EDR as a linear program.
Below, we present the definition of an eflow in~\cite{Dai2020b}, slightly modified to align with our notation, which will be used subsequently in our decomposition theorem.

\begin{definition}[Eflow~\cite{Dai2020b}]
\label{def:eflow}
Given a network $G = (N, L)$ and an SD pair $st$, an eflow in $G$ is defined by variables
\begin{itemize}[noitemsep,topsep=0pt]
    \item $g_{m{{{{{}}}}}n} \in [0, 1], \forall mn \in L$, denoting the rate of elementary ebit generation along the physical link $mn$, as a ratio of the capacity $c_{mn}$ of the link, and
    \item $f^{m{{{{{}}}}}k}_{m{{{{{}}}}}n} \ge 0, \forall m, n, k \in N$, denoting the expected rate of ebits established between nodes $m$ and $k$ that will be used for swapping to generate ebits between nodes $m$ and $n$.
\end{itemize}
A feasible eflow must have $f$ and $g$ satisfying:
\begin{subequations}
\label{fml:ored}
\begin{align}
    & f^{m{{{{{}}}}}k}_{m{{{{{}}}}}n} = f^{k{{{{{}}}}}n}_{m{{{{{}}}}}n}, 
    \quad \forall m, n, k \in N;    
    \label{eq:eflow:constr_a} 
    \\
    & I(mn) \!= \!\Omega(mn), 
    \quad \forall m, n \!\in\! N, mn \ne st; 
    \label{eq:eflow:constr_b}
    \\
    & \Omega(st) = 0; 
    \label{eq:eflow:constr_c}
\end{align}
where for $\forall m, n \in N$,
\begin{align}
     \label{eq:I}
    &I(mn) \triangleq 
        q_{mn} c_{mn} g_{m{{{{{}}}}}n} \!\cdot\!\mathbf{1}_{mn \in L} \!+\!\! 
        \sum_{\mathclap{k \in N \setminus \{m, n\}}}  
        \frac{q_k}{2} \! \left( f^{m{{{{{}}}}}k}_{m{{{{{}}}}}n} \!+\! f^{k{{{{{}}}}}n}_{m{{{{{}}}}}n} \right) \!\! ,   
\\
    \label{eq:O}
    &\Omega(mn) \triangleq 
         \sum_{\mathclap{k \in N \setminus \{m, n\}}}  
            \left( f^{m{{{{{}}}}}n}_{m{{{{{}}}}}k} + f^{m{{{{{}}}}}n}_{k{{{{{}}}}}n} \right),
\end{align}
\end{subequations}
and $\mathbf{1}_{mn \in L}$ is an indicator function of whether $mn \in L$.
The \textbf{eflow value} of SD pair $st$ is defined as $\eta_{st} \triangleq I(st)$.
\myendbox
\end{definition}

\textbf{\emph{Explanation:}} 
For brevity, each node pair $m{{{{{}}}}}n$ with $m, n \in N$ is called an \textbf{enode} hereafter, meaning that post-selected ebits may be established between the pair of nodes at some stage of remote distribution.
Here $I(mn)$ denotes the ebits generated between node pair $m{{{{{}}}}}n$ (including elementary ebits and ebits generated by swapping), and $\Omega(mn)$ denotes the ebits contributed by $m{{{{{}}}}}n$ to generating ebits between other node pairs via swaps.
Note that the elementary ebits (first term in Eq.~\eqref{eq:I}) are discounted by generation probability $q_{mn}$ of a link, and ebits received from swapping at node $k$ are discounted by node $k$'s swapping probability $q_k$.
Eq.~\eqref{eq:eflow:constr_a} enforces the two node pairs $m{{{{{}}}}}k$ and $k{{{{{}}}}}n$, whose ebits will be swapped to form ebits for $m{{{{{}}}}}n$, contribute equal number of ebits to the swap.
Eq.~\eqref{eq:eflow:constr_b} enforces an intermediate pair $m{{{{{}}}}}n$ does not keep generated ebits, but contributes all ebits to further swapping for establishing end-to-end $st$-ebits.
Eq.~\eqref{eq:eflow:constr_c} constrains that the SD pair $st$ should not contribute any established ebits to further swapping.
An eflow describes how ebits ``flow through'' different enodes and ``merge'' at repeaters until some are ``landed in'' (established between) the SD pair's enode $st$, with flow conservation at repeaters enforced by~\eqref{eq:eflow:constr_b}.

One way to visualize an eflow is to define its \emph{induced graph}, $\mathcal{G} = (\mathcal{V}, \mathcal{E})$, where $\mathcal{V} \subseteq (N \times N) \cup \{ \bot \}$ is a set of enodes (with a special enode $\bot$ denoting the generation process), and $\mathcal{E} \subseteq \mathcal{V} \times \mathcal{V}$ are directed edges denoting generation and swapping processes.
An enode $mn \in \mathcal{V}$ corresponds to one with $I(mn) > 0$ in the eflow.
An edge $(mk, mn) \in \mathcal{E}$ then denotes one swapping variable $f^{mk}_{mn} > 0$.
An edge $(\bot, mn) \in \mathcal{E}$ specially denotes a generation variable $g_{mn} > 0$.
From Eq.~\eqref{eq:eflow:constr_a}, it is clear that swapping edges $(mk, mn)$ and $(kn, mn)$ must appear simultaneously in $\mathcal{G}$---either they both present or they both absent.
An example is shown in Fig.~\ref{fig:induced-graph}.

\begin{figure}[t!]
\centering
\includegraphics[width=0.5\textwidth]{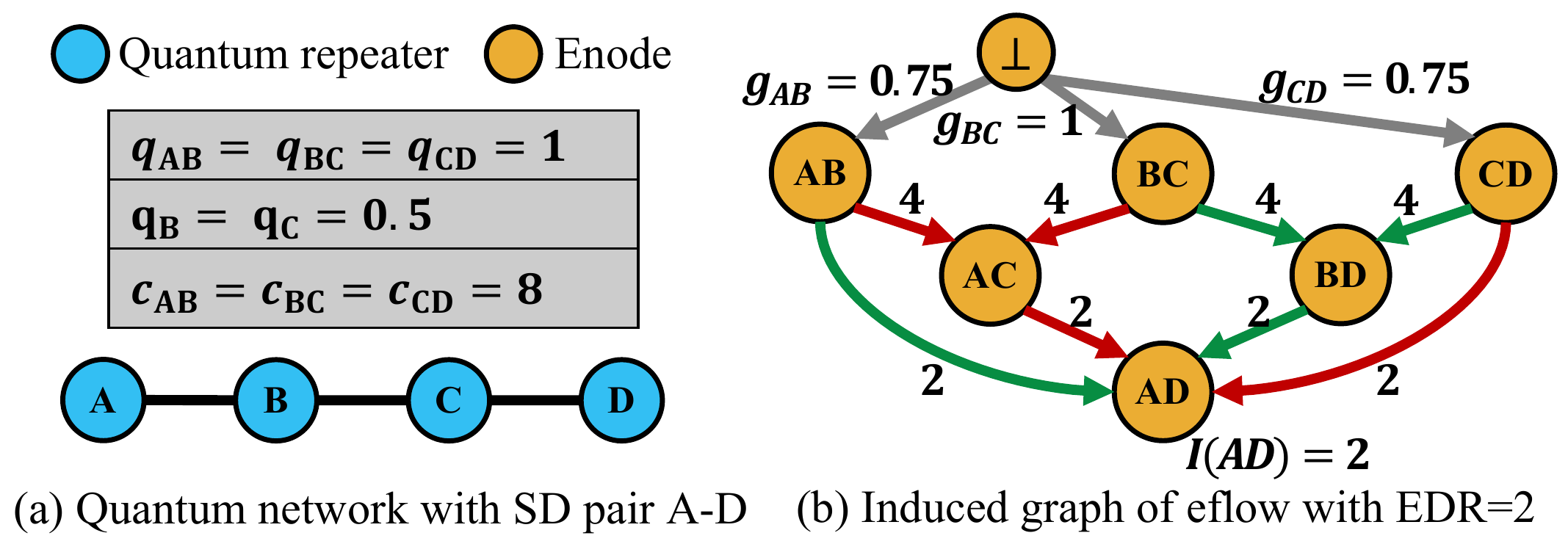}
\caption{\smallfont Induced graph of an eflow. Value on each edge denotes a variable: $g_{mn}$ or $f^{mk}_{mn}$. Two same-color edges pointing to one enode are matched for swapping.}
\label{fig:induced-graph}
\end{figure}

We summarize the importance of the eflow formulation with the following theorem, which restates Theorems~1--2 in~\cite{Dai2020b}.
\begin{theorem}[Characterizing maximum EDR~\cite{Dai2020b}]
    \label{th:12}
    The optimal solution to the following problem (called ORED in~\cite{Dai2020b}),
    \begin{equation}\label{fml:ored_new}\eta^*_{st} \!\triangleq\! \max\nolimits_{f, g} \{ \eta_{st} \,|\, \eqref{eq:eflow:constr_a}\text{--}\eqref{eq:O} \},\end{equation} 
    is a tight upper bound on the maximum expected EDR between $s$ and $t$ in $G$. 
    The induced graph $\mathcal{G}$ of at least one optimal solution is a directed acyclic graph (DAG).
    Furthermore, there exists an entanglement distribution protocol that can achieve expected EDR of $\eta^*_{st}$ between $st$.
    \myendbox
\end{theorem}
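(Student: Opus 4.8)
The plan is to treat the three bundled claims separately: that the optimum $\eta^*_{st}$ of \eqref{fml:ored_new} is an upper bound on any achievable expected EDR, that this bound is tight, that at least one optimizer has an acyclic induced graph, and that a concrete protocol attains it. Because the program \eqref{eq:eflow:constr_a}--\eqref{eq:O} is really a flow-conservation formulation over enodes, I would establish the upper bound by a ``conservation of expectations'' argument, the DAG property by a cycle-cancellation argument, and achievability by decomposing an optimal eflow and running it through a post-selection-and-storage protocol; tightness then follows by combining the upper bound with achievability. Throughout I would lean on the original analysis of Dai~\kETAL~\cite{Dai2020b} and the protocol template of Shi~\kETAL~\cite{shi2020concurrent}.

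For the upper bound, I would fix an \emph{arbitrary} entanglement-distribution strategy (with post-selection and bounded/ephemeral quantum memory) and, in steady state, read off its expected per-slot rates: let the elementary generation rate on $mn$ define $g_{m{{{{{}}}}}n}$, and let the expected rate at which $mk$-ebits are consumed by swaps at $k$ aimed at producing $mn$-ebits define $f^{m{{{{{}}}}}k}_{m{{{{{}}}}}n}$. The eflow constraints are then \emph{forced}: \eqref{eq:eflow:constr_a} holds because each successful swap at $k$ toward $mn$ consumes exactly one $mk$-ebit and one $kn$-ebit, so these consumption rates coincide; \eqref{eq:eflow:constr_b} holds because finite memory forbids unbounded accumulation, so over a long horizon the expected production $I(mn)$ (elementary generation capped at $q_{mn}c_{mn}$ by \eqref{eq:I}, plus swap output discounted by $q_k$) equals the expected consumption $\Omega(mn)$ of \eqref{eq:O} at every intermediate enode; and \eqref{eq:eflow:constr_c} holds because delivered $st$-ebits are not re-consumed. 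Hence any strategy induces a feasible eflow whose value equals its achieved EDR, so the EDR is at most $\eta^*_{st}$.

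For the DAG claim, I would take, among all optimal solutions, one minimizing the aggregate rate $\sum f + \sum g$, and argue that a directed cycle in its induced graph would permit a feasibility-preserving reduction lowering this aggregate without decreasing $\eta_{st}=I(st)$, contradicting minimality. The care here is that \eqref{eq:eflow:constr_a} couples the paired swap edges $(mk,mn)$ and $(kn,mn)$, so the reduction must be propagated along these coupled structures and through the conservation equations rather than applied to a single edge. For achievability, I would decompose the acyclic optimal eflow into elementary generations (the $\bot$-edges) and swaps (matched edge pairs), process the enodes in topological order, and invoke the post-selection-and-storage protocol: each repeater buffers successfully post-selected ebits and rematches buffered ebits for swapping, so that the expected realized rate at each enode equals $I(mn)$ and the delivered expected rate at $st$ equals $\eta^*_{st}$.

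The main obstacle is the tightness/achievability half. The upper bound is essentially careful bookkeeping, and the DAG property is a fairly standard acyclicity argument once the coupling of matched edges is handled. Exhibiting a protocol whose \emph{expectation} equals the LP optimum is the substantive step: one must show that finite-memory post-selection and rematching can \emph{simultaneously} realize all the coupled swap rates dictated by the eflow without losing rate in expectation, i.e., that buffering unmatched post-selected ebits and later pairing them for swapping recovers exactly the rates $q_k f^{m{{{{{}}}}}k}_{m{{{{{}}}}}n}$. This is where I would rely most directly on the constructions of \cite{shi2020concurrent,Dai2020b}.
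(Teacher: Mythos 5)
There is no internal proof to compare against here: the paper states this theorem explicitly as a restatement of Theorems~1--2 of Dai~\kETAL~\cite{Dai2020b}, terminates it with an end-box rather than a proof environment, and follows it only with the remark that a stochastic post-selection-and-queueing protocol achieving $\eta^*_{st}$ was proposed in that reference. Your proposal is therefore a reconstruction of the \emph{external} argument, and its architecture---(i) any strategy induces a feasible eflow, hence the LP value is an upper bound; (ii) cycle cancellation among optimizers yields a DAG; (iii) decomposition of an acyclic optimum plus a buffered post-selection protocol yields achievability, and tightness follows---is the standard organization of that result. Your decision to lean on the constructions of \cite{Dai2020b} for the substantive achievability step is exactly what the paper itself does, so at the level of approach your proposal is consistent with the paper's treatment.

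Two points in your sketch would need repair if you carried it out in full. First, your upper-bound bookkeeping claims exact conservation $I(mn)=\Omega(mn)$ for an \emph{arbitrary} strategy because ``finite memory forbids unbounded accumulation.'' This is not forced: a protocol may discard ebits (decoherence, buffer overflow, deliberate dropping), in which case production strictly exceeds consumption, $I(mn)>\Omega(mn)$, and the induced assignment violates \eqref{eq:eflow:constr_b}. You need an additional trimming lemma: any assignment satisfying $I(mn)\ge\Omega(mn)$ at intermediate enodes can be reduced to a feasible eflow with the same end-to-end value $I(st)$, so the bound still holds. Second, your framing of achievability is internally inconsistent with your framing of the upper bound: you restrict the upper bound to strategies with ``bounded/ephemeral'' memory (it should hold for arbitrary memory), while for achievability you claim \emph{finite}-memory rematching recovers the exact rates $q_k f^{m{{{{{}}}}}k}_{m{{{{{}}}}}n}$. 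The protocol of \cite{Dai2020b}, and the paper's Appendix extension of it, attain the full expected EDR only with sufficiently large (idealized) quantum memories---a caveat the paper's Appendix states explicitly---so the finite-memory claim overstates what can be proved; with bounded buffers one gets rates approaching, not equaling, $\eta^*_{st}$.
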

A stochastic protocol achieving $\eta^*_{st}$ based on post-selection and queueing was proposed in~\cite{Dai2020b}, which we shall extend in the Appendix to achieve both high EDR and fidelity.

The primary limitation with the eflow formulation is that it \textbf{cannot model ebit fidelity loss} in generation and swapping.
Since ebits may arrive at an enode from any possible sequence of swaps at arbitrary repeaters, there may be an exponential number of possible paths in $G$ from which an ebit might have been generated, and some may result in low fidelity that can render the distributed ebits unusable.
In the next subsection, we propose a novel abstraction, called \emph{primitive eflow}, to characterize the end-to-end fidelity of the distributed ebits.

\subsection{Eflow Decomposition \& Characterizing End-to-end Fidelity}
\noindent
The limitation with eflow is that there is no way of tracking along which path is an arbitrary ebit generated, as many paths may have been utilized to generate end-to-end ebits defined by an eflow, possibly exponentially many.
Thus to characterize the end-to-end fidelity, we need a new abstraction that can naturally encode the fidelity of entanglement paths while still leading to the same tight upper bound on achievable EDR.
The path formulation itself cannot fulfill the second part of the goal (characterizing EDR) since EDR-optimal entanglement routing (with post-selection) is sill an open problem.

In the following, we find that a special type of eflow serves to combine the two goals (EDR and fidelity characterization).
This abstraction, named \textbf{primitive eflow (pflow)}, enables an alternative formulation that is equivalent to Program~\eqref{fml:ored}, similar to the path-flow formulation in classical network flow as an alternative to the edge-flow formulation~\cite{Ahuja1993}.
We will establish this equivalence with a novel \emph{eflow decomposition theorem}.

\begin{definition}[Pflow]
\label{def:pflow}
A \emph{primitive eflow (pflow)} is a feasible eflow as defined by Program~\eqref{fml:ored}, which additionally satisfies that: for every enode $m{{{{{}}}}}n$, either $g_{m{{{{{}}}}}n} \!>\! 0$, or there exists exactly one $k \!\in\! N$ such that $f^{m{{{{{}}}}}k}_{m{{{{{}}}}}n} \!=\! f^{k{{{{{}}}}}n}_{m{{{{{}}}}}n} \!>\! 0$, but not both.
\myendbox
\end{definition}

A pflow is \emph{primitive} in that ebits at each enode $m{{{{{}}}}}n$ is generated in exactly one way: 
either they are elementary ebits generated directly along link $mn \in L$, or they are generated by swapping $m{{{{{}}}}}k$- and $k{{{{{}}}}}n$-ebits at a single intermediary $k$.
The induced graph $\mathcal{G}$ of a pflow, excluding the special $\bot$ vertex, is always a binary tree rooted at enode $st$ by the definition; Fig.~\ref{fig:induced-graph} shows two such binary trees with different colors.
A pflow naturally represents exactly one \emph{path} in the quantum internet, and the final ${{{{{st}}}}}$-ebits generated along a pflow have \textbf{identical fidelity}, which can be directly computed via Eq.~\eqref{eq:fed3}.

Another property of a pflow is that the ratio between each variable in $\{ g_{m{{{{{}}}}}n}, f^{m{{{{{}}}}}k}_{m{{{{{}}}}}n} \,|\, m, k, n \in N \}$, and the end-to-end EDR $\eta_{st}$, is fixed.
Let $\bar g_{m{{{{{}}}}}n}$ or $\bar f^{m{{{{{}}}}}k}_{m{{{{{}}}}}n}$ be the ratio between the corresponding variable and the EDR of the pflow.
Given the induced graph $\mathcal{G}$ of the pflow, these ratios can be computed as in Algorithm~\ref{a:pcost},
backtracking from enode $s{{{{{}}}}}t$ which has a ratio of $1$ (one generated ebit between $s{{{{{}}}}}t$ translates into one end-to-end ${{{{{st}}}}}$-ebit).
For each enode $m{{{{{}}}}}n$, its output ebit rate $\Omega(mn)$ is added to its input ebit rate(s), \kIE, either $\bar g_{m{{{{{}}}}}n}$ or $\bar f^{m{{{{{}}}}}k}_{m{{{{{}}}}}n}$ and $\bar f^{k{{{{{}}}}}n}_{m{{{{{}}}}}n}$ for some $k$, augmented by the corresponding expected ratios of $1/q_{mn}$ or $1/q_k$ respectively.
Based on Algorithm~\ref{a:pcost}, a pflow can essentially be defined by its induced graph $\mathcal{G}$, and a single objective value $\eta_{\mathcal{G}}$ assigned to this pflow.

\begin{algorithm}[t]
\smallfont
\caption{\mbox{Computing ebit generation ratios of a pflow}}
\label{a:pcost}
\KwIn{Induced graph $\mathcal{G}$ of an ${{{{{st}}}}}$-pflow}
\KwOut{Ebit generation ratios $\{ \bar g_{m{{{{{}}}}}n}, \bar f^{m{{{{{}}}}}k}_{m{{{{{}}}}}n}, \bar f^{k{{{{{}}}}}n}_{m{{{{{}}}}}n}\}$}
Initialize all ratios to $0$, and $Q \leftarrow \{ ( s{{{{{}}}}}t, 1) \}$\;
\While{$Q \ne \emptyset$}{
    $(m{{{{{}}}}}n, \psi) \leftarrow Q.pop()$\;
    \eIf{$\nexists k$ such that $(m{{{{{}}}}}k, m{{{{{}}}}}n) \in \mathcal{E}$}{
        $\bar g_{m{{{{{}}}}}n} \leftarrow g_{m{{{{{}}}}}n} + \psi / (q_{mn} \cdot c_{mn})$\;
    }{
        $\bar f^{m{{{{{}}}}}k}_{m{{{{{}}}}}n} \leftarrow \bar f^{m{{{{{}}}}}k}_{m{{{{{}}}}}n} + \psi / q_k$, $\bar f^{k{{{{{}}}}}n}_{m{{{{{}}}}}n} \leftarrow \bar f^{k{{{{{}}}}}n}_{m{{{{{}}}}}n} + \psi / q_k$\;
        $Q.push( ( m{{{{{}}}}}k, \psi / q_{k} ) )$, $Q.push( ( k{{{{{}}}}}n, \psi / q_{k} ) )$\;
    }
}
\Return{$\{ \bar g_{m{{{{{}}}}}n}, \bar f^{m{{{{{}}}}}k}_{m{{{{{}}}}}n}, \bar f^{k{{{{{}}}}}n}_{m{{{{{}}}}}n}\}$.}
\end{algorithm}

Crucially, the pflow abstraction leads to the following theorem, which generalizes the classical \emph{flow decomposition theorem}~\cite{Ahuja1993} to the quantum network setting:

\begin{theorem}[Eflow decomposition]
\label{th:decomp}
An eflow with $\eta_{st} > 0$ can be decomposed into a polynomial number of pflows.
\myendbox
\end{theorem}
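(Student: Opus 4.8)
The plan is to mirror the classical flow decomposition theorem~\cite{Ahuja1993}: repeatedly extract a single pflow from the eflow, subtract a maximal multiple of it, and argue that each such step drives at least one eflow variable to zero while creating no new nonzero ones. Since an eflow has only $O(|N|^3)$ variables $\{g_{mn}, f^{mk}_{mn}\}$, this immediately yields the polynomial bound.

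The first step is to extract one pflow. Starting from the root $st$, I would trace the eflow backwards, building a binary derivation tree: at each enode $mn$ with $I(mn)>0$ I select a \emph{single} positive production --- either the generation term $g_{mn}>0$ (making $mn$ a leaf attached to $\bot$), or one intermediary $k$ with $f^{mk}_{mn}=f^{kn}_{mn}>0$, in which case $mn$ acquires the two children $mk$ and $kn$ and the trace recurses on them. This is exactly the structure demanded by Definition~\ref{def:pflow}, and Algorithm~\ref{a:pcost} then assigns it a consistent value. Subtracting the resulting pflow scaled by the largest $\delta$ that keeps every variable nonnegative preserves feasibility, because constraints~\eqref{eq:eflow:constr_a}--\eqref{eq:O} are all homogeneous linear identities, so the difference of two feasible eflows again satisfies them; nonnegativity and the bounds $g\le 1$ survive because subtraction only decreases variables. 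By the choice of $\delta$, at least one variable in the pflow's support becomes zero while variables outside its support are untouched, so the number of nonzero variables strictly decreases.

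The main obstacle is guaranteeing that the backward trace terminates, \kIE, that it yields a \emph{finite} tree rather than looping through a cyclic chain of swaps. I would resolve this with a groundability argument. Call an enode \emph{groundable} if it can be produced by a finite derivation tree using only positive-flow variables, and let $S$ be the set of such enodes. The key lemma is that $\eta_{st}>0$ forces $st\in S$. To prove it, assume $st\notin S$ and sum flow conservation over the complement $\bar S$. Every enode in $\bar S$ has $g=0$, and every positive swap producing an enode in $\bar S$ must have at least one input in $\bar S$ (else its output would be groundable). Charging each swap to its consumed inputs and using $q_k\le 1$ gives $\sum_{mn\in\bar S} I(mn)\le\sum_{mn\in\bar S}\Omega(mn)$, whereas~\eqref{eq:eflow:constr_b}--\eqref{eq:eflow:constr_c} give $\sum_{mn\in\bar S} I(mn)=\sum_{mn\in\bar S}\Omega(mn)+\eta_{st}$ when $st\in\bar S$; together these force $\eta_{st}\le 0$, a contradiction. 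Hence $st$ is groundable, and assigning each groundable enode a finite depth $d(mn)=1+\min_k\max\{d(mk),d(kn)\}$ (with $d=0$ for generation) lets me always pick a production whose children have strictly smaller depth, so the extracted tree is finite and acyclic.

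Finally I would assemble the pieces: while $\eta_{st}>0$ the lemma guarantees an extractable pflow, each extraction removes at least one nonzero variable, and the process halts after at most $O(|N|^3)$ steps, yielding the claimed polynomial number of pflows whose values sum to $\eta_{st}$. The only loose end is internal circulation, \kIE, positive flow that conserves among intermediate enodes without reaching $st$; as in the classical theorem such circulation does not affect $\eta_{st}$ and can be cancelled beforehand, and in particular it is absent whenever the induced graph is a DAG as in Theorem~\ref{th:12}.
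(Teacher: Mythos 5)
Your proof is correct and follows essentially the same route as the paper's: iteratively extract a pflow (a per-enode choice of a single positive production forming a binary tree rooted at $st$), subtract the largest feasible multiple computed from the generation ratios, and observe that each extraction zeroes at least one of the $O(|N|^3)$ variables, which bounds the number of pflows. The only substantive difference is your groundability lemma, which rigorously establishes that the backward trace terminates in a finite tree --- a point the paper's proof merely asserts follows from the conservation constraints and $\eta_{st} > 0$ --- so your argument is, if anything, a tightened version of the same proof.
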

\begin{proof}
Let $\mathcal{G}$ be the induced graph of the eflow.
We first find an induced graph $\mathcal{G}' \subseteq \mathcal{G}$ in which each enode $m{{{{{}}}}}n \in \mathcal{G}'$ has either $g_{m{{{{{}}}}}n} > 0$, or there is exactly one $k \in N$ such that $f^{m{{{{{}}}}}k}_{m{{{{{}}}}}n} = f^{k{{{{{}}}}}n}_{m{{{{{}}}}}n} > 0$, and $(m{{{{{}}}}}k, m{{{{{}}}}}n)$ and $(k{{{{{}}}}}n, m{{{{{}}}}}n)$ are both in $\mathcal{G}'$.
Such a subgraph must exist due to the constraint of $I(mn) - \Omega(mn) = 0$ for every $m{{{{{}}}}}n \ne s{{{{{}}}}}t$, and that $\eta_{st} > 0$ for the eflow.
We then use Algorithm~\ref{a:pcost} to compute ebit generation ratios of the pflow corresponding to $\mathcal{G}'$.
Let $\eta^*$ be the maximally acceptable EDR of this pflow. We calculate it as $\eta^* \triangleq \min ( \{ f^{m{{{{{}}}}}k}_{m{{{{{}}}}}n} / \bar f^{m{{{{{}}}}}k}_{m{{{{{}}}}}n} \,|\, m, n, k \in N, \bar f^{m{{{{{}}}}}k}_{m{{{{{}}}}}n} > 0 \} \cup \{ g_{m{{{{{}}}}}n} / \bar g_{m{{{{{}}}}}n} \,|\, m, n \in N, \bar g_{m{{{{{}}}}}n} > 0 \} )$.
Assigning $\eta^*$ to this pflow, we can update the original eflow by deducting each variable by $\eta^*$ times the corresponding ebit generation ratio in the pflow.
Continue this process until $\eta_{st} = 0$, and we arrive at a set of pflows with sum of EDRs equal to $\eta_{st}$.

In the above process, either at least one $g_{m{{{{{}}}}}n}$, or at least one pair of $\{ f^{m{{{{{}}}}}k}_{m{{{{{}}}}}n}, f^{k{{{{{}}}}}n}_{m{{{{{}}}}}n} \}$ variables with some $k$, becomes $0$ after updating each pflow.
Since there are in total $O(N^3)$ variables, this decomposition results in at most $O(N^3)$ pflows. %, which is a polynomial number.
\end{proof}%
\vspace{-1em}
Theorem~\ref{th:decomp} enables an alternative pflow-based formulation to Program~\eqref{fml:ored} in Definition~\ref{def:eflow}.
Let $\Psi_{st}$ be the set of all possible pflows between $s$ and $t$, and let $\eta(\psi) \ge 0$ be the pflow value assigned to $\psi \in \Psi_{st}$.
Lemma~\ref{l:alt_formulation} follows from Theorem~\ref{th:decomp}:
\begin{lemma}[Pflow-based EDR Characterization]
    \label{l:alt_formulation}
    The maximum expected EDR $\eta^*_{st}$ in Eq.~\eqref{fml:ored_new} can be computed by Program~\eqref{eq:alt_form}:
    \begin{equation}
        \label{eq:alt_form}
        \begin{aligned}
            \eta^*_{st} = \max\nolimits_{\eta} & \;\;
            \sum\nolimits_{\psi \in \Psi_{st}} \eta(\psi)  
            \\
            \text{ \emph{s.t.} } & \;\;
            \sum_{\mathclap{\psi \in \Psi_{st}: mn \in \psi}} {\bar g}_{mn} \cdot \eta(\psi) \le 1,
            \quad \forall mn \in L.
        \end{aligned}
    \end{equation}
\end{lemma}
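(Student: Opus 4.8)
The plan is to prove that $\eta^*_{st}$ equals the optimum of Program~\eqref{eq:alt_form} by establishing two inequalities, exactly mirroring the classical equivalence between the edge-flow and path-flow formulations of maximum flow. Let $\eta'_{st}$ denote the optimal value of Program~\eqref{eq:alt_form}. I would show $\eta'_{st} \ge \eta^*_{st}$ and $\eta^*_{st} \ge \eta'_{st}$, and conclude equality.

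For $\eta'_{st} \ge \eta^*_{st}$, I would take an optimal eflow achieving $\eta^*_{st}$ and apply Theorem~\ref{th:decomp} to decompose it into pflows $\psi_1, \dots, \psi_r$ with $r = O(N^3)$ and nonnegative values $\eta_1, \dots, \eta_r$ satisfying $\sum_i \eta_i = \eta^*_{st}$. Setting $\eta(\psi_i) = \eta_i$, and $\eta(\psi) = 0$ for every other $\psi \in \Psi_{st}$, yields a candidate for Program~\eqref{eq:alt_form} with objective $\eta^*_{st}$. Feasibility then reduces to checking the link-capacity constraints: because the decomposition of Theorem~\ref{th:decomp} repeatedly deducts each pflow's scaled generation $\eta_i \bar g_{mn}$ from the original variable $g_{mn}$ while keeping all variables nonnegative, the total generation extracted on any link, $\sum_{\psi : mn \in \psi} \bar g_{mn}\, \eta(\psi)$, is at most the original normalized rate $g_{mn} \le 1$. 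Hence the candidate is feasible and $\eta'_{st} \ge \eta^*_{st}$.

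For the reverse inequality $\eta^*_{st} \ge \eta'_{st}$, I would take an optimal solution $\{\eta(\psi)\}_{\psi \in \Psi_{st}}$ of Program~\eqref{eq:alt_form} and aggregate the pflows into a single eflow by summing their scaled variables, e.g. $g_{mn} \triangleq \sum_{\psi : mn \in \psi} \bar g_{mn}\, \eta(\psi)$ and $f^{mk}_{mn} \triangleq \sum_{\psi : (mk,mn) \in \psi} \bar f^{mk}_{mn}\, \eta(\psi)$, and symmetrically for $f^{kn}_{mn}$. The key observation is that the constraints~\eqref{eq:eflow:constr_a}--\eqref{eq:eflow:constr_c} are linear and homogeneous in $(f,g)$, so the set they cut out (intersected with $f,g \ge 0$) is a convex cone; by Definition~\ref{def:pflow} each pflow is a feasible eflow, so the ratios $\bar g, \bar f$ from Algorithm~\ref{a:pcost} satisfy these constraints at any scale, and a sum of cone elements stays in the cone. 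Thus the aggregated $(f,g)$ satisfies all conservation and balance constraints automatically. The only constraint that is an inequality, and hence not preserved for free under summation, is the capacity bound $g_{mn} \le 1$, which is exactly what Program~\eqref{eq:alt_form} imposes. Finally, since each pflow carries ratio $1$ at its root enode $st$ (Algorithm~\ref{a:pcost}), the aggregated eflow value is $I(st) = \sum_{\psi} \eta(\psi) = \eta'_{st}$, so the aggregate is a feasible eflow of value $\eta'_{st}$, giving $\eta^*_{st} \ge \eta'_{st}$.

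Combining the two inequalities yields $\eta^*_{st} = \eta'_{st}$. The step most worth spelling out --- and the main conceptual obstacle --- is the asymmetry exploited in the second direction: the conservation and balance constraints survive aggregation at no cost precisely because they are homogeneous equalities satisfied by every pflow, which is why Program~\eqref{eq:alt_form} needs to retain only the single capacity inequality per link. A minor point requiring care is to confirm that the pflows produced by Theorem~\ref{th:decomp} lie in $\Psi_{st}$ and that the scaled ratios $\bar g, \bar f$ reconstitute (or lower-bound) the original generation rates; both follow directly from the definitions and from how Algorithm~\ref{a:pcost} propagates the ratios from $st$.
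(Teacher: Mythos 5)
Your proof is correct and takes essentially the same approach the paper intends: the paper gives no explicit proof, stating only that the lemma ``follows from Theorem~\ref{th:decomp},'' and your decomposition direction is exactly that argument, with the capacity-feasibility check ($\sum_i \eta_i \bar g_{mn} \le g_{mn} \le 1$ because the deductions keep all variables nonnegative) done correctly. Your reverse (aggregation) direction --- summing scaled pflows and observing that the homogeneous equality constraints \eqref{eq:eflow:constr_a}--\eqref{eq:eflow:constr_c} are preserved under conic combination, leaving only the per-link capacity inequality to be enforced by Program~\eqref{eq:alt_form} --- is the step the paper leaves implicit, and spelling it out makes the equivalence complete.
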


Program~\eqref{eq:alt_form} computes $\eta^*_{st}$ by assigning values to pflows in $\Psi_{st}$, while making sure that no link $mn \in L$ is oversubscribed with a ratio greater than $1$, i.e., being asked to generate more than $c_{mn}$ ebits per unit time.
From this formulation, the key observation is that, each actual ebit is still generated along a single entanglement path.
The fidelity of the ebit is precisely defined by the underlying path along which it is generated based on Eq.~\eqref{eq:fed3}.
Assume an eflow is able to generate ebits all with fidelity no less than a given bound $\Upsilon_{st}$.
Following Lemma~\ref{l:alt_formulation}, the eflow can always be decomposed into a set of pflows, where each pflow generates ebits along a fixed path with fidelity lower bounded by $\Upsilon_{st}$ (some of the pflows may share the same path).
This leads to Theorem~\ref{th:fid}.

\begin{theorem}[Characterizing worst-case fidelity]
\label{th:fid}
An eflow that generates ebits with minimum end-to-end fidelity of $\Upsilon_{st}$ can be decomposed into a set of pflows, each along an ${{{{{st}}}}}$-path whose fidelity is at least $\Upsilon_{st}$.
\myendbox
\end{theorem}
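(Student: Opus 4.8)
The plan is to reduce Theorem~\ref{th:fid} to the already-established eflow decomposition (Theorem~\ref{th:decomp}) together with the defining property of a pflow, namely that each pflow realizes a single $st$-path whose ebits share one common fidelity computable from Eq.~\eqref{eq:fed3}. The only genuinely new content is to translate the hypothesis ``the eflow generates ebits with minimum end-to-end fidelity $\Upsilon_{st}$'' into a statement about the individual paths produced by the decomposition, and this translation is where the argument must be made careful.

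First I would dispose of the trivial case $\eta_{st}=0$, in which the empty set of pflows satisfies the claim vacuously, and otherwise apply Theorem~\ref{th:decomp} to write the eflow as a finite collection of pflows $\psi_1,\dots,\psi_r$ with positive values $\eta(\psi_i)>0$ summing to $\eta_{st}$. By Definition~\ref{def:pflow} and the discussion following it, the induced graph of each $\psi_i$ (excluding $\bot$) is a binary tree rooted at $st$, which picks out exactly one $st$-path $\rho_i$, and all $st$-ebits carried by $\psi_i$ have the identical fidelity $F_{\rho_i}$ obtained by plugging the links and intermediate repeaters of $\rho_i$ into Eq.~\eqref{eq:fed3}.

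The key step is to argue $F_{\rho_i}\ge \Upsilon_{st}$ for every $i$. Here I would make precise that the decomposition only ever uses the positive $g$ and $f$ variables of the original eflow: each ratio computed by Algorithm~\ref{a:pcost} is supported on edges already present in the eflow's induced graph, so every $\rho_i$ is a path that the original eflow \emph{actually realizes} with positive rate $\eta(\psi_i)$. Consequently the ebits the eflow produces along $\rho_i$ have fidelity exactly $F_{\rho_i}$. If some $F_{\rho_i}<\Upsilon_{st}$, then the eflow would be generating a positive rate of end-to-end ebits whose fidelity falls below $\Upsilon_{st}$, contradicting the hypothesis that its minimum end-to-end fidelity is $\Upsilon_{st}$. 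Hence $F_{\rho_i}\ge\Upsilon_{st}$ for all $i$, and the collection $\{\psi_1,\dots,\psi_r\}$ is the desired decomposition.

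I expect the main obstacle to be definitional rather than computational: pinning down the meaning of ``the eflow generates ebits with minimum fidelity $\Upsilon_{st}$'' so that it is equivalent to ``every $st$-path carrying positive flow has fidelity at least $\Upsilon_{st}$.'' The subtlety is that eflow decompositions are not unique---different extractions via Theorem~\ref{th:decomp} may yield different path sets---so I must ensure the hypothesis is read as a property of the eflow's \emph{worst-case realizable path} (equivalently, of all of its positive-flow paths), which is precisely what guarantees that any decomposition, and in particular the one produced above, consists solely of paths meeting the bound $\Upsilon_{st}$.
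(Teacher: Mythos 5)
Your proposal is correct and follows essentially the same route as the paper: the paper's own argument (given in the paragraph preceding Theorem~\ref{th:fid}) likewise invokes the decomposition of Theorem~\ref{th:decomp}/Lemma~\ref{l:alt_formulation}, notes that each pflow generates ebits along a single fixed path with identical fidelity given by Eq.~\eqref{eq:fed3}, and concludes that every such path must meet the bound $\Upsilon_{st}$ since the eflow's ebits all do. Your additional care---observing that the decomposition is supported only on the eflow's positive variables, so each extracted path carries positive rate and the contradiction argument goes through---merely makes explicit what the paper leaves implicit.
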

\textbf{\emph{Remark:}}
The importance of Theorem~\ref{th:decomp} is not to characterize the maximum end-to-end fidelity between $st$ for generating a single ebit.
Such maximum fidelity can be easily computed by employing a Dijkstra's algorithm and finding a highest-fidelity path following Eq.~\eqref{eq:fed3}.
Instead, the goal is to characterize the worst-case end-to-end fidelity for achieving an end-to-end EDR goal, or vice versa, utilizing as many paths/pflows as possible. %, by utilizing as many (high-fidelity) paths/pflows as possible to achieve the goal.
Next, we motivate and then formally define the problem of characterizing the EDR-fidelity trade-off in quantum network.
%
%
%

%===========================================================%
%	Back to parent: main.tex										%
%===========================================================%

%===========================================================%
%	Parent: main.tex											%
%	Child: 3.1.tex												%
%===========================================================%

\subsection{Trade-off Between EDR and Worst-case Fidelity}
\noindent
Consider a quantum application having two performance requirements for remote entanglement distribution: 1) the long-term average EDR is at least $\Delta_{st}$; 2) each generated ebit has fidelity no less than $\Upsilon_{st}$.
Having a higher EDR goal $\Delta_{st}$ means the network may need to utilize more paths for distribution, some maybe leading to lower end-to-end fidelity than others, which overall may lead to a lower $\Upsilon_{st}$ that can be satisfied.

\begin{figure}[t]
\centering
\subfloat[A network with three quantum links and different fidelity. All links have unit capacity and $1.0$ success probability.]{\includegraphics[width=0.185\textwidth,valign=t]{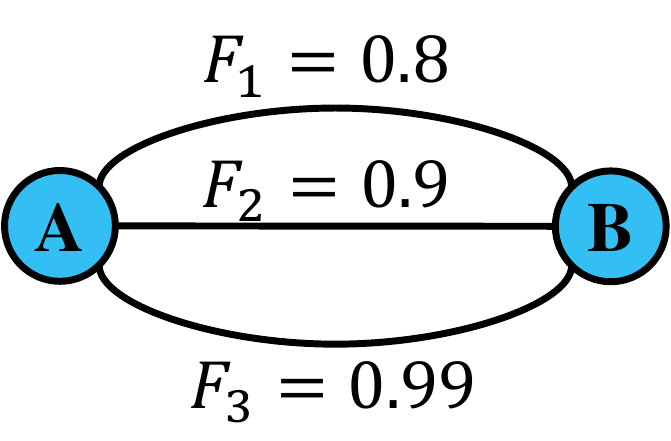}
\label{fig:pareto:example}}
\hfil
\subfloat[EDR-fidelity trade-off curve for SD pair $AB$, and a solution in the frontier.]{\includegraphics[width=0.25\textwidth,valign=t]{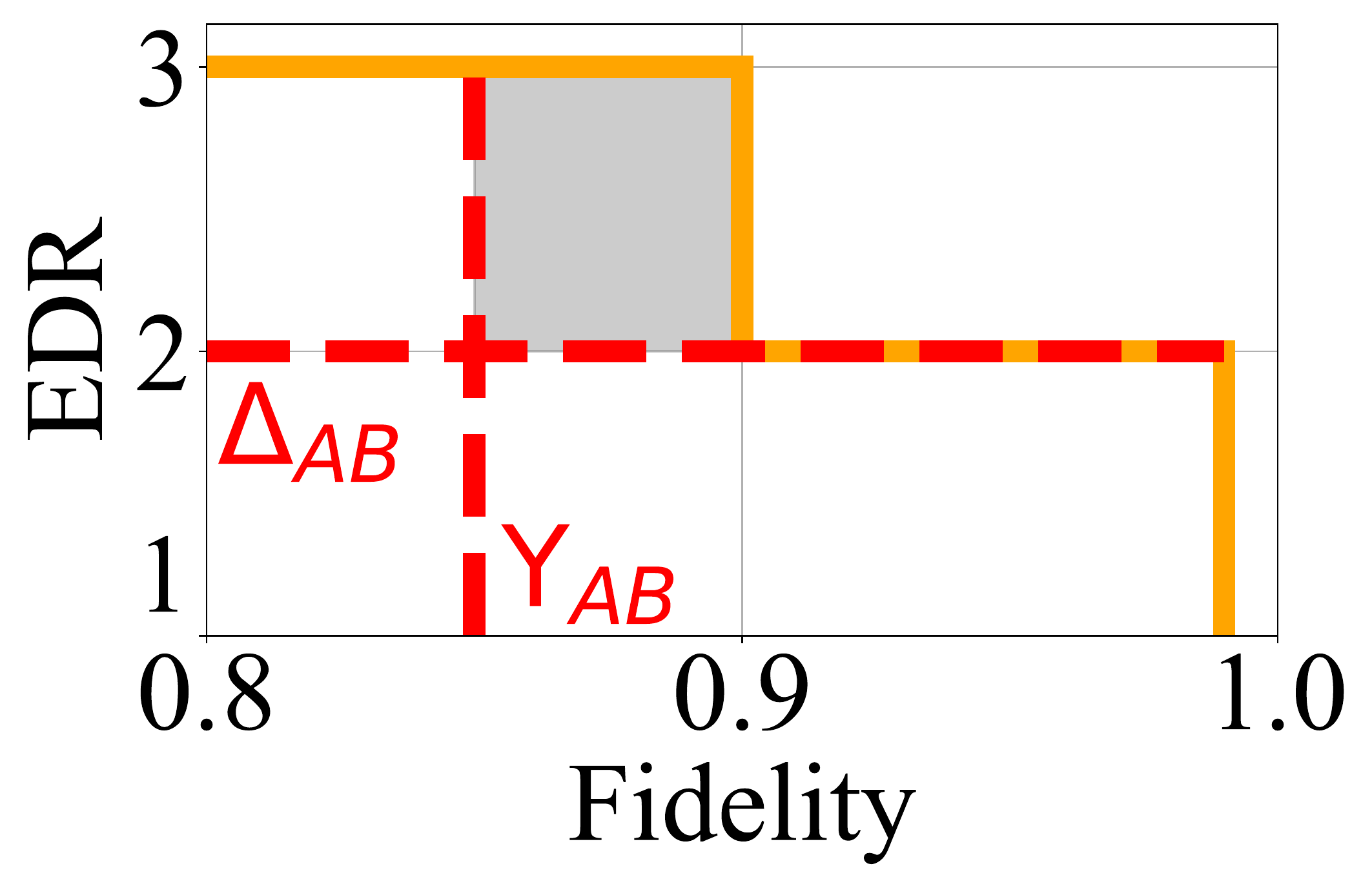}
\label{fig:pareto:hf_red}}
\hfil
\caption{The EDR-fidelity Pareto frontier of a simple network. Shaded area denotes gap from a solution to the actual frontier.}
\label{fig:pareto}
\end{figure}

Fig.~\ref{fig:pareto} shows a simple example to motivate the EDR-fidelity trade-off.
Consider an SD pair $A$ and $B$ in Fig.~\ref{fig:pareto}\subref{fig:pareto:example}, which are connected by three different quantum links, all with capacity $1$ but different fidelity.
When the end-to-end fidelity requirement increases, the achievable EDR will decrease as the number of feasible paths/pflows becomes less, and vice versa, as shown in Fig.~\ref{fig:pareto}\subref{fig:pareto:hf_red}.
The trade-off can become more complicated when swapping probability and fidelity loss are taken into account. %, which we shall explore.
We start to explore this trade-off from the above motivating example, where the objective is to simultaneously satisfy the expected EDR and fidelity goals of an application, as shown in Fig.~\ref{fig:pareto}\subref{fig:pareto:hf_red}.
To this end, we define the high-fidelity remote entanglement distribution (HF-RED) problem.
\begin{definition}[HF-RED]
\noindent
Given a quantum network $G = (N, L)$ and an SD pair $st$,
let $\Delta_{st} \!> \!0$ be the expected EDR bound %
and $\Upsilon_{st}\! >\! 0$ be the end-to-end fidelity bound. The \textbf{high-fidelity remote entanglement distribution} problem (denoted as \textbf{HF-RED}) is to seek {a set of pflow} $\mathcal{P}_{st}^* \subseteq \Psi_{st}$, which delivers end-to-end $st$-ebits 
satisfying that \\
1) total expected EDR $\eta_{st}$ of all pflows is at least $\Delta_{st}$, and \\
2) each pflow has fidelity no less than $\Upsilon_{st}$. % based on Eq.~\eqref{eq:fed3}.
\myendbox
\end{definition}
Without loss of generality, we further define an optimization version of HF-RED, which we call the \textbf{OF-RED} problem, for maximizing the worst-case end-to-end fidelity subject to the expected EDR bound.
We note that OF-RED is an important problem for characterizing the EDR-fidelity trade-off.
Particularly, one can apply the the well-known $\epsilon$-constraint method in multi-objective optimization~\cite{Miettinen1998} to find \emph{weak Pareto optimal solutions}---solutions that cannot be improved on one of the metrics without hurting the other---by repetitively solving OF-RED with different bounds on the expected EDR.
In Sec.~\ref{pe}, we will utilize this method to characterize the EDR-fidelity trade-off curve of a given quantum network and SD pair.
This approach depends on solving OF-RED efficiently, which, nevertheless, is highly non-trivial as we will show next.

\subsection{Computational Complexity}
\label{sec:complexity}

\noindent
Let $\mathcal{P}_{st} \subseteq \Psi_{st}$ be the set of $st$-pflows that are along paths with fidelity no lower than $\Upsilon_{st}$.
HF-RED can be easily formulated based on Program~\eqref{eq:alt_form}, by replacing $\Psi_{st}$ with $\mathcal{P}_{st}$ in the formulation---this constrains the program to only use pflows satisfying the end-to-end fidelity constraint $\Upsilon_{st}$ when trying to achieve the EDR goal $\Delta_{st}$.
Notably, both Program~\eqref{eq:alt_form} and this fidelity-aware version are linear programs (LPs), but with exponential sizes due to the potentially exponential number of possible pflows in $\Psi_{st}$ (or $\mathcal{P}_{st}$).
In fact, the following lemma demonstrates the computational complexity of this problem:

\begin{lemma}
\label{l:np}
HF-RED and OF-RED are NP-hard.
\myendbox
\end{lemma}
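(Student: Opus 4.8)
The plan is to prove NP-hardness by a polynomial-time reduction from a weakly NP-hard number problem to the decision version of HF-RED, and then transfer the hardness to OF-RED via the standard optimization-to-decision relationship. I would first observe that OF-RED is at least as hard as HF-RED: for a fixed EDR bound $\Delta_{st}$, the optimal worst-case fidelity returned by OF-RED is $\ge \Upsilon_{st}$ if and only if the HF-RED instance $(\Delta_{st}, \Upsilon_{st})$ is feasible, so any polynomial-time algorithm for OF-RED would decide HF-RED. Hence it suffices to show HF-RED is NP-hard. Because \S\ref{sec:fptas} will give an FPTAS, the hardness must be weak (a strongly NP-hard problem would preclude an FPTAS unless P\,$=$\,NP), which guides the choice of source problem.

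For the reduction, the key observation is that the end-to-end fidelity along a pflow, via Eq.~\eqref{eq:fed3}, is a \emph{product} $\prod_l W_l \prod_n W_n$, so the per-pflow constraint $F^{\sf E2E} \ge \Upsilon_{st}$ is equivalent to a lower bound on a product of link/node parameters. This multiplicative structure matches SUBSET PRODUCT (decide whether a subset of given positive integers $a_1,\dots,a_n$ has product exactly $T$), which is weakly NP-complete and therefore FPTAS-compatible. Crucially, encoding an \emph{additive} problem such as PARTITION into a product would require parameters like $\gamma^{a_i}$ whose bit-length is exponential in the input, whereas a multiplicative problem lets me use the $a_i$ directly inside the $W$ values; the fidelity threshold then becomes an $n$-fold product of polynomial bit-length. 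I would build a layered ``selection'' network in which $s$ and $t$ are joined by $n$ stages, each offering an ``include $i$'' branch or an ``exclude $i$'' branch, so that every $s$--$t$ pflow corresponds to a subset $S$. The branch parameters are set so the pflow's fidelity equals a fixed scaling times $\prod_{i\in S} a_i$, and $\Upsilon_{st}$ is chosen so that a pflow is fidelity-feasible iff $\prod_{i\in S} a_i \ge T$. Symmetrically, I would place the swapping success probabilities so the pflow's maximum achievable EDR (the bottleneck of Program~\eqref{eq:alt_form}) equals a scaling times $1/\prod_{i\in S} a_i$, and choose $\Delta_{st}$ so the target is attainable iff $\prod_{i\in S} a_i \le T$. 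A single pflow satisfying both bounds then exists iff some subset has product exactly $T$.

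The hard part will be the converse direction of the equivalence, i.e. ruling out that a \emph{fractional} combination of several distinct pflows circumvents the encoding. Each fidelity-feasible pflow has product $\ge T$ and hence individual EDR $\le \Delta_{st}$, but nothing a priori prevents two ``too-large-product'' pflows from each contributing a fraction that sums to $\Delta_{st}$ while both still clear the fidelity floor. Closing this gap is where the construction must do real work: I would add a shared unit-capacity ``trunk'' link that every pflow must traverse to bound the total EDR, and argue from the link constraints of Program~\eqref{eq:alt_form} together with a cut argument over the layered gadget that the target $\Delta_{st}$, tuned to exactly the efficiency of a product-$T$ pflow, can be met by a mixture of fidelity-feasible pflows only when at least one of them realizes product exactly $T$. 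Establishing this ``no-cheating'' lemma---and confirming all $W$, $q$, $c$, $\Upsilon_{st}$, $\Delta_{st}$ values are polynomial-size rationals---completes the reduction; NP-hardness of OF-RED then follows immediately from the decision-to-optimization argument above.
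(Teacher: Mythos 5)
Your overall framework is sound---the observation that a polynomial-time algorithm for OF-RED would decide HF-RED is correct, as is the instinct to reduce from a weakly NP-hard number problem---but the proof has a genuine gap exactly where you flag it: the ``no-cheating'' lemma is deferred rather than proven, and the mechanism you sketch for it does not work as stated. Your encoding needs two things to hold for \emph{every} fidelity-feasible pflow through subset $S$: that its maximum achievable EDR is proportional to $1/\prod_{i\in S} a_i$, and that its load on the shared trunk link is proportional to $\prod_{i\in S} a_i$. But a pflow is a path \emph{together with a swap order} (a binary tree), and by Algorithm~\ref{a:pcost} a link's generation ratio $\bar g_{mn}$ accumulates the factors $1/q_k$ only over swaps that are \emph{ancestors} of that link in the tree. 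Under a balanced or late-merge swap order, no link has all intermediate nodes as ancestors, so the per-pflow EDR is strictly better than (bottleneck capacity) $\times \prod_k q_k$, and the trunk in particular can be merged near the root so that almost none of the $1/q_k = a_i$ burden lands on it. The LP of Program~\eqref{eq:alt_form} ranges over all these pflows, so it will exploit precisely the orders that break your calibration of $\Delta_{st}$ against $T$; your cut argument over the trunk then fails. On top of this, an equivalence hinging on a subset achieving product \emph{exactly} $T$ is brittle once the ratios are no longer exact---you would still have to exclude near-miss mixtures.

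The paper avoids this difficulty entirely by choosing a source problem that already has fractional multi-path structure: MPBD (multi-path routing with bandwidth and delay constraints), which is NP-complete. It sets all $q_l = q_n = 1$, which makes swap order irrelevant and collapses expected EDR to classical flow value, and sets $W_l = e^{-d_l}$, $W_n = 1$, $\Upsilon_{st} = (1+3e^{-D})/4$, $c_l$ equal to the bandwidth, and $\Delta_{st} = B$, turning the additive per-path delay bound into the multiplicative per-path fidelity bound via Eq.~\eqref{eq:fed3}. Feasible MPBD flows and feasible HF-RED solutions then correspond directly---no single pflow needs to hit an exact target, so no no-cheating lemma is needed. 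To salvage your route you would need to prove swap-order invariance of the trunk load in your gadget (or force a canonical swap order), which is substantial work; alternatively, adopt the paper's strategy of reducing from a problem whose solutions are themselves fractional path mixtures.
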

\begin{proof}
We prove NP-hardness of HF-RED by a reduction from the \emph{Multi-Path routing with Bandwidth and Delay constraints (MPBD)} problem, which is NP-complete~\cite{misra2009polynomial}.
Given a graph, an SD pair and two values $B, D > 0$, MPBD asks for a set of paths with delay upper bounded by $D$, and a network flow over these paths with total flow lower bounded by $B$.
Given an MPBD instance, let us build an instance of HF-RED.
First, we set all probabilities $q_l$ and $q_n$ to $1$.
Then we set $W_l = e^{-d_l}$ where $d_l > 0$ is the delay of link $l$, and $W_n = 1$ for $n \in N$.
Note that since $d_l > 0$, $W_l \in (0, 1)$.
The fidelity bound is $\Upsilon_{st} = (1 + 3 \cdot e^{-D})/4$.
Capacity $c_l$ is set as the bandwidth in MPBD, and EDR bound $\Delta_{st} = B$.
Given this construction, any generated ebit represents a path $p$ such that $\prod_{l \in p} W_l = e^{-\sum_l {d_l}} \ge e^{-D}$, which gives $\sum_l d_l \le D$.
Meanwhile, for any delay-feasible path in MPBD, generating end-to-end ebits along this path will satisfy the fidelity bound $\Upsilon_{st}$.
Since generation and swapping both have success probability $1$, the EDR is exactly equal to the end-to-end $st$-flow value.
Hence a solution to MPBD gives a feasible solution to HF-RED, and vice versa.
HF-RED is thus NP-hard, and the NP-hardness of OF-RED follows.
\end{proof}

\textbf{\emph{Remark (from fidelity to length):}}
We utilize the above proof to transform end-to-end fidelity in Eq.~\eqref{eq:fed3} into an additive metric.
Define \emph{length} values $\zeta_l = -\log(W_l)$ and $\zeta_n = -\log(W_n)$ for link and node fidelity values, respectively.
Consider end-to-end fidelity $F^{\sf E2E}$ of a path in Eq.~\eqref{eq:fed3}. 
Define the path length as $Z = \sum_{i = 1}^{X+1} \zeta_{l_i} \sum_{j = 1}^{X} \zeta_{n_j}$, then $F^{\sf E2E} = \frac{1}{4} \cdot \left( 1 + 3 e^{-Z} \right)$.
Since the above transformation is bijective, maximizing the worst-case fidelity is equivalent to minimizing the longest path length.
Given a fidelity bound $\Upsilon_{st}$, it is also easy to define an equivalent length bound $\mathbf{Z}_{st} = -\log\left( \frac{4\Upsilon_{st}-1}{3} \right)$, such that any path with length upper bounded by $\mathbf{Z}_{st}$ will have fidelity lower bounded by $\Upsilon_{st}$, and vice versa.
Note that using either $W_l, W_n$ or $\zeta_l, \zeta_n$ only differs in the \emph{binary encoding} to represent the fidelity parameters.
Because of the equivalence, we next focus on \textbf{minimizing the maximum path length} in OF-RED. % in the next section.
%
%

%===========================================================%
%	Back to parent: main.tex										%
%===========================================================%

%===========================================================%
%	Parent: main.tex											%
%	Child: 4.1.tex												%
%===========================================================%

%

\begin{table}
\caption{Key Notations for Algorithm Design}
\label{notation2}
\footnotesize
\begin{tabular}{p{1.47cm}p{6.63cm}}
\hline
Parameters & Description\\
\hline
$m{{{{{}}}}}n/z$ & extended enode $m{{{{{}}}}}n$ with a path segment length of $z$\\
$\upsilon_{st}^*$ & optimal worst-case end-to-end fidelity between $s$ and $t$\\
{$\zeta_l, \zeta_n$} & length values of link $l$ and node $n$\\
$\varsigma$ & Boolean output of the approximate testing algorithm~\ref{a:test}\\
$\varepsilon$ &  approximation accuracy parameter\\
$\theta$ & quantization factor of node/link lengths\\
$\zeta(p), \zeta^{\theta}(p)$ & lengths of path $p$ before \& after quantization with $\theta$ \\%ebits are distributed for each ${{{{{st}}}}}$-path $p$\\
$\mathbf{Z}, Z$ & path length bounds before and after quantization\\
$\mathbf{Z}^*, Z^*$ & original optimal longest path length, and quantized value\\
$\mathbf{Z}^{\theta}$ & optimal longest path length for quantized OF-RED\\
$\text{LB}, \text{UB}$ & lower \& upper bounds on optimal longest path length $\mathbf{Z}^*$\\
$Z_{\text{LB}}, Z_{\text{UB}}$ & the quantized lower and upper bounds\\
\hline
\end{tabular}

\end{table}

\section{FPTAS for Optimizing Fidelity under EDR Bound}
\label{sec:fptas}
\noindent
The OF-RED problem aims to search for the highest worst-case fidelity $\upsilon_{st}^*$ (equivalently the minimum longest path length $\mathbf{Z}_{st}^*$) under a minimum end-to-end EDR requirement $\Delta_{st}$.
Directly solving the OF-RED problem can be NP-hard, which precludes us from designing efficient optimal algorithms for the problem.
Instead, we seek to design an approximation algorithm to OF-RED, which can then be used to characterize the approximate weak Pareto frontier of EDR-fidelity trade-off.
Our fully polynomial-time approximation scheme (FPTAS) for the OF-RED problem consists of four building blocks. 
Notations related to algorithms are summarized in Table~\ref{notation2}.

First, we design a pseudo-polynomial-time \emph{Fidelity-aware Optimal Remote Entanglement Distribution (FORED)} program as an extension to Program~\eqref{fml:ored}.
Under restrictive integrality conditions on the length values, the program outputs an eflow achieving maximum EDR with lower-bounded length (fidelity).

Our second building block, an \emph{approximate testing algorithm}, uses the FORED program as a sub-routine to test if a specific length value can be achieved with the EDR bound satisfied, subject to a small and bounded testing error.

Our third building block is a polynomial-time \emph{sorting and trimming algorithm}, which finds a pair of close-enough lower and upper bounds for the optimal length value, to serve as the initial range in which the optimal value will be searched for.

Finally, a \emph{two-stage bisection search algorithm} is devised to iteratively narrow down the initial range until a solution is found within a small approximation error of the optimal length (fidelity) value while satisfying the EDR bound.
\begin{figure}[t!]
\centering
\includegraphics[width=0.485\textwidth]{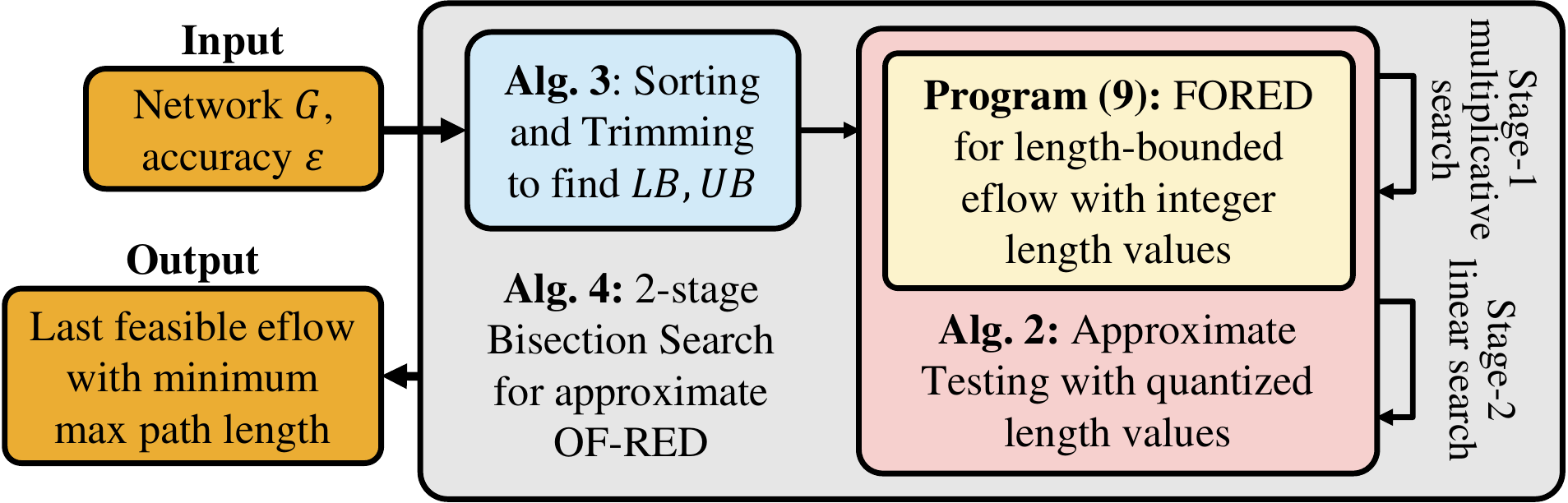}
\caption{\smallfont The overall algorithmic framework of FENDI.}
\label{fig:algorithmic_framework}
\end{figure}

The overall algorithmic framework, named \textbf{FENDI}, is shown in Fig.~\ref{fig:algorithmic_framework}.
Given an approximation parameter $\varepsilon > 0$, our FPTAS can obtain a $(1+\varepsilon)$-approximation to the optimal longest path length in time polynomial to the network graph size $|N|$ and $1/\varepsilon$.
Next, we design these building blocks one-by-one.

\subsection{Fidelity-aware Optimal Remote Entanglement Distribution}
\noindent
In the first building block, we first consider an opposite problem to OF-RED: maximizing expected EDR, subject to a fidelity bound that is equivalent to a path length bound $\mathbf{Z}$.
We address this problem in a very restrictive case: when all the length values $\zeta_l$ and $\zeta_n$ are \emph{positive integers}.
In this case, we can assume the path length bound is also a positive integer without loss of generality, which we instead denote as $Z$ to differentiate from a general, possibly non-integral path length $\mathbf{Z}$.

\textbf{Length-bounded eflow.}
The key to solving this ``integral'' problem optimally is to build the integer length values into the structure of the induced graph $\mathcal{G}$ of an eflow.
Let $[Z] = \{ 0, 1, 2, \dots, Z \}$.
Consider two enodes $mk$ and $kn$, whose ebits might be swapped to generate ebits between $mn$.
Depending on how the $mk$- and $kn$-ebits are generated, we can divide the two enodes each into $Z+1$ copies, which we denote as \emph{extended enodes} $mk/z$ and $kn/z$, for $z \in [Z]$.
Each enode $mk/z$ denotes $mk$-ebits that are generated along a path with path length of exactly $z$.
Because of the integer length bound $Z$, there are up to $Z+1$ different path length values (or equiv.\ $Z+1$ fidelity values) for ebits generated between each enode $mn$.
When two enodes $mk/z_1$ and $kn/z_2$ swap, if the resulting length $z_1 + z_2 + \zeta_k > Z$, the resulting ebits will not satisfy the length/fidelity bound, and hence should be discarded.
For elementary ebit generation, the initial enode is $mn/\zeta_{mn}$ if $\zeta_{mn} \le Z$, reflecting the initial fidelity of the elementary ebits on link $mn \in L$.

Fig.~\ref{fig:quantization} visualizes this transformation with a simple example.
Assume we have a three-node network shown in Fig.~\ref{fig:quantization}(a), and the goal is to establish $AC$-ebits either directly or with the help of repeater $B$.
Length values are marked beside nodes/links.
Given a length bound $Z = 6$, direct generation along link $AC$ would not be feasible with $\zeta_{AC} = 8$, and hence there is no extended enode $AC/8$ in Fig.~\ref{fig:quantization}(c).
Meanwhile, the feasible eflow of swapping $AB$ and $BC$ to generate $AC$ is visualized in Fig.~\ref{fig:quantization}(b)--(c), with the edges from extended enodes $AB/2$ and $BC/1$ to $AC/4$ given $\zeta_{AB} + \zeta_{BC} + \zeta_{B} = 2 + 1 + 1 = 4$.
\begin{figure}[t]
\centering
\includegraphics[width=0.485\textwidth]{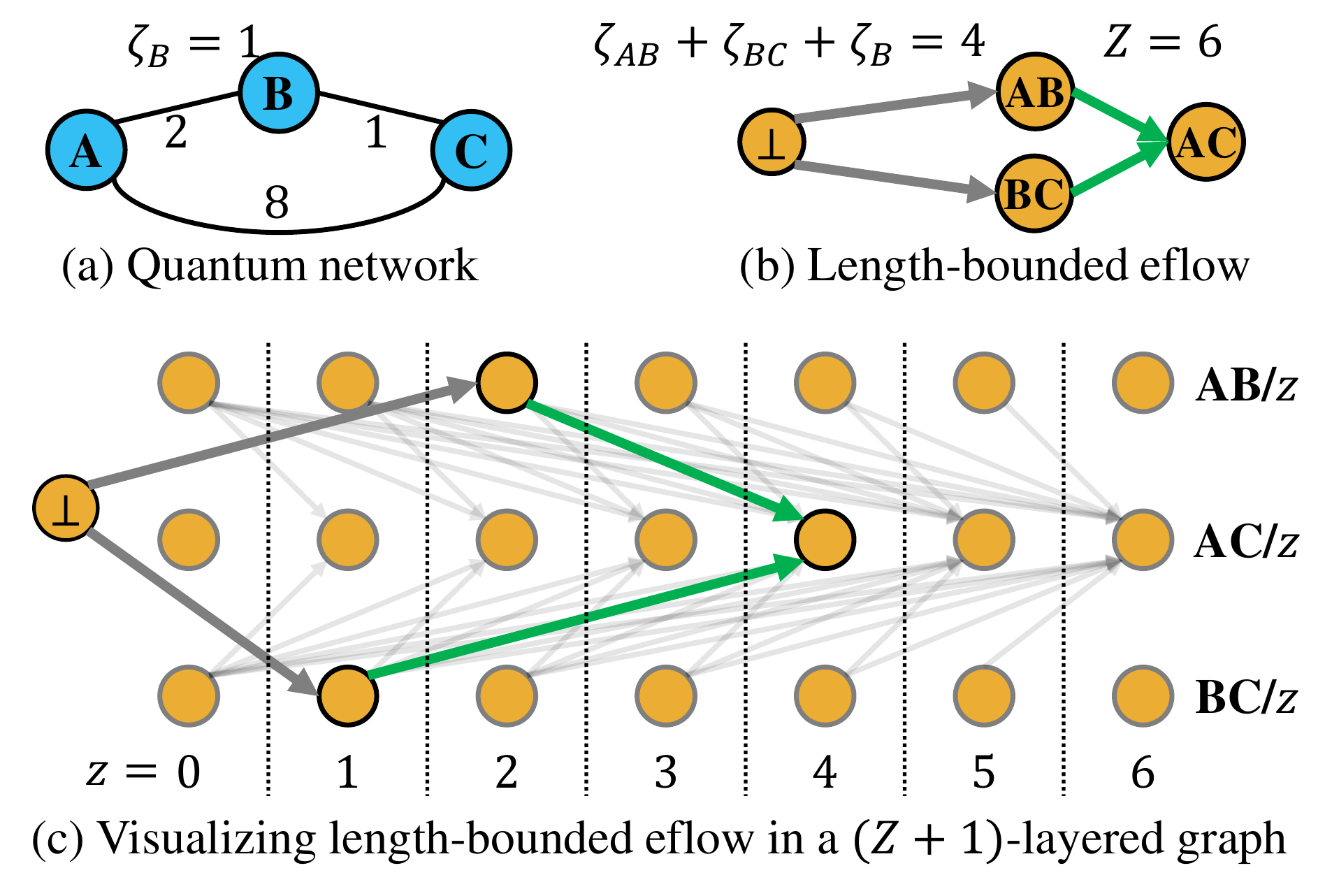}
\caption{Example of a length-bounded eflow in a 3-node network in (a). $AC$ is the SD pair. Length values $\zeta_l$ are marked beside links. Given a length bound $Z = 6$, there is one length-bounded eflow in (b). The eflow can be visualized in a $(Z+1)$-layered graph with all possible $g$ and $f$ variables as edges in (c).}
\label{fig:quantization}
\end{figure}

\textbf{FORED formulation.}
Building atop the above intuition, we extend ORED to FORED, whose solution (if feasible) is a length-bounded eflow achieving maximum expected EDR.
We keep the $g_{m{{{{{}}}}}n}$ variables unchanged for $mn \in L$. % as Eq.~\eqref{fml:ored}.
For each $f^{m{{{{{}}}}}k}_{m{{{{{}}}}}n}$ variable, we extend it to up to $O(Z^2)$ copies, denoted by $f^{m{{{{{}}}}}k/z'}_{m{{{{{}}}}}n/z}$, for $z' = [Z \!-\! \zeta_k]$, and $z = z' \!+\! \zeta_k, \dots, Z$.
In plain words, $f^{m{{{{{}}}}}k/z'}_{m{{{{{}}}}}n/z}$ denotes the \emph{number of $m{{{{{}}}}}k$-ebits, with a path segment length of $z'$, which contribute to swapping at node $k$ to generate $m{{{{{}}}}}n$-ebits with a path segment length of $z$}.
We then formulate FORED in Program~\eqref{fml:qofred}:
\begin{subequations}
\label{fml:qofred}
\begin{align}
    \max_{f, g} &\quad 
    \eta^Z_{st} \triangleq \sum\nolimits_{z=0}^{Z} I(st/z) % - \Omega(st/z)     
    \tag{\ref{fml:qofred}} 
    \\   
    \text{s.t.}
    &\quad
    f^{m{{{{{}}}}}k/z_1}_{m{{{{{}}}}}n/z} = f^{k{{{{{}}}}}n/z_2}_{m{{{{{}}}}}n/z},
    \quad \forall m,n,k \in N,
    \nonumber\\
    &\quad\quad\quad
    \forall z_1, z_2 \!\in \! [Z\! -\! \zeta_k], %\{ 1, \dots, Z\! -\! \zeta^{\theta}_k\! -\! 1\},
    z \!=\! z_1 \!+\! z_2 \!+\! \zeta_k;
     \tag{\ref{fml:qofred}{a}} \label{fml:qofred:a}\\
    &\quad
    I(mn/z) = \Omega(mn/z), 
    \;\;\;\;
    \nonumber\\
    &\quad\quad\quad
    \forall z \in [Z], \forall m, n \!\in\! N, mn \!\ne\! st;
     \tag{\ref{fml:qofred}{b}}\label{fml:qofred:b}\\
    &\quad
    I(st/z) = 0, \quad\quad \forall z; \tag{\ref{fml:qofred}{c}}\label{fml:qofred:c}
\end{align}
where for $\forall m, n \in N$, $z \in [Z]$,
\label{eq:I2}
\begin{align}
    I(mn/z) &\triangleq q_{mn} c_{mn} g_{m{{{{{}}}}}n} \cdot \mathbf{1}_{mn \in L, \zeta_{mn} = z} % - \Omega(m, n, z)
    \notag \\
    &
    +\!\!\!\!\!\sum_{{k \in  N \atop \setminus \{m, n\}}} 
    \!\!\sum_{z' = 0}^{z-\zeta_k}
    \frac{q_k}{2} \left( f^{m{{{{{}}}}}k/z'}_{m{{{{{}}}}}n/z} + f^{k{{{{{}}}}}n/(z-z'-\zeta_k)}_{m{{{{{}}}}}n/z} \right)  \tag{\ref{eq:I2}{d}}\label{eq:I2:a},
    \\
    \Omega(mn/z) &\triangleq \!\!\!\! \sum_{k \in N \atop\setminus \{m, n\}} 
        \!\!\left( \;\;\; \sum_{\mathclap{z' = z+\zeta_n}}^{Z}
        f^{m{{{{{}}}}}n/z}_{m{{{{{}}}}}k/z'}
        \!\!+\!\!\sum_{\mathclap{z' = z+\zeta_m}}^{Z}f^{m{{{{{}}}}}n/z}_{k{{{{{}}}}}n/z'} \right) , \tag{\ref{eq:I2}{e}}\label{eq:I2:b}
\end{align}
\end{subequations}
and $\mathbf{1}_{mn \in L, \zeta_{mn} = z}$ denotes whether both $mn \in L$ and $\zeta_{mn} = z$.

\textbf{\emph{Explanation:}} 
Each constraint in Program~\eqref{fml:qofred} corresponds to one constraint in Program~\eqref{fml:ored}, applied to each extended enode.
Objective~\eqref{fml:qofred} is to maximize the sum of end-to-end ebits generated over all paths of lengths up to the bound $Z$, represented by enodes $st/z$ for $z \in [Z]$.
Constraint~\eqref{fml:qofred:a} considers the joint contribution to $m{{{{{}}}}}n$-ebits with a specific path length $z$, from a pair of $m{{{{{}}}}}k$- and $k{{{{{}}}}}n$-ebits with total path length $z - \zeta_k$.
This accounts for the fact that a concatenated $mn$-path has a total length of the $mk$-segment and the $kn$-segment, plus the length $\zeta_k$ of node $k$.
Constraint~\eqref{fml:qofred:b} specifies flow conservation at each intermediate pair of nodes $m{{{{{}}}}}n$ with each specific path length value $z$.
Constraint~\eqref{eq:I2:a} is the definition of $I(mn/z)$ that includes all generated ebits between $m$ and $n$ with a specific length $z$ from either elementary ebit generation or intermediate swapping, minus all ebits contributed to further swapping.
Constraint~\eqref{eq:I2:b} defines $\Omega(mn/z)$ that includes all the ebits between $m$ and $n$ with a specific length $z$ which will be swapped to build ebits between other node pairs.
\begin{theorem}
\label{th:qofred}
Given integer link/node lengths $\zeta_i > 0$ for $i \in N \cup L$, and an integer length bound $Z$, Program~\eqref{fml:qofred} computes the maximum expected EDR between $s$ and $t$, with all ebits generated along paths satisfying the length bound $Z$.
\myendbox
\end{theorem}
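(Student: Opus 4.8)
The plan is to recognize Program~\eqref{fml:qofred} as nothing more than the ORED program of Definition~\ref{def:eflow} instantiated on an \emph{extended (layered) network} $G'$, and then to transfer optimality through the eflow framework. First I would construct $G'$ explicitly: its enode set is $\{ mn/z : m,n \in N,\, z \in [Z] \}$, its generation edges connect the special vertex $\bot$ to $mn/\zeta_{mn}$ whenever $mn \in L$ and $\zeta_{mn} \le Z$, and its swapping edges match $mk/z_1$ and $kn/z_2$ into $mn/(z_1+z_2+\zeta_k)$ precisely when this sum does not exceed $Z$. I would then verify, constraint by constraint, that \eqref{fml:qofred:a}--\eqref{fml:qofred:c} together with \eqref{eq:I2:a}--\eqref{eq:I2:b} coincide with the eflow constraints \eqref{eq:eflow:constr_a}--\eqref{eq:O} on $G'$, the single SD enode being replaced by the collection $\{ st/z : z \in [Z]\}$ treated as a common sink (hence the summed objective). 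This identification is bookkeeping, so I would state it as a short lemma and keep the check terse.

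The crux is showing that the layer index faithfully records path length, so that the value of FORED coincides with the maximum EDR on $G$ using only length-$\le Z$ paths. For this I would apply the decomposition theorem (Theorem~\ref{th:decomp}) to an optimal extended eflow, obtaining polynomially many pflows whose induced graphs are binary trees rooted at some $st/z$, and then induct up each tree from its leaves. A leaf enode $mn/z$ must be fed by a generation edge, forcing $z=\zeta_{mn}$, which is exactly the length of the single-link path; an internal enode $mn/z$ is fed by a swap of $mk/z_1$ and $kn/z_2$ with $z=z_1+z_2+\zeta_k$, which by the inductive hypothesis and the additive length rule of the ``fidelity to length'' remark equals the length $\zeta(p)$ of the concatenated $mn$-path $p$. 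Thus the root index equals the realized $st$-path length, and since the root lives in some layer $z\le Z$ by construction of $G'$, every realized path satisfies $\zeta(p)\le Z$, i.e.\ fidelity at least $\Upsilon_{st}$. This establishes \emph{soundness}: FORED produces only length-feasible ebits.

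For \emph{completeness} and optimality I would argue both inclusions on achievable EDR. Given any eflow on $G$ using only paths of length $\le Z$, I decompose it by Theorem~\ref{th:decomp} into pflows, each a fixed $st$-path $p$ with $\zeta(p)\le Z$, and lift it to $G'$ by assigning each enode $mn$ of $p$ to the layer equal to the cumulative length of its subpath; summing the lifted pflows gives a feasible FORED solution of the same EDR, so the FORED optimum is at least the length-bounded optimum on $G$. Conversely, projecting a FORED solution back by keeping $g_{mn}$ and setting $f^{mk}_{mn}=\sum_{z,z'} f^{mk/z'}_{mn/z}$ yields a feasible eflow on $G$ whose realized paths all satisfy $\zeta(p)\le Z$ (by soundness) and whose eflow value equals $\sum_z I(st/z)$; Theorem~\ref{th:12} then certifies that this value is the maximum expected EDR attainable among length-bounded eflows, with achievability deferred to the fidelity-respecting protocol in the Appendix. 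I expect the main obstacle to be the inductive length-tracking step: one must ensure the layering genuinely \emph{separates} contributions of distinct lengths at a shared enode, so that the per-layer conservation constraint \eqref{fml:qofred:b} never merges ebits of different path lengths. The reduction to pflows (trees) via Theorem~\ref{th:decomp} is exactly what makes this separation rigorous, since in a raw eflow a single enode $mn$ could otherwise aggregate ebits arriving along many paths of differing lengths.
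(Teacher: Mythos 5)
Your proposal is correct and, at its core, is the same proof as the paper's: the paper also establishes the layer--length correspondence by induction (elementary ebits enter only at level $\zeta_{mn}$; a swap at $k$ sends levels $z_1,z_2$ to $z_1+z_2+\zeta_k$), projects a solution of Program~\eqref{fml:qofred} to an ordinary eflow by summing the $f$ variables and $I(\cdot)$ values over all $z$, and proves the reverse inclusion exactly as you do --- decompose a length-bounded eflow into pflows and lift each pflow onto the layers given by its cumulative segment lengths, then sum the lifted solutions.

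One step needs more care than you give it. Theorem~\ref{th:decomp} is stated for eflows on a quantum network $G=(N,L)$, whose enodes are unordered pairs of physical nodes; your layered object $G'$ is not such a network --- an extended enode $mn/z$ carries a segment length that cannot be encoded as a pair of nodes of any node set, so Program~\eqref{fml:qofred} is \emph{not} literally the ORED program of Definition~\ref{def:eflow} instantiated on another network instance, and Theorem~\ref{th:decomp} cannot be cited as-is for your ``extended eflow'' in the soundness step (which you yourself call the crux). This is repairable in two ways: (i) observe that the proof of Theorem~\ref{th:decomp} uses only per-enode flow conservation and the swap-matching constraints, both of which hold per layer in Program~\eqref{fml:qofred} (with the sinks $st/z$ in place of the single sink $st$), so the decomposition argument carries over verbatim to the layered LP; or (ii) avoid decomposition for soundness altogether and induct directly on the level $z$, as the paper does --- since the lengths are positive integers, any contribution to $I(mn/z)$ comes either from generation (possible only when $z=\zeta_{mn}$) or from swaps of strictly lower levels, so every ebit at level $z$ has path length exactly $z$. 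With either repair your argument is complete; your completeness direction, which applies Theorem~\ref{th:decomp} to a genuine eflow on $G$, is already within the theorem's stated scope.
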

\begin{proof}
We call a $m{{{{{}}}}}n/z$ by \emph{enode $m{{{{{}}}}}n$ at level $z$}.
We first examine path length feasibility, \kIE, ebits generated between $m{{{{{}}}}}n$ at level $z$ has path length of exactly $z$.
For any physical link $mn \!\in\! L$, the first term in Eq.~\eqref{eq:I2:a} ensures that $g_{m{{{{{}}}}}n}$ only contributes to $I(mn/z)$ when $z = \zeta_{(m,n)}$, \kIE, elementary ebits along $mn$ are only counted at level $\zeta_{(m,n)}$.
Then, for any triple $m{{{{{}}}}}n/z$ where there exists $k$ and $z_1$ such that $f^{m{{{{{}}}}}k/z_1}_{m{{{{{}}}}}n/z} \!>\! 0$ and $f^{k{{{{{}}}}}n/z_2}_{m{{{{{}}}}}n/z} \!>\! 0$ (where $z = z_1+z_2+\zeta^{\theta}_k$), we can see that if ebits at $m{{{{{}}}}}k/z_1$ have path length of exactly $z_1$ and ebits at $k{{{{{}}}}}n/z_2$ have path length of exactly $z_2$, then ebits generated at $m{{{{{}}}}}n/z$ by swapping them at $k$ exactly have path length of $z = z_1 + z_2 + \zeta^{\theta}_k$.
By induction, any generated ebit at level $z$ has path length of exactly $z$.
Since there are at most $Z$ levels, all ebits generated between $s{{{{{}}}}}t$ have path lengths bounded by $Z$.

Next we prove optimality of Program~\eqref{fml:qofred}, by showing that every solution to Program~\eqref{fml:qofred} with objective value $\eta^{Z}_{st}$ is a solution to HF-RED with EDR bound $\Delta_{st} = \eta^{Z}_{st}$ and fidelity bound $\Upsilon_{st} = \frac{1}{4} \cdot \left( 1 + 3 e^{-Z} \right)$, and vice versa.
A feasible length-bounded eflow to Program~\eqref{fml:qofred} indicates a feasible eflow to Program~\eqref{fml:ored}, by summing up $f$ variables and $I(\cdot)$ function values over all possible $z$.
Combined with path length feasibility, the length-bounded eflow maintains worst-case fidelity above the fidelity threshold $\Upsilon_{st}$ and EDR bound $\Delta_{st}$ in the HF-RED problem.
Now, for a feasible length-bounded eflow, let us represent it by a set of pflows with induced graphs $\{ \mathcal{G} \}$ and assigned values $\{ \eta_{\mathcal{G}} \}$.
Each $\mathcal{G}\! =\! (\mathcal{V}, \mathcal{E})$ would represent a path $p_\mathcal{G} \in G$ with path length bounded by $Z$.
We can construct a feasible solution to Program~\eqref{fml:qofred} given each $\mathcal{G}$.
For each enode $m{{{{{}}}}}n \in \mathcal{G}$, let $\zeta_{m{{{{{}}}}}n}$ be the length of the path segment in $G$ between $m$ and $n$ that is represented by $\mathcal{G}$ (which can be computed for each $m{{{{{}}}}}n$ in linear time).
For each enode $m{{{{{}}}}}n \in \mathcal{V}$ that has no in-coming link, we set $g_{m{{{{{}}}}}n} \!\!= \!\bar g_{m{{{{{}}}}}n}\! \cdot \eta_{\mathcal{G}}$.
Then, for each $(m{{{{{}}}}}k, m{{{{{}}}}}n) \in \mathcal{E}$, we set $f^{m{{{{{}}}}}k/\zeta_{m{{{{{}}}}}k}}_{m{{{{{}}}}}n/\zeta_{m{{{{{}}}}}n}} = f^{k{{{{{}}}}}n/\zeta_{k{{{{{}}}}}n}}_{m{{{{{}}}}}n/\zeta_{m{{{{{}}}}}n}} = \bar f^{m{{{{{}}}}}k}_{m{{{{{}}}}}n} \cdot \eta_{\mathcal{G}}$.
It can be checked that the constructed solution is feasible to Program~\eqref{fml:qofred} based on how $\{ \bar g_{m{{{{{}}}}}n}, \bar f^{m{{{{{}}}}}k}_{m{{{{{}}}}}n}, \bar f^{k{{{{{}}}}}n}_{m{{{{{}}}}}n} \}$ are computed, how $\mathcal{G}$ is defined, and that each $\mathcal{G}$ represents a path with length bounded by $Z$.
Summing up so-constructed solutions for all of $\{ \mathcal{G} \}$ and $\{ \eta_{\mathcal{G}} \}$, we get a feasible solution to Program~\eqref{fml:qofred}, with the same objective value $\eta^Z_{st} \!= \!\!\sum_{\mathcal{G}} \eta_{\mathcal{G}}$.
It follows that Program~\eqref{fml:qofred} outputs the maximum expected EDR among all feasible eflows satisfying the path length bound of $Z$.
\end{proof}

\begin{proposition}
\label{prop:qofred-1}
Program~\eqref{fml:qofred} can be solved optimally, in time polynomial to the input size and $Z$.
\myendbox
\end{proposition}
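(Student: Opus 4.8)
The plan is to argue directly from the structure of Program~\eqref{fml:qofred}, observing that it is a linear program whose size is polynomial in $|N|$ and $Z$, and then invoking an off-the-shelf polynomial-time LP solver. There is no need to exploit any special combinatorial structure: the whole content of the proposition is a size count followed by a black-box LP running-time bound.

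First I would enumerate the decision variables. The generation variables $g_{m{{{{{}}}}}n}$ number $O(|L|)$, one per physical link. The swapping variables $f^{m{{{{{}}}}}k/z'}_{m{{{{{}}}}}n/z}$ are indexed by a triple $m, n, k \in N$ together with a pair of levels $(z', z)$ with $z' \in [Z - \zeta_k]$ and $z = z' + \zeta_k, \dots, Z$; for each fixed triple there are $O(Z^2)$ admissible level pairs, giving $O(|N|^3 Z^2)$ variables in total. Hence the variable count is polynomial in $|N|$ and $Z$.

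Next I would count the constraints. The coupling equalities \eqref{fml:qofred:a} are indexed by $(m,n,k)$ and a level pair, contributing $O(|N|^3 Z^2)$; the flow-conservation equalities \eqref{fml:qofred:b} are indexed by an enode $m{{{{{}}}}}n$ and a level $z$, contributing $O(|N|^2 Z)$; the sink constraints \eqref{fml:qofred:c} contribute $O(Z)$; and the box constraints $g_{m{{{{{}}}}}n} \in [0,1]$ together with nonnegativity of the $f$ variables add $O(|N|^3 Z^2)$ more. The quantities $I(\cdot)$ and $\Omega(\cdot)$ appearing in \eqref{eq:I2:a}--\eqref{eq:I2:b} are linear forms in the variables, whose coefficients are drawn only from the input data $q_{m{{{{{}}}}}n}, q_k, c_{m{{{{{}}}}}n}$ and the integer lengths $\zeta$, so the bit-length of every coefficient is bounded by the input size. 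Therefore the whole program is a linear program whose number of variables, number of constraints, and total coefficient encoding length are all polynomial in the input size and $Z$. It then remains to solve an LP of this size: I would apply a polynomial-time linear-programming algorithm (e.g., Khachiyan's ellipsoid method or an interior-point method), which runs in time polynomial in the number of variables, the number of constraints, and the data bit-length, and which also certifies infeasibility when the program has no feasible length-bounded eflow. Combined with the size bounds above, this yields a running time polynomial in the input size and $Z$.

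The step I expect to be the only real subtlety is not any single estimate but the interpretation of the bound: the running time is polynomial in $Z$ rather than in $\log Z$, so the procedure is merely \emph{pseudo}-polynomial, since $Z$ may be exponential in the bit-encoding of the length bound $\mathbf{Z}$. Flagging this explicitly matters, because it is precisely this dependence on $Z$ that later forces the length-quantization and bisection machinery needed to turn FORED into a genuine FPTAS for OF-RED.
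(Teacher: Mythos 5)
Your proof is correct and follows essentially the same route as the paper's: count the variables (dominated by the $O(|N|^3 Z^2)$ extended swapping variables $f^{m{{{{{}}}}}k/z'}_{m{{{{{}}}}}n/z}$) and invoke a polynomial-time LP solver. Your additional remarks on constraint counts, coefficient encoding, and the pseudo-polynomial dependence on $Z$ are all consistent with the paper, which indeed relies on the later quantization and bisection machinery to remove that dependence.
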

\begin{proof}
Program~\eqref{fml:qofred} is an LP with $O(|N|^3Z^2)$ variables, and can be solved in time polynomial to $|N|$ and $Z$~\cite{Ye1991}.
\end{proof}

\begin{algorithm}[t]%[ht]
\smallfont
\caption{\mbox{Approximate testing procedure $\text{TEST}(\mathbf{Z}, \varepsilon)$}}
\label{a:test}
\KwIn{Network $G$, accuracy $\varepsilon$, non-quantized length bound $\mathbf{Z}$}
\KwOut{Test result $\varsigma \in \{ true, false \}$}
$\theta \leftarrow (2|N|-3)/ (\varepsilon \mathbf{Z})$, and $Z \leftarrow \lfloor \theta \mathbf{Z} \rfloor + (2|N|-3)$\;\label{a:test:ln:val}
Solve Program~\eqref{fml:qofred} with $\{ \zeta_i^\theta \}$ and $Z$, and get $\eta^{Z}_{st}$\;
\Return{((Program~\eqref{fml:qofred} is feasible) AND ($\eta^{Z}_{st} \ge \Delta_{st}$)).}
\end{algorithm}

\subsection{Approximate Testing Procedure}
\noindent
Program~\eqref{fml:qofred} runs in pseudo-polynomial time and can be used to check, given any length bound $Z$, if there is a feasible length-bounded eflow whose expected EDR can satisfy an EDR bound $\Delta_{st}$.
This testing is however limited by 1) the requirement in Program~\eqref{fml:qofred} that all length values must be positive integers, and 2) the pseudo-polynomial running time.
In this subsection, we design an \emph{approximate testing} procedure which simultaneously addresses these two issues.
Specifically, by designing a proper quantization scheme to transform any real length value into a positive integer within a polynomial scale, we can both limit the size of the resulting LP in Program~\eqref{fml:qofred}, and bound the quantization error introduced by the transformation.

To start, we define a quantization of the length values $\mathcal{Z} \triangleq \{ \zeta_i \,|\, i \in L \cup N \}$ with a factor $\theta > 0$, where the \textbf{quantized length} is denoted by $\zeta^{\theta}_i \!=\! \lfloor \theta \cdot \zeta_i \rfloor \!+\! 1$, for $i \in N \cup L$.
This transformation ensures that the resulting value $\zeta_i$ is always a positive integer, which satisfies the requirement of Program~\eqref{fml:qofred}.

Let $\zeta^{\theta}(p)$ be the length of an arbitrary path $p$ in $G$ after quantization with factor $\theta$, and recall that $\zeta(p)$ is the original path length.
We have the following lemma:
\begin{lemma}
\label{l:quant}
    $\theta \cdot \zeta(p) \le \zeta^{\theta}(p) \le \lfloor \theta \cdot \zeta(p) \rfloor + (2|N| - 3)$.
\myendbox
\end{lemma}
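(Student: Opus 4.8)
The plan is to expand the quantized path length element-by-element and bound each contribution separately. Recall that, consistent with Eq.~\eqref{eq:fed3}, the length of a path $p = (n_0, n_1, \dots, n_{X+1})$ from $s = n_0$ to $t = n_{X+1}$ is the sum of the length values of its $X+1$ traversed links and $X$ intermediate swapping nodes, i.e.\ $\zeta(p) = \sum_{i=1}^{X+1}\zeta_{l_i} + \sum_{j=1}^{X}\zeta_{n_j}$. Since quantization acts on each link and node length independently as $\zeta_i^\theta = \lfloor \theta\zeta_i\rfloor + 1$, I would write $\zeta^\theta(p) = \sum_{i\in p}\left(\lfloor \theta\zeta_i\rfloor + 1\right)$, where $i$ ranges over all $2X+1$ length-bearing elements of $p$.

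For the lower bound, I would apply the elementary fact $\lfloor y\rfloor + 1 > y$ with $y = \theta\zeta_i$, giving $\zeta_i^\theta > \theta\zeta_i$ for every element; summing over the $2X+1$ elements yields $\zeta^\theta(p) > \theta\sum_{i\in p}\zeta_i = \theta\zeta(p)$, which implies the claimed (weak) left inequality. For the upper bound I would split $\zeta^\theta(p) = \sum_{i\in p}\lfloor\theta\zeta_i\rfloor + (2X+1)$ and invoke the superadditivity of the floor, $\sum_{i\in p}\lfloor\theta\zeta_i\rfloor \le \lfloor\theta\sum_{i\in p}\zeta_i\rfloor = \lfloor\theta\zeta(p)\rfloor$, which reduces the problem to showing $\zeta^\theta(p) \le \lfloor\theta\zeta(p)\rfloor + (2X+1)$ and then controlling the element count $2X+1$.

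The final step is the count bound, and it is also the step I would be most careful about. Here I would use that $p$ is a simple $st$-path, so it visits at most $|N|$ distinct nodes including both endpoints, hence at most $|N|-2$ intermediate swapping nodes; that is, $X \le |N|-2$, so $2X+1 \le 2|N|-3$. Substituting completes the proof. The only genuine subtlety is the simplicity of $p$: the target constant $2|N|-3$ equals $2X+1$ exactly at $X = |N|-2$, so the bound relies on no node being revisited. This is harmless for OF-RED, since any non-simple entanglement path can only have larger length (hence lower fidelity) than the simple path obtained by short-cutting its repeated nodes, so restricting attention to simple paths is without loss of generality. I would state this reduction explicitly to justify treating $p$ as simple before invoking the node-count bound.
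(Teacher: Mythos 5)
Your proof is correct and follows essentially the same route as the paper's: element-wise expansion of the quantized path length, the trivial per-element lower bound $\zeta_i^\theta \ge \theta\zeta_i$, and the counting bound of at most $|N|-1$ links plus $|N|-2$ intermediate nodes (hence $2|N|-3$ length-bearing elements), combined with integrality of $\zeta^\theta(p)$ to pull the floor inside (your superadditivity-of-floor step is just a cleaner packaging of the paper's ``integer value, hence the floor'' remark). Your explicit justification that attention can be restricted to simple paths is a point the paper leaves implicit, and is a worthwhile addition rather than a deviation.
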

\begin{proof}
The left side is trivial due to how lengths are quantized. The right side is because 1) each entanglement path in $G$ has at most $|N|\!-\!1$ links and $|N|\!-\!2$ intermediate nodes whose lengths are counted (excluding source and destination), and 2) $\zeta^{\theta}(p)$ is an integer value due to quantization (and hence the floor over $\theta \cdot \zeta(p)$ on the right side).
\end{proof}
Based on Lemma~\ref{l:quant}, we design the approximate testing procedure in Algorithm~\ref{a:test}.
Suppose an accuracy parameter $\varepsilon > 0$ and a non-quantized length bound $\textbf{Z}$ are given, and define quantization factor $\theta$ and corresponding quantized length bound $Z$ in Line~\ref{a:test:ln:val}.
The algorithm returns a test result $\varsigma \in \{ true, false \}$, which indicates whether the network admits a feasible \emph{length-bounded eflow} with expected EDR no lower than the EDR bound $\Delta_{st}$.
Let $\textbf{Z}^*$ be the non-quantized length of the optimal solution of the original OF-RED problem.
Lemma~\ref{l:test} shows a numerical relationship between the input length bound $\mathbf{Z}$ and the optimal $\textbf{Z}^*$ given the testing outcome:
\begin{lemma}
\label{l:test}
Given any $\varepsilon > 0$ and $\mathbf{Z} > 0$, we have
\begin{align*}
\pushQED{\qed} 
&\text{\normalfont TEST}(\mathbf{Z}, \varepsilon) = true &&\Rightarrow&& \mathbf{Z}^* \le (1 + \varepsilon) \cdot \mathbf{Z};	\\
&\text{\normalfont TEST}(\mathbf{Z}, \varepsilon) = false &&\Rightarrow&& \mathbf{Z}^* > \mathbf{Z}. \qedhere
\popQED
\end{align*}
\end{lemma}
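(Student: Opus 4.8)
The plan is to prove the two implications separately, in both directions leaning on Theorem~\ref{th:qofred} to translate between feasible length-bounded eflows for the quantized Program~\eqref{fml:qofred} and sets of $st$-pflows satisfying a path-length bound, and on Lemma~\ref{l:quant} to relate quantized and original path lengths. The fact that makes everything go through cleanly is that the length values $\{\zeta_i^\theta\}$ affect only \emph{which paths are feasible} (through the integer bound $Z$), never the EDR of a given flow; hence the achieved EDR is preserved whenever a solution is carried between the quantized and non-quantized worlds.

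For the first implication, suppose $\text{TEST}(\mathbf{Z}, \varepsilon) = true$. Then Program~\eqref{fml:qofred} is feasible under $\{\zeta_i^\theta\}$ and $Z = \lfloor\theta\mathbf{Z}\rfloor + (2|N|-3)$ with optimal value $\eta^Z_{st} \ge \Delta_{st}$. By Theorem~\ref{th:qofred} this solution corresponds to a set of pflows, each along an $st$-path $p$ with quantized length $\zeta^\theta(p) \le Z$, whose total EDR is $\ge \Delta_{st}$. I would then invoke the left inequality of Lemma~\ref{l:quant}, $\theta\cdot\zeta(p) \le \zeta^\theta(p) \le Z$, and substitute the definitions of $\theta$ and $Z$ from Line~\ref{a:test:ln:val} to obtain $\zeta(p) \le \mathbf{Z} + (2|N|-3)/\theta = (1+\varepsilon)\mathbf{Z}$. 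Thus every path in the exhibited solution has original length at most $(1+\varepsilon)\mathbf{Z}$ while the total EDR meets $\Delta_{st}$, which is precisely a feasible OF-RED solution of longest-path length $\le (1+\varepsilon)\mathbf{Z}$; hence $\mathbf{Z}^* \le (1+\varepsilon)\mathbf{Z}$.

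For the second implication I would argue by contrapositive: assuming $\mathbf{Z}^* \le \mathbf{Z}$, I show $\text{TEST}(\mathbf{Z},\varepsilon)=true$. Take an optimal OF-RED solution, a set of pflows each along a path $p$ with original length $\zeta(p) \le \mathbf{Z}^* \le \mathbf{Z}$ and total EDR $\ge \Delta_{st}$. Applying the right inequality of Lemma~\ref{l:quant} together with monotonicity of the floor gives $\zeta^\theta(p) \le \lfloor\theta\zeta(p)\rfloor + (2|N|-3) \le \lfloor\theta\mathbf{Z}\rfloor + (2|N|-3) = Z$, so every such path remains feasible for the quantized bound. By Theorem~\ref{th:qofred} this pflow set maps to a feasible length-bounded eflow for Program~\eqref{fml:qofred}, and since that program \emph{maximizes} expected EDR among length-bounded eflows, its optimum satisfies $\eta^Z_{st} \ge \Delta_{st}$. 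Both conditions in the return statement of Algorithm~\ref{a:test} therefore hold and TEST returns $true$, establishing the contrapositive.

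The only delicate point---and the step I would double-check most carefully---is the claim that EDR is invariant when a solution is carried between the quantized and non-quantized formulations: I must confirm that the correspondence in Theorem~\ref{th:qofred} preserves the per-pflow EDR values (these depend only on the $q$-discounting along the pflow, not on $\zeta$), so that ``total EDR $\ge \Delta_{st}$'' transfers unchanged in both directions. By contrast, the arithmetic collapsing $(2|N|-3)/\theta$ to $\varepsilon\mathbf{Z}$ is routine once the definition $\theta = (2|N|-3)/(\varepsilon\mathbf{Z})$ from Line~\ref{a:test:ln:val} is substituted.
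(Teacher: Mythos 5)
Your proposal is correct and follows essentially the same route as the paper's proof: the first implication via the left inequality of Lemma~\ref{l:quant} combined with the definitions of $\theta$ and $Z$, and the second via the contrapositive, pushing a length-feasible OF-RED solution through the right inequality of Lemma~\ref{l:quant} and Theorem~\ref{th:qofred} to force TEST to return $true$. The only cosmetic difference is that you apply the floored right-hand side of Lemma~\ref{l:quant} with monotonicity of the floor directly, whereas the paper relaxes to $\theta\zeta(p) + (2|N|-3)$ and recovers the floor from integrality of $\zeta^{\theta}(p)$; your worry about EDR preservation is exactly what Theorem~\ref{th:qofred} guarantees, so no gap remains.
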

\begin{proof}
If $\text{\normalfont TEST}(\mathbf{Z}, \varepsilon) = true$, we have a feasible length-bounded eflow with maximum EDR $\eta^{Z}_{st} \ge \Delta_{st}$ and all paths satisfying bound $Z$.
This translates to a feasible solution to OF-RED with EDR bound $\Delta_{st}$.
Let $p$ be the maximum-length path in the solution  w.r.t. the original lengths $\mathcal{Z}$.
Following Lemma~\ref{l:quant}, we have:
\begin{align*}
    \zeta(p)    &\le \zeta^{\theta}(p) / \theta \le Z / \theta 
                \le (1+\varepsilon)\mathbf{Z}.
\end{align*}
Since the solution is feasible to OF-RED, its maximum (non-quantized) path length is an upper bound on $\mathbf{Z}^*$, and hence we have $\mathbf{Z}^* \le (1 + \varepsilon) \mathbf{Z}$.
This proves the first statement.

To prove the second statement, we show that as long as there is a feasible OF-RED solution which has maximum path length bounded by $\mathbf{Z}$, then $\text{TEST}(\mathbf{Z}, \varepsilon)$ must return $true$.
Consider such a solution for which every path $p$ satisfies that $\zeta(p) \le \mathbf{Z}$.
By Lemma~\ref{l:quant}, we have:
\begin{align*}
    \zeta^{\theta}(p)   \le \theta \cdot \zeta(p) + (2|N|-3) 
                        \le {(2|N|-3)}/\varepsilon + (2|N|-3).
\end{align*}

Since $\zeta^{\theta}(p)$ must be an integer, this implies $\zeta^{\theta}(p) \le \lfloor {(2|N|-3)}/\varepsilon \rfloor + (2|N|-3) = \lfloor \theta \textbf{Z} \rfloor + (2|N|-3) = Z$.
By Theorem~\ref{th:qofred}, this solution can be decomposed into a set of pflows, whose maximum quantized path length is $\zeta^{\theta}(p)$, and whose sum of objective values equals $\eta^Z_{st} \ge \Delta_{st}$.
In this case, $\text{TEST}(\mathbf{Z}, \varepsilon)$ must return $true$.
Hence if $\text{TEST}(\mathbf{Z}, \varepsilon)$ returns $false$, it indicates there is no such feasible solution.
\end{proof}

\textbf{\emph{Remark:}}
The choice of the quantization factor $\theta$ in Line~\ref{a:test:ln:val} is key to ensuring both a polynomial size and bounded quantization error.
On one hand, it ensures that the quantized length bound $Z$ is polynomial to $|N|/\varepsilon$ regardless of the value of the original length bound $\mathbf{Z}$.
On the other hand, utilizing the maximum path length in the network, it ensures that the testing result has an error of at most $(1+\varepsilon)$.

The testing procedure is designed to enable a bisection search for the minimum longest path length $\mathbf{Z}^*$, if a reasonable initial range $[\text{LB}, \text{UB}]$ of $\mathbf{Z}^*$ is given.
By repeatedly testing if a length bound $\mathbf{Z} \in [\text{LB}, \text{UB}]$ is feasible or not, the search can multiplicatively reduce the search space, and return a close-to-optimal feasible length bound $\mathbf{Z}$ within time logarithmic to the size of the initial search space.
Since the time complexity of the search depends on the size of the search space, we next seek to find a pair of lower bound $\text{LB}$ and upper bound $\text{UB}$ on the optimal $\mathbf{Z}^*$ that are reasonably close to each other.

\subsection{Sorting and Trimming Algorithm}
\noindent
We design a sorting and trimming algorithm in Algorithm~\ref{a:lbub} to find an initial pair of bounds $\text{LB}, \text{UB}$ on $\mathbf{Z}^*$, such that $\text{LB} \le \mathbf{Z}^* \le \text{UB}$.
Algorithm~\ref{a:lbub} sorts all node/link lengths in descending order, and then tries to find a \emph{critical length} $\zeta_{[i-1]}$ such that $G_{[i-1]}$ still admits a feasible solution to Program~\eqref{fml:ored} with $\eta^{[i-1]}_{st} \ge \Delta_{st}$, but $G_{[i]}$ does not.
This means at least one node/link with length no less than $\zeta_{[i-1]}$ is needed to satisfy the EDR bound of $\Delta_{st}$.
Consequently, the optimal $\mathbf{Z}^*$ must be at least $\zeta_{[i-1]}$ as a lower bound.
Besides, since there is a feasible solution in $G_{[i-1]}$, and each path can have at most $|N|-1$ links and $|N|-2$ intermediate nodes, the feasible solution has a maximum path length of $(2|N|-3) \cdot \zeta_{[i-1]}$ as all nodes and links in $G_{[i-1]}$ have lengths at most $\zeta_{[i-1]}$.
This shows that $(2|N|-3) \cdot \zeta_{[i-1]}$ is an upper bound on $\mathbf{Z}^*$.
The gap between the above pair of bounds is a multiplicative factor of $\text{UB} / \text{LB} = 2|N|-3 \in O(|N|)$.

\begin{algorithm}[t]
\smallfont
\caption{\mbox{Finding lower and upper bounds on $\mathbf{Z}^*$}}
\label{a:lbub}
\KwIn{Network $G$}
\KwOut{Lower and upper bounds $(\text{LB}, \text{UB})$ on $\mathbf{Z}^*$}
Sort node/link lengths in $\{ \zeta_l \,|\, l \in L \} \cup \{ \zeta_n \,|\, n \in N \}$ in descending order as $\mathcal{Z} = (\zeta_{[1]}, \zeta_{[2]}, \dots)$\;
\For{$\zeta_{[i]} \in \mathcal{Z}$ in sorted order}{
    Construct graph $G_{[i]}$ by pruning all nodes and links with lengths greater than $\zeta_{[i]}$ in $G$\;
    Solve Program~\eqref{fml:ored} on $G_{[i]}$ for $\eta^{[i]}_{st}$\;
    \lIf{Infeasible or $\eta^{[i]}_{st} < \Delta_{st}$}{\textbf{break}}
}
\Return{$(\text{\normalfont LB} = \zeta_{[i-1]}, \text{\normalfont UB} = (2|N|-3)\zeta_{[i-1]})$.}
\end{algorithm}

\subsection{Two-stage bisection search Algorithm}
\noindent
\noindent
After finding $\text{LB}$ and $\text{UB}$ with Algorithm~\ref{a:lbub}, we can apply a bisection search on the range $[\text{LB}, \text{UB}]$ to find an approximator of $\mathbf{Z}^*$.
Each time we define a bound $\mathbf{Z}\! =\! (\text{LB} \! + \! \text{UB})/2$, and call $\text{TEST}(\mathbf{Z}, \varepsilon)$.
If $\text{TEST}(\mathbf{Z}, \varepsilon)$ outputs $true$, we narrow the gap by setting $\text{UB} \!\leftarrow\! (1 + \varepsilon) \mathbf{Z}$; otherwise, we set $\text{LB}\! \leftarrow \!\mathbf{Z}$.
To achieve the desired accuracy, it takes at least $O(\log (\text{UB}\! -\! \text{LB}))\! =\! O(\log (|N|\zeta_{[i-1]}))$ search iterations (where $\zeta_{[i-1]}$ is the critical length in Algorithm~\ref{a:lbub}), each making a call to $\text{TEST}(\mathbf{Z}, \varepsilon)$ which solves an LP of size $O ( |N|^3 ( {|N|}/{\varepsilon} )^2 )$.

In Algorithm~\ref{a:fptas}, we propose an improved \emph{2-stage search algorithm}, which reduces the asymptotic search complexity and sizes of the LPs solved in most search iterations.
In Stage-1 (Lines~\ref{a:fptas:ln:s1s}--\ref{a:fptas:ln:s1e}), a \emph{multiplicative bisection} (bisection in the logarithmic scale) is done on $[\text{LB}, \text{UB}]$, where each time an $\varepsilon = 1$ is used in approximate testing.
By Lemma~\ref{l:test}, $\text{\normalfont TEST}(\mathbf{Z}, 1)$ returning $false$ means $\mathbf{Z}^* > \mathbf{Z}$ and hence LB is increased to $\mathbf{Z}$; $\text{\normalfont TEST}(\mathbf{Z}, 1)$ returning $true$ means $\mathbf{Z}^* \le (1 + \varepsilon) \cdot \mathbf{Z} = 2\mathbf{Z}$ and hence UB is decreased to $2\mathbf{Z}$.
Stage-1 ends when LB and UB are within a constant factor of each other, such as $\text{UB}/\text{LB} \le 4$.

In Stage-2, instead of doing bisection directly on $[\text{LB}, \text{UB}]$, we do a bisection on the quantized bounds $[Z_{\text{LB}}, Z_{\text{UB}}]$.
We fix the quantization factor $\theta = (2|N|-3)/(\varepsilon \text{LB})$, and only vary the quantized path length bound $Z$.
The main purpose of this construction is to utilize quantization to naturally reduce the number of search iterations to achieve the desired accuracy defined by $\varepsilon$.
Since LB and UB are within a constant ratio of each other, the quantized length bound $Z_{\text{UB}} \in O(|N|/\varepsilon)$, and hence $O(\log (|N|/\varepsilon))$ search iterations are needed to search all integers between $Z_{\text{LB}}$ and $Z_{\text{UB}}$.
This makes the search complexity no longer related to the critical length $\zeta_{[i-1]}$ as in the naive bisection search.
Let $\mathbf{Z}^{\theta}$ be the minimum longest path length for \emph{quantized OF-RED (QOF-RED)} with $\theta$.
The lemmas below show this quantized bisection is as effective as the bisection search on the original bounds $[\text{LB}, \text{UB}]$. 

\begin{algorithm}[t]
\smallfont
\caption{\mbox{$2$-stage Bisection for Approximate OF-RED}}
\label{a:fptas}
\KwIn{Network $G$, search accuracy parameter $\varepsilon$}
\KwOut{Eflow with maximum path length $\mathbf{Z}^{+}$}
Call Algorithm~\ref{a:lbub} to find LB and UB on $\mathbf{Z}^*$\;\label{a:fptas:ln:1}
\While(\tcp*[f]{Stage-1}){$\text{\normalfont UB} > 4 \cdot \text{\normalfont LB}$}{\label{a:fptas:ln:s1s}
    $\mathbf{Z} = \sqrt{ (\text{\normalfont UB} \cdot \text{\normalfont LB} ) / {2}}$\;
    \lIf{$\text{\normalfont TEST}(\mathbf{Z}, 1) = false$}{$\text{\normalfont LB} \leftarrow \mathbf{Z}$}
    \lElse{$\text{\normalfont UB} \leftarrow 2 \cdot\mathbf{Z}$}
}\label{a:fptas:ln:s1e}
\mbox{$\theta \!\leftarrow\! \frac{2|N|-3}{\varepsilon \text{LB}}$, $Z_{\text{LB}} \!\leftarrow\! \lfloor \theta \text{LB} \rfloor$, $Z_{\text{UB}} \!\leftarrow\! \lfloor \theta \text{UB} \rfloor \!+\! (2|N| \!-\! 3)$}\;
\While(\tcp*[f]{Stage-2}){$Z_{\text{\normalfont UB}} > Z_{\text{\normalfont LB}} + 1$}{\label{a:fptas:ln:s2s}
    $Z \leftarrow \lfloor ( Z_{\text{LB}} + Z_{\text{UB}} ) / 2 \rfloor$\;
    Solve Program~\eqref{fml:qofred} with $\theta$ and $Z$, and get $\eta^{Z}_{st}$\;
    \lIf{Program~\eqref{fml:qofred} is feasible AND $\eta^{Z}_{st} \ge \Delta_{st}$}{$Z_{\text{UB}}\leftarrow Z$}
    \lElse{$Z_{\text{LB}}\leftarrow Z$}
}\label{a:fptas:ln:s2e}
\Return{last feasible solution with max path length $\mathbf{Z}^+$}
\end{algorithm}

\begin{lemma}
\label{l:quant1}
$\lfloor \theta \text{\normalfont LB} \rfloor \le \mathbf{Z}^{\theta} \le \lfloor \theta \text{\normalfont UB} \rfloor + (2|N|-3)$.
\myendbox
\end{lemma}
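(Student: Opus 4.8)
The plan is to prove the two inequalities separately by comparing the non-quantized problem OF-RED with its quantized counterpart QOF-RED, exploiting the fact that both share the \emph{same feasible set}---the eflows (equivalently, collections of pflows by Theorem~\ref{th:decomp}) whose total expected EDR is at least $\Delta_{st}$---and differ only in whether the longest-path objective is measured by $\zeta(\cdot)$ or by $\zeta^\theta(\cdot)$. Throughout I would rely on Lemma~\ref{l:quant}, which bounds $\zeta^\theta(p)$ between $\theta\cdot\zeta(p)$ and $\lfloor\theta\cdot\zeta(p)\rfloor + (2|N|-3)$ for every path $p$, together with the bounds $\text{LB} \le \mathbf{Z}^* \le \text{UB}$ guaranteed by Algorithm~\ref{a:lbub}, and the observation that each quantized length $\zeta_i^\theta$ is a positive integer, so every quantized path length $\zeta^\theta(p)$---and hence $\mathbf{Z}^\theta$---is an integer.

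For the \textbf{upper bound}, I would start from an optimal OF-RED solution, whose longest non-quantized path has length exactly $\mathbf{Z}^*$; in particular every pflow $p$ it uses satisfies $\zeta(p) \le \mathbf{Z}^* \le \text{UB}$. This solution is also feasible for QOF-RED, since feasibility depends only on the EDR bound, which quantization leaves untouched. Applying the right half of Lemma~\ref{l:quant} path-by-path and using monotonicity of the floor gives $\zeta^\theta(p) \le \lfloor\theta\,\zeta(p)\rfloor + (2|N|-3) \le \lfloor\theta\,\text{UB}\rfloor + (2|N|-3)$ for each such $p$; taking the maximum over the solution's pflows bounds its longest quantized path by $\lfloor\theta\,\text{UB}\rfloor + (2|N|-3)$. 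Since $\mathbf{Z}^\theta$ is the \emph{minimum} longest quantized path over all feasible solutions, it cannot exceed this particular solution's value, yielding $\mathbf{Z}^\theta \le \lfloor\theta\,\text{UB}\rfloor + (2|N|-3)$.

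For the \textbf{lower bound}, I would argue in the opposite direction, starting from an optimal QOF-RED solution whose longest quantized path has length $\mathbf{Z}^\theta$. Because it is also feasible for OF-RED, its longest non-quantized path must be at least $\mathbf{Z}^* \ge \text{LB}$; let $p^*$ be the pflow realizing this non-quantized maximum, so $\zeta(p^*) \ge \text{LB}$. The left half of Lemma~\ref{l:quant} then gives $\zeta^\theta(p^*) \ge \theta\,\zeta(p^*) \ge \theta\,\text{LB}$, and since $\mathbf{Z}^\theta \ge \zeta^\theta(p^*)$ by definition of the longest quantized path, we obtain $\mathbf{Z}^\theta \ge \theta\,\text{LB}$. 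Finally, using integrality of $\mathbf{Z}^\theta$, we conclude $\mathbf{Z}^\theta \ge \lceil\theta\,\text{LB}\rceil \ge \lfloor\theta\,\text{LB}\rfloor$, completing the argument.

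The step I expect to require the most care is the implicit claim that OF-RED and QOF-RED have identical feasible sets, so that an optimizer of one can be plugged in as a feasible (though not necessarily optimal) point of the other; this hinges on the EDR constraint alone governing feasibility and on the pflow decomposition of Theorem~\ref{th:decomp} preserving it. A secondary subtlety is keeping the two maxima straight: in the lower-bound argument the pflow $p^*$ that maximizes the \emph{non-quantized} length need not coincide with the one maximizing the \emph{quantized} length, but the bound still goes through because $\mathbf{Z}^\theta$ dominates $\zeta^\theta(p^*)$ for \emph{every} pflow in the solution.
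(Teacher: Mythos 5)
Your proposal is correct and follows essentially the same route as the paper's proof: both directions exploit that OF-RED and QOF-RED share the same feasible set, apply Lemma~\ref{l:quant} to the relevant paths of an optimizer of one problem viewed as a feasible point of the other, and invoke $\text{LB} \le \mathbf{Z}^* \le \text{UB}$ from Algorithm~\ref{a:lbub}; your care in distinguishing the quantized-longest from the non-quantized-longest path mirrors the paper's use of the two paths $p'$ and $p'_{\theta}$. The only cosmetic difference is your final appeal to integrality of $\mathbf{Z}^{\theta}$, which is unnecessary since $\mathbf{Z}^{\theta} \ge \theta\,\text{LB} \ge \lfloor \theta\,\text{LB} \rfloor$ already holds.
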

\begin{lemma}
\label{l:quant2}
$\mathbf{Z}^{\theta} \le \theta \cdot (1 + \varepsilon) \cdot \mathbf{Z}^*$.
\myendbox
\end{lemma}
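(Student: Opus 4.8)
The plan is to exhibit the optimal OF-RED solution as a feasible solution to quantized OF-RED whose maximum \emph{quantized} longest path length is bounded by $\theta(1+\varepsilon)\mathbf{Z}^*$. Since $\mathbf{Z}^{\theta}$ is by definition the \emph{minimum} achievable quantized longest path length over all QOF-RED-feasible solutions, exhibiting any single such witness immediately yields the upper bound claimed.

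First I would fix an optimal solution to OF-RED: a set of pflows meeting the expected EDR bound $\Delta_{st}$ whose maximum (non-quantized) path length equals $\mathbf{Z}^*$, so that every constituent path $p$ satisfies $\zeta(p) \le \mathbf{Z}^*$. Because quantization only rescales the length parameters and leaves the generation and swapping rates untouched, the EDR of this collection is unchanged; hence by Theorem~\ref{th:qofred} the same collection of pflows remains feasible for QOF-RED with factor $\theta$, making it a valid witness for the minimization defining $\mathbf{Z}^{\theta}$.

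Next I would bound the quantized length of each such path. Applying the right-hand inequality of Lemma~\ref{l:quant} and dropping the floor gives $\zeta^{\theta}(p) \le \theta\zeta(p) + (2|N|-3) \le \theta\mathbf{Z}^* + (2|N|-3)$. The crucial remaining step is to absorb the additive $(2|N|-3)$ term into a multiplicative $(1+\varepsilon)$ factor. Using the Stage-2 choice $\theta = (2|N|-3)/(\varepsilon\,\text{LB})$, we have $(2|N|-3) = \theta\varepsilon\,\text{LB}$, and since $\text{LB} \le \mathbf{Z}^*$ this is at most $\theta\varepsilon\mathbf{Z}^*$. Substituting yields $\zeta^{\theta}(p) \le \theta\mathbf{Z}^* + \theta\varepsilon\mathbf{Z}^* = \theta(1+\varepsilon)\mathbf{Z}^*$ for every path $p$ in the solution.

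Taking the maximum over all paths then shows this feasible QOF-RED solution has longest quantized path length at most $\theta(1+\varepsilon)\mathbf{Z}^*$, and therefore $\mathbf{Z}^{\theta} \le \theta(1+\varepsilon)\mathbf{Z}^*$. The two points I would be most careful about are: (i) that the optimal OF-RED pflow decomposition genuinely transfers to a \emph{feasible} QOF-RED instance, so that it is admissible in the minimization for $\mathbf{Z}^{\theta}$---this follows because quantization alters only the lengths and preserves EDR-feasibility via Theorem~\ref{th:qofred}; and (ii) that the value of $\text{LB}$ used to set $\theta$ in Stage-2 still obeys $\text{LB}\le\mathbf{Z}^*$---this is the invariant initially established by Algorithm~\ref{a:lbub} and preserved by every Stage-1 update, since $\text{LB}$ is raised to $\mathbf{Z}$ only when $\text{TEST}(\mathbf{Z},1)=false$, which by Lemma~\ref{l:test} certifies $\mathbf{Z}^*>\mathbf{Z}$. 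This second point is the main subtlety, as the bound is otherwise a short calculation once the correct witness and the $\text{LB}\le\mathbf{Z}^*$ invariant are in hand.
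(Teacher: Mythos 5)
Your proof is correct and follows essentially the same route as the paper's: exhibit the optimal OF-RED solution as a QOF-RED witness, bound each quantized path length via Lemma~\ref{l:quant}, and absorb the additive $(2|N|-3)$ term using $\theta = (2|N|-3)/(\varepsilon\,\text{LB})$ together with $\text{LB} \le \mathbf{Z}^*$. The only difference is cosmetic---the paper bounds the witness's quantized-longest path $p_\theta$ against its non-quantized-longest path $p$, whereas you bound every path uniformly and take the maximum---and your explicit justification of the $\text{LB} \le \mathbf{Z}^*$ invariant is a point the paper uses silently.
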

\begin{proof}
Note that a feasible solution to OF-RED indicates a feasible solution to QOF-RED, and vice versa.
Given the optimal solution to original OF-RED with objective $\mathbf{Z}^*$, let $p$ be its longest entanglement path such that $\zeta(p) = \mathbf{Z}^*$, and let $p_{\theta}$ be its longest entanglement path with quantization.
By Lemma~\ref{l:quant}, $\zeta^{\theta}(p_{\theta}) \le \lfloor \theta \zeta(p_{\theta}) \rfloor + (2|N|-3) \le \lfloor \theta \zeta(p) \rfloor + (2|N|-3) \le \lfloor \theta \text{\normalfont UB} \rfloor + (2|N|-3)$.
This proves the right-hand side of Lemma~\ref{l:quant1}, as $\mathbf{Z}^{\theta}$ is optimal and hence $\mathbf{Z}^{\theta} \le \zeta^{\theta}(p_{\theta})$.
Further, since $\zeta(p) = \mathbf{Z}^*$, we have $\zeta^{\theta}(p_{\theta}) \le \theta \zeta(p) + (2|N|-3) = \theta (\mathbf{Z}^* + (2|N|-3)/\theta) = \theta (\mathbf{Z}^* + \varepsilon \text{\normalfont LB}) \le \theta \cdot (1 + \varepsilon) \cdot \mathbf{Z}^*$, and hence $\mathbf{Z}^{\theta} \le \zeta^{\theta}(p_{\theta}) \le \theta \cdot (1 + \varepsilon) \cdot \mathbf{Z}^*$.

Now consider the optimal solution to QOF-RED, and let $p'_{\theta}$ be its quantized longest entanglement path, where $\zeta^{\theta}(p'_{\theta}) = \mathbf{Z}^{\theta}$.
Let $p'$ be its longest entanglement path without quantization.
Since this solution is also feasible to OF-RED, we have $\theta \text{\normalfont LB} \le \theta \mathbf{Z}^* \le \theta \zeta(p')$.
By Lemma~\ref{l:quant}, we then have $\theta \zeta(p') \le \zeta^{\theta}(p') \le \zeta^{\theta}(p'_{\theta}) = \mathbf{Z}^{\theta}$.
Hence $\mathbf{Z}^{\theta} \ge \lfloor \theta \text{\normalfont LB} \rfloor$.
\end{proof}

Below, Theorem 4 states our main result. %, which follows from Lemmas 2-5. Detailed proofs of Theorem 4 and Corollary 1 are delegated to Appendices to facilitate reading.

\begin{theorem}
\label{th:fptas}
Given accuracy parameter $\varepsilon$, Algorithm~\ref{a:fptas} finds a $(1+\varepsilon)$-approximation of the optimal OF-RED path length value $\mathbf{Z}^*$, within time polynomial to $|N|$ and $1/\varepsilon$.
\myendbox
\end{theorem}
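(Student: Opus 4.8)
The plan is to prove Theorem~\ref{th:fptas} in two halves: correctness of the $(1+\varepsilon)$-approximation, and the polynomial running-time bound. Both halves hinge on the invariant $\text{LB} \le \mathbf{Z}^* \le \text{UB}$, which Algorithm~\ref{a:lbub} establishes at Line~\ref{a:fptas:ln:1} and which I would show is preserved by every update in both stages. The approximation guarantee then reduces to converting a near-optimal \emph{quantized} bound into a near-optimal original bound through Lemmas~\ref{l:quant} and~\ref{l:quant2}; the time bound reduces to counting iterations and bounding the size of each LP solved via Program~\eqref{fml:qofred}.

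For Stage-1, I would invoke Lemma~\ref{l:test} with $\varepsilon=1$: a $false$ answer certifies $\mathbf{Z}^*>\mathbf{Z}$ so raising $\text{LB}$ to $\mathbf{Z}$ is safe, and a $true$ answer certifies $\mathbf{Z}^*\le 2\mathbf{Z}$ so lowering $\text{UB}$ to $2\mathbf{Z}$ is safe. The crux is the contraction of the ratio $r=\text{UB}/\text{LB}$. With the midpoint $\mathbf{Z}=\sqrt{\text{UB}\cdot\text{LB}/2}$, a direct substitution shows that \emph{either} branch replaces $r$ by $\sqrt{2r}$; writing $s=\log_2 r$ turns this into the affine map $s\mapsto(1+s)/2$ with fixed point $1$, so $s$ halves its distance to $1$ each step. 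Starting from $r_0=2|N|-3$, i.e.\ $s_0=O(\log|N|)$, the stopping condition $r\le 4$ is met after $O(\log\log|N|)$ iterations, each calling $\text{TEST}(\mathbf{Z},1)$ which by Line~\ref{a:test:ln:val} solves Program~\eqref{fml:qofred} with $Z=O(|N|)$, an LP of size $O(|N|^5)$ by Proposition~\ref{prop:qofred-1}.

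For Stage-2, fixing $\theta=(2|N|-3)/(\varepsilon\,\text{LB})$, Lemma~\ref{l:quant1} places the optimal quantized length $\mathbf{Z}^\theta$ inside $[Z_{\text{LB}},Z_{\text{UB}}]$, and Theorem~\ref{th:qofred} makes FORED with integer bound $Z$ feasible-and-satisfying-$\Delta_{st}$ exactly when $\mathbf{Z}^\theta\le Z$; hence the integer bisection on $[Z_{\text{LB}},Z_{\text{UB}}]$ recovers $\mathbf{Z}^\theta$ as the least feasible bound. Because Stage-1 guarantees $\text{UB}/\text{LB}\le 4$, both endpoints satisfy $Z_{\text{LB}},Z_{\text{UB}}=O(|N|/\varepsilon)$, so the search terminates in $O(\log(|N|/\varepsilon))$ iterations, each an LP of size $O(|N|^5/\varepsilon^2)$. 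To finish the guarantee, let $p^+$ be the longest (unquantized) path of the returned eflow, so $\zeta(p^+)=\mathbf{Z}^+$; since every quantized path length in that eflow is at most $\mathbf{Z}^\theta$, the left inequality of Lemma~\ref{l:quant} gives $\theta\,\mathbf{Z}^+\le\zeta^\theta(p^+)\le\mathbf{Z}^\theta$, and Lemma~\ref{l:quant2} yields $\mathbf{Z}^+\le\mathbf{Z}^\theta/\theta\le(1+\varepsilon)\,\mathbf{Z}^*$, as claimed.

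Summing the costs of Algorithm~\ref{a:lbub} (which solves $O(|N|^2)$ polynomial-size instances of Program~\eqref{fml:ored}), Stage-1, and Stage-2 gives a total running time polynomial in $|N|$ and $1/\varepsilon$. The step I expect to be the main obstacle is not any individual inequality but making the two stages fit together so that the FPTAS bound survives: the whole point of the coarse Stage-1 is to shrink $\text{UB}/\text{LB}$ to a constant so that, once $\theta$ is pinned to $\text{LB}$, the quantized range and the per-iteration LP stay $O(|N|/\varepsilon)$ and $O(|N|^5/\varepsilon^2)$ rather than scaling with the critical length $\zeta_{[i-1]}$ or with $\mathbf{Z}^*$. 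I would also take care over the boundary case $\mathbf{Z}^\theta=\lfloor\theta\,\text{LB}\rfloor$, where the bisection can return one quantization unit $1/\theta$ above $\mathbf{Z}^\theta$; here $1/\theta=\varepsilon\,\text{LB}/(2|N|-3)\le\varepsilon\,\mathbf{Z}^*/(2|N|-3)$ keeps the final bound within $(1+\varepsilon)\,\mathbf{Z}^*$.
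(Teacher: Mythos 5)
Your proposal is correct and follows essentially the same route as the paper's proof: the same invariant $\text{LB}\le\mathbf{Z}^*\le\text{UB}$, the same use of Lemmas~\ref{l:test}, \ref{l:quant}, \ref{l:quant1} and~\ref{l:quant2}, the same $O(\log\log|N|)$ Stage-1 contraction (your log-scale affine map $s\mapsto(1+s)/2$ is just the paper's recursion $\pi_{[j]}=\sqrt{2\pi_{[j-1]}}$ unrolled), and the same iteration and LP-size counts $O(|N|^5)$ and $O(|N|^5/\varepsilon^2)$. If anything, your explicit chain $\theta\,\mathbf{Z}^+\le\zeta^\theta(p^+)\le\mathbf{Z}^\theta\le\theta(1+\varepsilon)\mathbf{Z}^*$ and your attention to the boundary case $\mathbf{Z}^\theta=\lfloor\theta\,\text{LB}\rfloor$ (where, strictly, the extra quantization unit inflates the ratio to $1+\varepsilon+O(\varepsilon/|N|)$, absorbed by rescaling $\varepsilon$) fill in details that the paper compresses into the single line ``the approximation ratio directly comes from Lemma~\ref{l:quant2}.''
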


\begin{proof}
The approximation ratio directly comes from Lemma~\ref{l:quant2}.
Let $T(x)$ be the time for solving an LP with $x$ variables.
First, Algorithm~\ref{a:lbub} finds $[\text{LB}, \text{UB}]$ on $\mathbf{Z}^*$ in up to $|\mathcal{Z} \!| =\! |N| + |L|$ iterations, each solving Program~\eqref{fml:ored} with $O(|N|^3)$ variables in $O(T(|N|^3))$ time.
For Stage-1 bisection of Algorithm~\ref{a:fptas}, let $\pi_{[j]}$ be the ratio $\text{UB} / \text{LB}$ after the $j$-th iteration.
Initially $\pi_{[0]}\! =\! 2|N|\!-3$ due to $[\text{LB}, \text{UB}]$ bound by Algorithm~\ref{a:lbub}.
After each iteration $j$, $\pi_{[j]}\! =\! \sqrt{2\pi_{[j-1]}}$ based on how $\mathbf{Z}$ is computed.
Let $J$ be index of the last iteration, and apply the above recursively, then we have $\pi_{[J]}\!\! =\! 2^{1/2+1/4+\cdots+1/2^J} \!\cdot\! \pi_{[0]}^{1/2^J} \!\!\le 2\! \cdot \pi_{[0]}^{1/2^J}\! =\! 2\! \cdot (2|N|-3)^{1/2^J}$.
As $\pi_{[J]} \le 4$ when Stage-1 ends, the total number of iterations is $O(\log \log |N|)$.
Each iteration solves Program~\eqref{fml:qofred} with $\varepsilon = 1$, and hence $Z \in O(|N|)$, resulting in $O(|N|^3Z^2) = O(|N|^5)$ variables.
Thus each iteration takes $O(T(|N|^5))$ time.
For Stage-2, the bisection is done on up to $Z_{\text{\normalfont UB}} \in O(\frac{|N|}{\varepsilon})$ integers, with up to $O(\log \frac{|N|}{\varepsilon})$ search iterations.
Each iteration solves Program~\eqref{fml:qofred} with $O(|N|^3Z_{\text{\normalfont UB}}^2) = O(\frac{|N|^5}{\varepsilon^2})$ variables, and thus takes $O(T(\frac{|N|^5}{\varepsilon^2}))$ time.
Summing up the above, the overall time complexity is $O( T(|N|^3) \cdot (|N| + |L|) + T(|N|^5) \cdot \log\log |N| + T(|N|^5 / \varepsilon^2) \cdot \log \frac{|N|}{\varepsilon} )$.
Since an LP can be solved in polynomial time~\cite{Ye1991}, the above time is polynomial to $|N|$ and $1/\varepsilon$.
\end{proof}

\subsection{Discussions}

\noindent
\textbf{\emph{Reducing running time:}}
Despite being polynomial-time, Algorithm~\ref{a:fptas} still has high complexity due to solving the large-size LPs.
There are several methods to reduce running time: 1) setting a loose $\varepsilon$; % or $\omega$;
2) applying heuristic quantization that works empirically; 3) developing heuristic algorithms to solve the quantized LP.
We will examine effect of the first method in our evaluation.
Considering that a quantum network is designed for long-term operations, the overhead of offline optimization can often be negligible. 
For instance, by spending minutes or hours to compute a high-EDR and high-fidelity entanglement distribution plan for a quantum key distribution (QKD) application~\cite{peev2009secoqc}, the plan could be executed and deliver largely improved performance over a period of weeks or months before offline maintenance/re-optimization is needed.
We will explore efficient real-time protocol design in future research.

\textbf{\emph{Entanglement distribution protocol:}}
While the goal of our algorithm is mainly to 1) compute theoretical upper bounds on the achievable EDR and worst-case fidelity and 2) characterize the EDR-fidelity trade-off, we note that the computed eflow can actually be implemented by a data plane protocol as shown in the Appendix.
To achieve the theoretical EDR and fidelity, quantum memories are required for performing post-selection and storage before further swapping.
In evaluation, we will use this protocol to characterize the EDR-fidelity trade-off in a simulated quantum network, and evaluate the performance of several state-of-the-art protocols with respect to the characterized trade-off.

\textbf{\emph{Entanglement purification and error correction:}}
This paper does not consider quantum operations that may improve fidelity during entanglement distribution, such as purification or quantum error correction (QEC).
Both purification and QEC require consuming multiple/many additional ebits or qubits in order to get one high-quality ebit.
This may significantly reduce the achievable EDR.
Both operations also require idealized quantum memories not only for storage but also for local quantum computation, which are far more complicated to design and implement.
With the abstractions developed in this paper, we wish to explore incorporating purification and QEC into end-to-end modeling in our future work.
%
%

%===========================================================%
%	Back to parent: main.tex										%
%===========================================================%

%===========================================================%
%	Parent: main.tex											%
%	Child: 4.2.tex												%
%===========================================================%

%===========================================================%
%	Back to parent: main.tex										%
%===========================================================%

%===========================================================%
%	Parent: main.tex											%
%	Child: 5.tex												%
%===========================================================%

\section{Performance Evaluation}
\label{pe}

\subsection{Evaluation Methodology}
\label{sec:em}
\noindent
To evaluate the performance of our proposed algorithm, we developed a discrete-time quantum network simulator and carried out simulations on different randomly generated topologies.
We used random Waxman graphs~\cite{waxman} with parameters $\alpha=\beta=0.8$.
Each node or link had a success probability of $0.5$ and $0.9$, respectively, and fidelity uniformly sampled from $[0.7, 0.95]$.
Each link had a capacity uniformly sampled from $[26, 35]$.
Parameters were selected as the same values as in existing work~\cite{zhao2022e2e}, except for the swapping success probability, which should not exceed $0.5$ due to the limitation of current BSM scheme with linear optics~\cite{bayerbach2023bell}.
In each setting, we generated $5$ graphs each with $15$ nodes and $3$ random SD pairs, except in Fig.~\ref{fig:trade_off} where we characterized the entire trade-off curve for one SD pair in a single graph.
Results were averaged over all runs in the same setting to average-out random noise.

Our simulator was based on a time-slotted model to be compatible to existing algorithms, though our data plane protocol (see Appendix) does not require network-wide synchronization.
Linear programs were solved by Gurobi~\cite{gurobi}.
Simulations were ran on a Linux desktop with a $12$-core $4$GHz CPU and $256$GB memory.
In each simulation, we first ran our proposed FPTAS algorithm or a comparison algorithm for the SD pair.
Based on the solution, we then simulated entanglement generation, swapping and/or queuing for {$1000$} time slots.
The following entanglement routing/distribution algorithms were compared:
\begin{itemize}
    \item \textbf{FENDI:}
    Our proposed FPTAS, with the solution executed using the post-selection-and-storage protocol in Appendix.
    \item \textbf{ORED:}
    The fidelity-agnostic ORED algorithm, with a similar post-selection-and-storage protocol in~\cite{Dai2020a}.
    \item \textbf{E2E-F:}
    End-to-end fidelity-aware entanglement routing in~\cite{zhao2022e2e}, \emph{without purification} for fair comparison.
    \item \textbf{QPASS:}
    Fidelity-agnostic entanglement routing in~\cite{shi2020concurrent}.
\end{itemize}
For our algorithm, we set $\varepsilon = 0.5$ by default.
For QPASS and E2E-F, we set the number of paths $K = 30$.
Since E2E-F and QPASS are \emph{entanglement routing} algorithms for a bufferless quantum network, we adapted our simulator to discard all saved ebits after one time slot when simulating them.

The following metrics were used for evaluation.
The \textbf{\emph{minimum fidelity}} and \textbf{\emph{average fidelity}} measure the lowest and average fidelity values of all end-to-end entanglements.
The \textbf{\emph{EDR satisfaction ratio}} measures the fraction of simulation runs where the EDR bound is met.
The \textbf{\emph{running time}} measures the average time spent on running each \emph{control plane} algorithm.

\begin{figure}[t]
\centering
\includegraphics[width=0.3\textwidth]{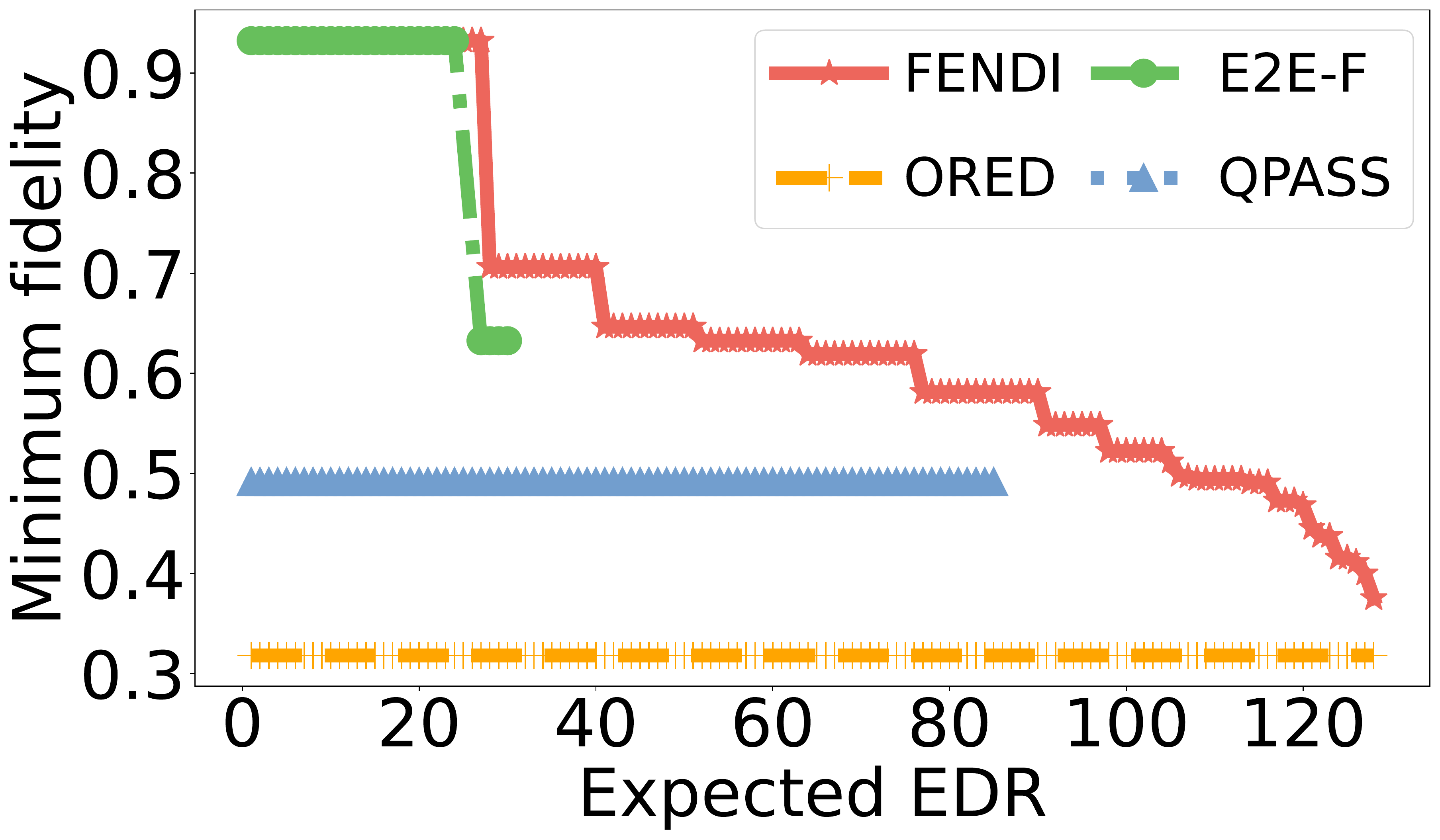}
\caption{The trade-off between worst-case fidelity and expected EDR for compared algorithms.}
\label{fig:trade_off}
\end{figure}

\begin{figure*}[t]
\vspace{-1em}
\centering
\subfloat[Minimum end-to-end fidelity]{\includegraphics[width=0.3\textwidth]{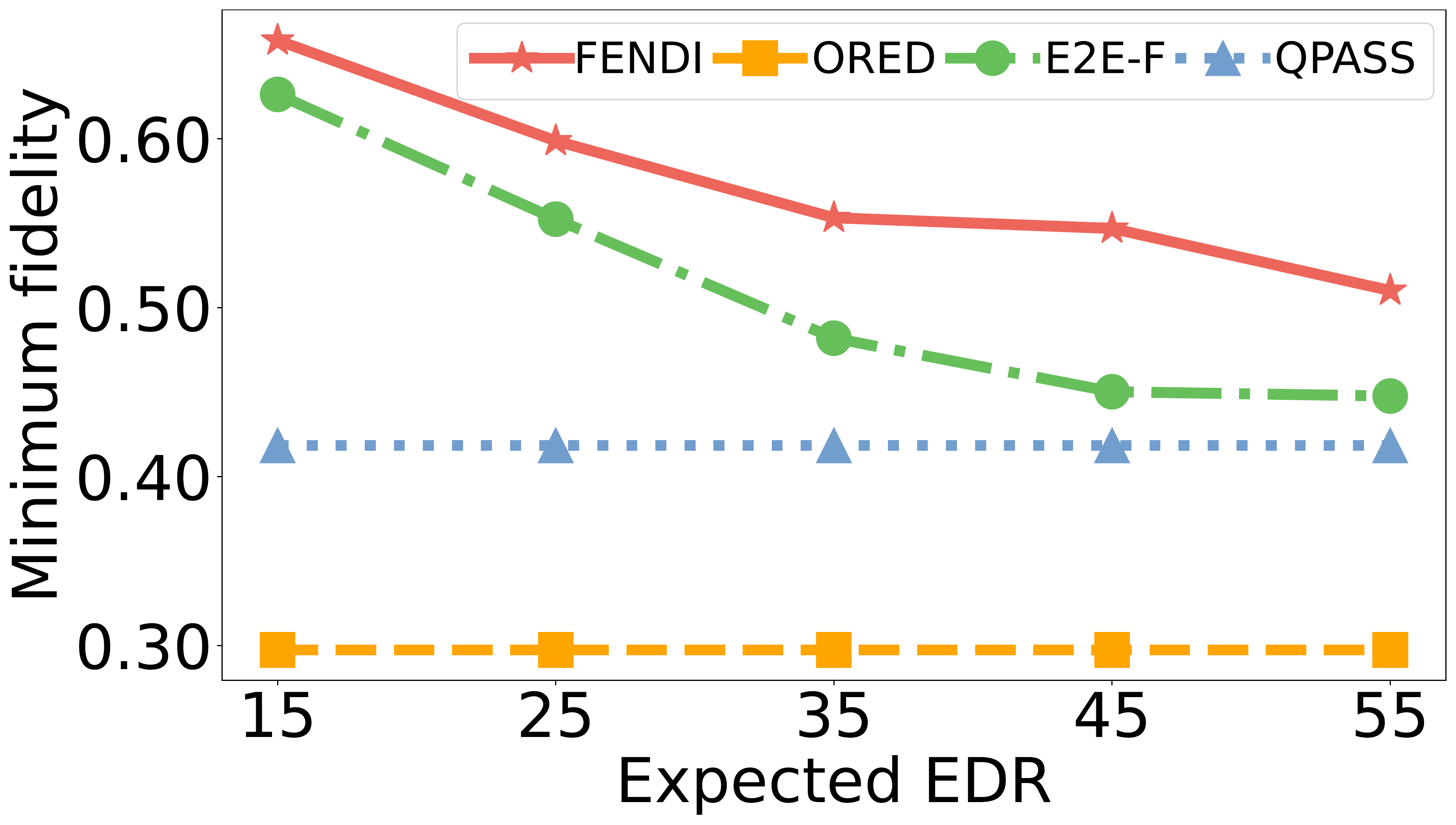}
\label{fig:algs_comp:algs_worst_fidelity}}
\hfil
\subfloat[Average end-to-end fidelity]{\includegraphics[width=0.3\textwidth]{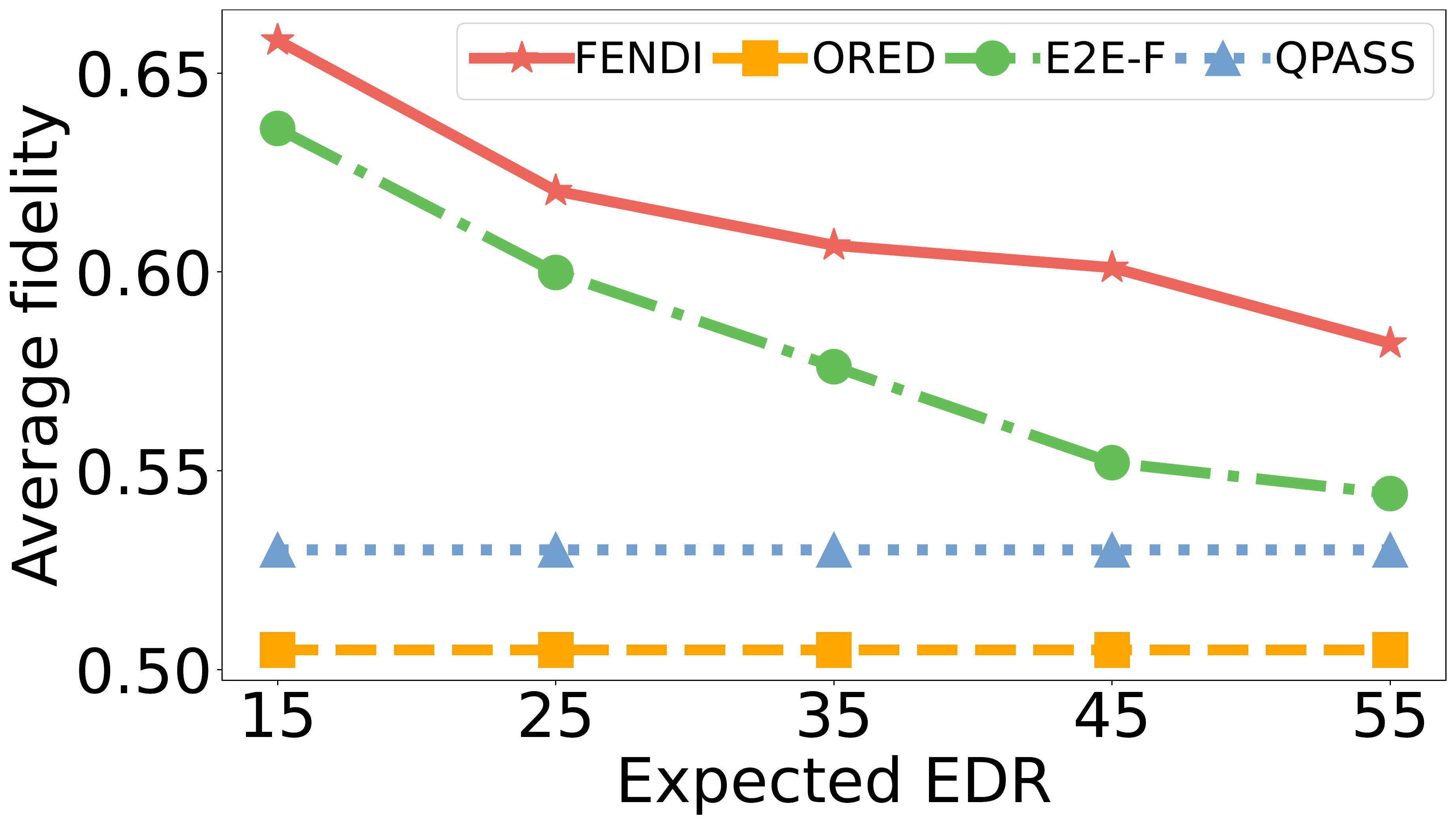}
\label{fig:algs_comp:algs_average_fidelity}}
\hfil
\subfloat[EDR satisfaction ratio]{\includegraphics[width=0.3\textwidth]{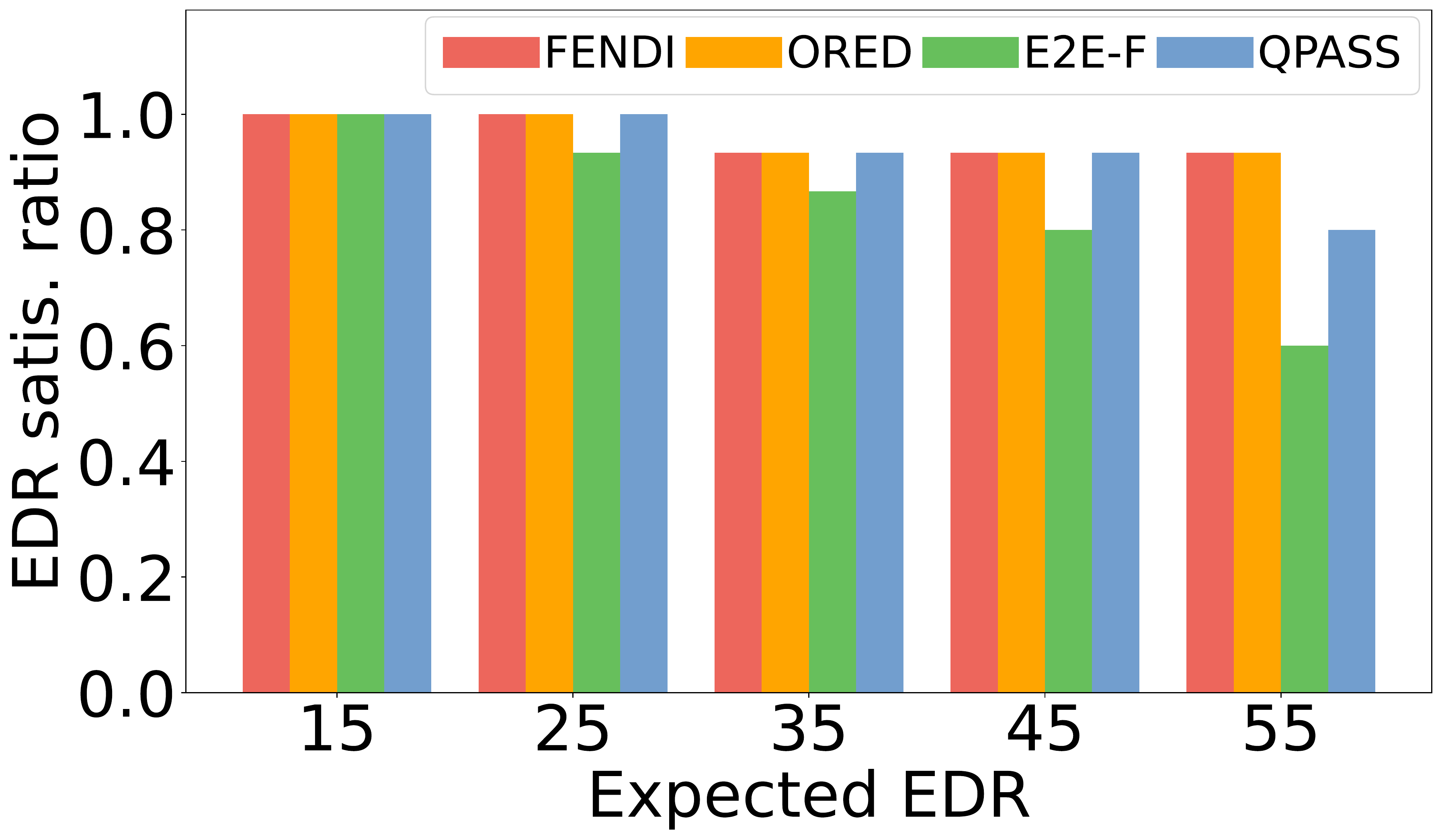}
\label{fig:algs_comp:algs_edr_satisfication_rate}}
\caption{Comparison between FENDI and state-of-the-art algorithms}
\label{fig:algs_comp}
\end{figure*}

\begin{figure*}[ht!]
\vspace{-0.5em}
\centering
\subfloat[Running time]{\includegraphics[width=0.3\textwidth]{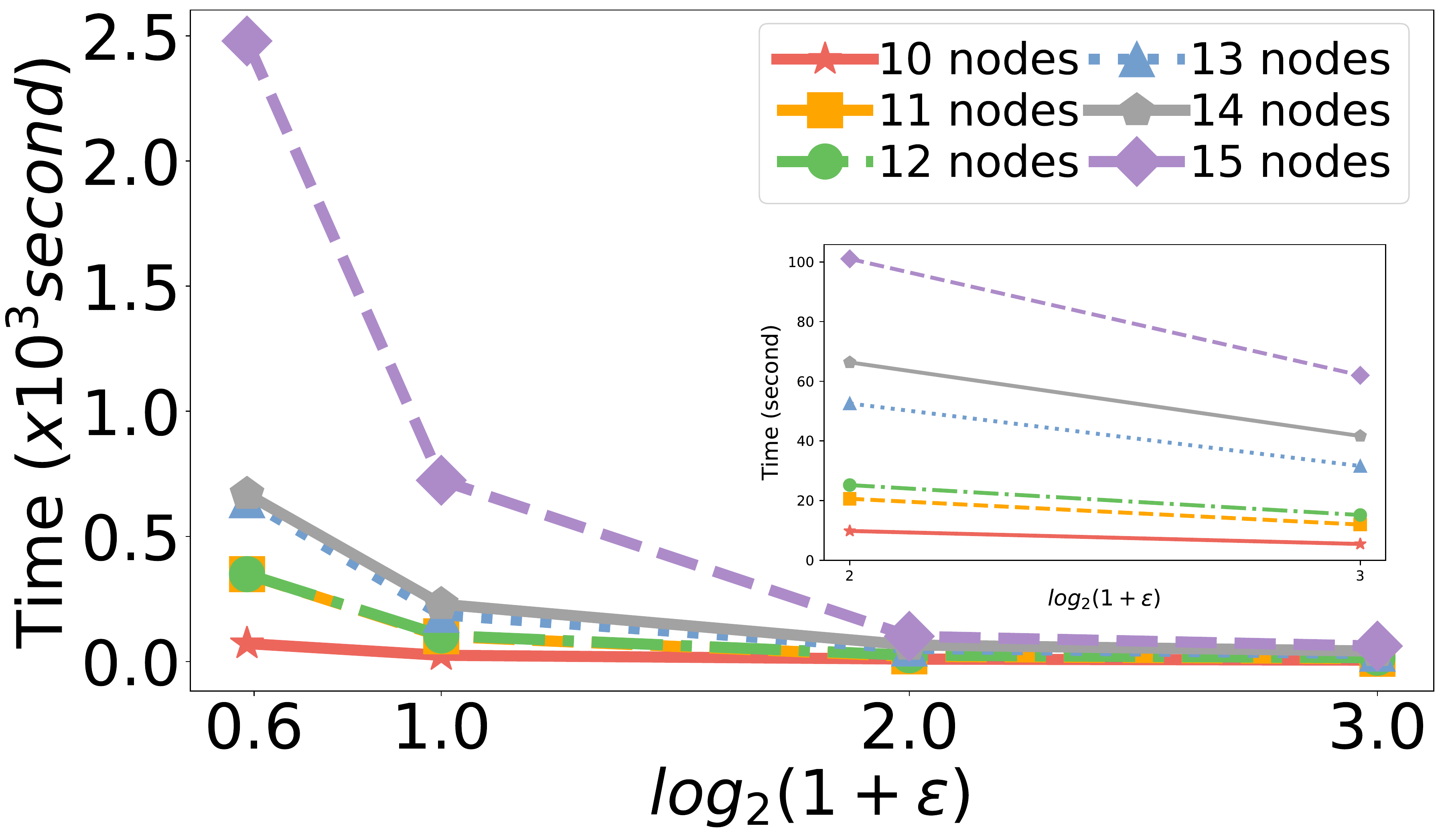}
\label{fig:appx_comp:running_time}}
\hfil
\subfloat[Minimum end-to-end fidelity]{\includegraphics[width=0.3\textwidth]{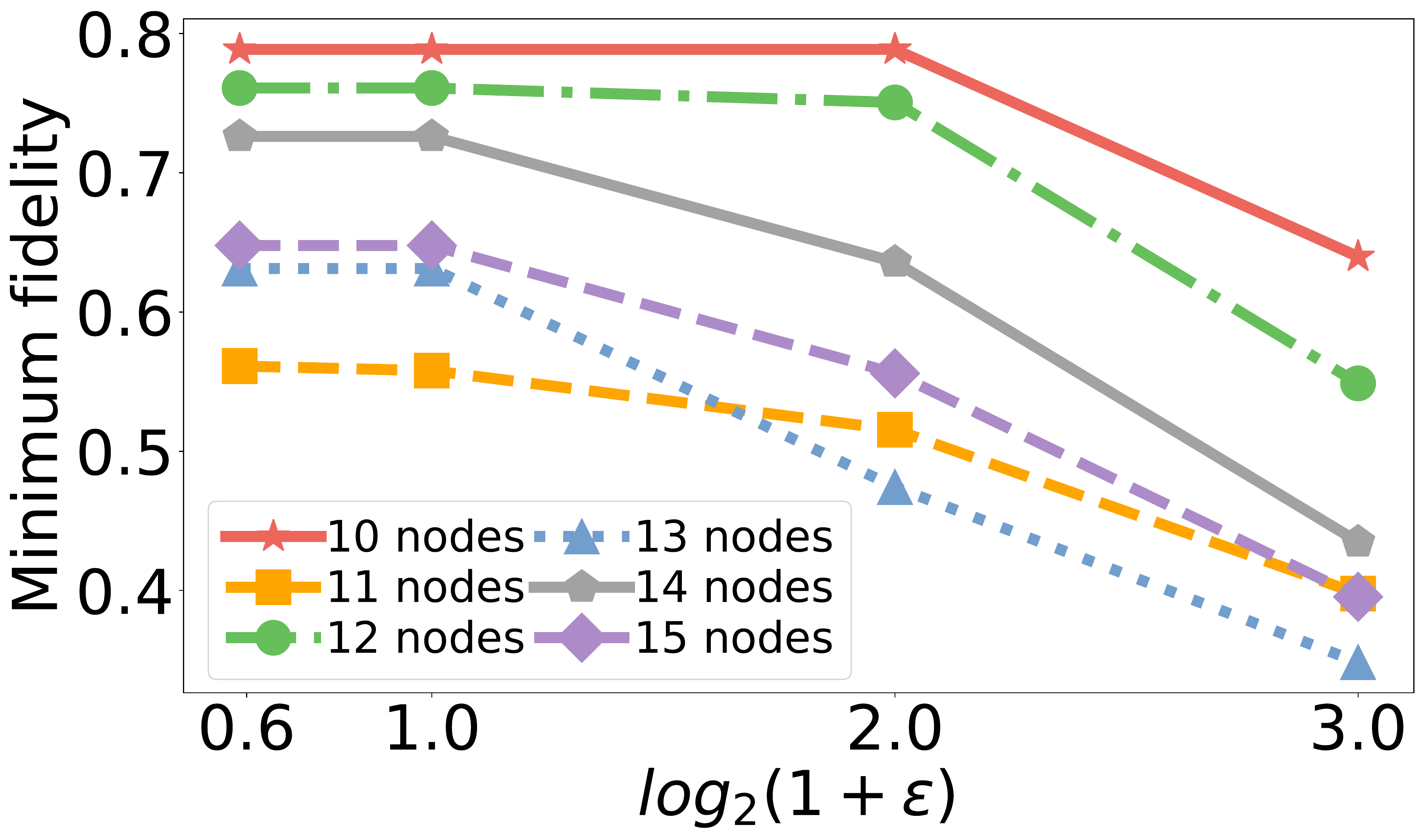}
\label{fig:appx_comp:minimum_fidelity}}
\hfil
\subfloat[Average end-to-end fidelity]{\includegraphics[width=0.3\textwidth]{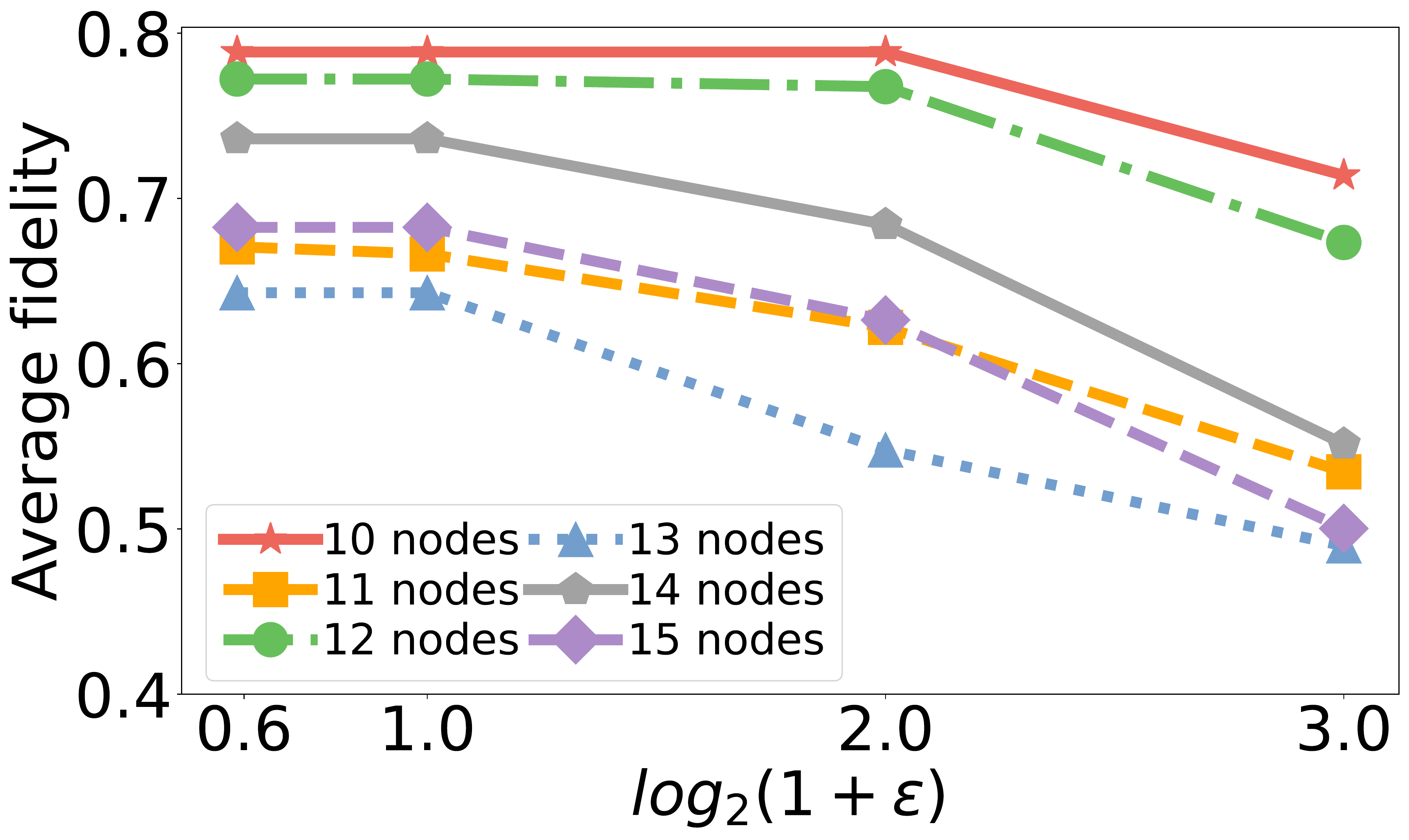}
\label{fig:appx_comp:avg_fidelity}}
\hfil
\caption{Performance and running time of FENDI with varying $\varepsilon$ and number of nodes}
\label{fig:appx_comp}
\end{figure*}

\subsection{Evaluation Results}

\subsubsection{Characterizing EDR-fidelity trade-off for single SD pair}
\noindent
We first investigate how FENDI can be used to characterize the EDR-fidelity trade-off curve for a single SD pair in a randomly generated $15$-node graph, and the result is shown in Fig.~\ref{fig:trade_off}.
We applied the $\epsilon$-constraint method~\cite{mavrotas2009effective}, varying the expected EDR bound from $1$ until the maximum value computed by ORED, and observed the maximum achievable worst-case fidelity given each expected EDR bound.
A few key observations can be made:
(i)~Even in a $15$-node network, there could be many (more than $20$) paths between a pair of nodes, leading to many strongly Pareto optimal points in the frontier.
(ii)~FENDI was able to (approximately) characterize the entire frontier from one direction, presenting many different trade-off options for entanglement distribution---each could be implemented by the post-selection-and-storage protocol.
(iii)~None of the existing algorithms could characterize the trade-off well.
Specifically, ORED could achieve the highest expected EDR, but the lowest fidelity due to using all possible paths in the network to maximize EDR.
QPASS sought to maximize EDR, but could achieve neither the maximum EDR nor the highest fidelity.
Both these methods are fidelity-agnostic, and hence could only optimize for one dimension but not the trade-off.
The fidelity-aware E2E-F was able to trade-off EDR with fidelity, but only for a very small portion of the entire trade-off curve.
The inefficacy comes from two aspects: 1) not being able to utilize all paths to achieve an arbitrary trade-off, and 2) not being able to provide guarantee for expected EDR.
In fact, most (if not all) existing algorithms are designed to optimize for a single point in the area bounded by FENDI's trade-off curve, and mostly achieve a suboptimal point strictly within the boundary.

\subsubsection{Achievable fidelity versus EDR}
\noindent
Fig.~\ref{fig:algs_comp}\subref{fig:algs_comp:algs_worst_fidelity}--\subref{fig:algs_comp:algs_average_fidelity} shows the end-to-end worst-case and average fidelity with different expected EDRs in randomly generated networks.
From Figs.~\ref{fig:algs_comp}\subref{fig:algs_comp:algs_worst_fidelity}--\subref{fig:algs_comp:algs_average_fidelity}, FENDI achieved the highest fidelity compared to all other algorithms.
For any specific expected EDR bound, the two fidelity-aware algorithms (FENDI and E2E-F) achieved significantly higher fidelity than the fidelity-agnostic ones (ORED and QPASS), demonstrating \emph{the crucial need for fidelity awareness in quantum networking}.
With increasing EDR bounds, fidelity was sacrificed to meet the EDR requirement when lower-fidelity paths were utilized.
Though both aimed to approach the optimal fidelity-EDR trade-off, the fidelity gap between FENDI and E2E-F generally increased with higher EDR bounds, demonstrating \emph{importance of our approximation guarantee}.
Note that for many tasks such as entanglement purification~\cite{bennett1996concentrating}, entanglements are regarded as non-usable when fidelity drops below $0.5$.
Fig.~\ref{fig:algs_comp}\subref{fig:algs_comp:algs_worst_fidelity} shows that to ensure minimum fidelity over $0.5$, our algorithm could achieve significantly higher expected EDR, even compared to existing fidelity-aware algorithm such as E2E-F.

\subsubsection{Capability to satisfy EDR requirement}
From Fig.~\ref{fig:algs_comp}\subref{fig:algs_comp:algs_edr_satisfication_rate}, FENDI achieved EDR satisfaction ratios on par with ORED.
This is because both algorithms explore the same EDR feasibility region, and differ only by fidelity of paths (pflows) to meet a given expected EDR bound.
Both FENDI and ORED achieved higher EDR satisfaction ratio than QPASS and E2E-F, even though E2E-F achieved similar (but still lower) fidelity compared to FENDI and higher fidelity than ORED.
There are two reasons: 1) FENDI and ORED are \emph{optimal} in terms of whether an expected EDR bound can be satisfied while E2E-F and QPASS have no such guarantee; 2) {a buffered network can achieve higher long-term EDR than a bufferless network by storing instead of discarding unused intermediate ebits.
}

\subsubsection{Performance versus running time of FPTAS}
Fig.~\ref{fig:appx_comp} shows the evaluation result for the trade-off between performance and running time for FENDI,
with varying number of nodes and accuracy parameter $\varepsilon$.
Note that despite $\varepsilon$, FENDI always achieved the same EDR satisfaction ratio as the same feasibility region of the problem was explored, and thus we omit the figure showing the EDR satisfaction ratio.
From Fig.~\ref{fig:appx_comp}\subref{fig:appx_comp:running_time}, the running time increased with number of nodes and decreased with $\varepsilon$.
From Figs.~\ref{fig:appx_comp}\subref{fig:appx_comp:minimum_fidelity} and~\ref{fig:appx_comp}\subref{fig:appx_comp:avg_fidelity}, increasing $\varepsilon$ led to fidelity reduction, matching our theoretical analysis.
However, with a relatively loose $\varepsilon$, such as when $\varepsilon\! =\! 1$, the achieved fidelity was on par with when $\varepsilon$ was set to a tight value such as $0.5$.
This shows that the theoretical guarantee tends to be over-conservative in practice, and \emph{it is reasonable to set a loose $\varepsilon$ to achieve high time efficiency with reasonable performance}.
The correlation between number of nodes and fidelity values of FENDI was weak.
This could be because, on one hand, a larger graph with more nodes could lead to more paths between each SD pair and hence increase fidelity; on the other hand, a larger graph also means it was more likely that two randomly picked nodes were further away in the graph, leading to degraded fidelity over long paths.
The potential trade-off between network size and fidelity will be explored in our future work.
%

%===========================================================%
%	Back to parent: main.tex										%
%===========================================================%

%===========================================================%
%	Parent: main.tex											%
%	Child: 6.tex												%
%===========================================================%

\section{Conclusions}
\label{sec:conclusions}
\noindent 
In this paper, we studied how to characterize the entanglement distribution rate and fidelity trade-off in a general-topology quantum network with theoretical guarantee.
We derived an end-to-end fidelity model with worst-case (isotropic) noise.
We then formulated the HF-RED problem for maximizing the achievable fidelity under an expected EDR bound (modeled with an optimal entanglement flow abstraction), and proved its NP-hardness.
With a novel decomposition theorem, we developed a \emph{fully polynomial-time approximation scheme (FPTAS)} for the problem called FENDI. %, which aims to explore the tradeoff between the expected entanglement distribution rate and the worst-case fidelity.
We also developed a discrete-time quantum network simulator for evaluation.
Simulation results showed the superior performance of FENDI, compared to existing entanglement routing and distribution algorithms.
%
%
%

%===========================================================%
%	Back to parent: main.tex										%
%===========================================================%

\bibliographystyle{myIEEEtranS}

%===========================================================%
%	Parent: main.tex											%
%	Child: paper.bbl												%
%\bibliography{ref} % 
%===========================================================%

%===========================================================%
%	Back to parent: main.tex										%
%===========================================================%

%===========================================================%
%	Parent: main.tex											%
%	Child: appendix.tex												%
%===========================================================%

\appendices
\section{Data Plane Protocol for FENDI}

\newcommand{\set}{{buffer}}

\noindent
Given a solution output by a central quantum network controller running Algorithm~\ref{a:fptas}, 
we design an extension of the protocol in~\cite{Dai2020a} to achieve the expected EDR and guarantee that all generated ebits have end-to-end fidelity of at least $\Upsilon_{st}$.

Specifically, after the computation, the quantization $\theta$ and the final quantized path length bound $Z_{\text{UB}}$ are distributed to each quantum repeater along with the solution.
For every enode $m{{{{{}}}}}n$, both nodes maintain \emph{\textbf{input \set{}s}} $\mathcal{E}_{m{{{{{}}}}}n/z}$ for every value $z = 1, 2, \dots, Z_{\text{UB}}$ where $I(mn/z) > 0$.
$\mathcal{E}_{m{{{{{}}}}}n/z}$ stores the ebits generated between $m{{{{{}}}}}n$ with a specific range of fidelity values represented by a quantized length $z$.
They also maintain \emph{\textbf{output \set{}s}} $\mathcal{D}^{m{{{{{}}}}}n/z}_{m{{{{{}}}}}k/z'}$ respectively for every $k \ne m, n$ and $z' > z$ where $f^{mn/z}_{mk/z'} > 0$, which stores the ebits that will be contributed to generating ebits between other pairs with other fidelity values.
Note that the number and sizes of buffers at each node may be dynamically adjusted by allocating the available quantum memories. %, based on $f$ and $g$ variables that have positive values, \kIE, the corresponding enodes that are expecting ebits.

To execute the protocol, each link $mn \in E$ will continuously generate $c_{mn} \cdot g_{m{{{{{}}}}}n}$ elementary ebits.
Once successfully generated, these ebits are added to the \set{} $\mathcal{E}_{m{{{{{}}}}}n/z}$ where $z \!=\! \zeta_{mn}^\theta \!=\! \lfloor -\log(W_{mn}) \theta \rfloor \!+\! 1$.
Simultaneously, whenever an ebit is added to $\mathcal{E}_{m{{{{{}}}}}n/z}$ for any $z$, the two end points will jointly toss a random coin, and move the ebit from $\mathcal{E}_{m{{{{{}}}}}n/z}$ to $\mathcal{D}^{m{{{{{}}}}}n/z}_{m{{{{{}}}}}k/z'}$ or $\mathcal{D}^{m{{{{{}}}}}n/z}_{k{{{{{}}}}}n/z'}$ with the following probabilities:
\begin{align*}
    \smallfont
    \Pr[\text{move to }\mathcal{D}^{m{{{{{}}}}}n/z}_{m{{{{{}}}}}k/z'}] = \frac{f^{m{{{{{}}}}}n/z}_{m{{{{{}}}}}k/z'}}{ \sum_{z''}\sum_{k}{ ( f^{m{{{{{}}}}}n/z}_{m{{{{{}}}}}k/z''} + f^{m{{{{{}}}}}n/z}_{k{{{{{}}}}}n/z''} ) } };    \\
    \Pr[\text{move to }\mathcal{D}^{m{{{{{}}}}}n/z}_{k{{{{{}}}}}n/z'}] = \frac{f^{m{{{{{}}}}}n/z}_{k{{{{{}}}}}n/z'}}{ \sum_{z''}\sum_{k}{ ( f^{m{{{{{}}}}}n/z}_{m{{{{{}}}}}k/z''} + f^{m{{{{{}}}}}n/z}_{k{{{{{}}}}}n/z''} ) } }.   
\end{align*}
Finally, each node $k$ will be checking if for any $m{{{{{}}}}}n$, there exists $z_1, z_2, z_3$ such that 
\begin{enumerate}
    \item $z_1 + z_2 + \zeta_k^\theta = z_3$;
    \item $f^{m{{{{{}}}}}k/z_1}_{m{{{{{}}}}}n/z_3} = f^{k{{{{{}}}}}n/z_2}_{m{{{{{}}}}}n/z_3} > 0$; and
    \item $\mathcal{D}^{m{{{{{}}}}}k/z_1}_{m{{{{{}}}}}n/z_3} \ne \emptyset$, and $\mathcal{D}^{k{{{{{}}}}}n/z_2}_{m{{{{{}}}}}n/z_3} \ne \emptyset$.
\end{enumerate}
For each such a case, node $k$ locally performs swapping between each pair of ebits in $\mathcal{D}^{m{{{{{}}}}}k/z_1}_{m{{{{{}}}}}n/z_3}$ and $\mathcal{D}^{k{{{{{}}}}}n/z_2}_{m{{{{{}}}}}n/z_3}$ respectively.
Upon success, the ebit will then be added to $\mathcal{E}_{m{{{{{}}}}}n/z_3}$ by $m$ and $n$.
The source and destination will keep all ebits received in $\mathcal{E}_{s{{{{{}}}}}t/z}$ for any $z$.
All the above processes can be parallel and asynchronous.  %
The strong network-wide synchronization requirement in traditional time-slotted entanglement routing protocols is thus relaxed.
By an induction proof similar to the one in~\cite{Dai2020a} which we omit due to page limit, this protocol is guaranteed to achieve a long-term EDR of at least $\Delta_{st}$ and an end-to-end fidelity of at least $\Upsilon_{st}$ output by the algorithm. %, which is at least $(1-\omega)$ times the optimal fidelity to support this long-term EDR.

\textbf{\emph{Remark:}}
One implicit assumption not mentioned in~\cite{Dai2020a} is that the proposed protocol requires perfect quantum memories to provide the guaranteed fidelity, and sufficiently large memories to achieve the full expected EDR.
These assumptions are somewhat unrealistic under the current technologies.
Hence, the computed EDR and fidelity both serve as upper bounds on the actual values that can be achieved by near-term devices.
Though it is fairly well agreed that large-scale long-lived quantum memories will be an integral part of quantum networks in the future, especially with recent breakthroughs in optical memory devices with more than 1-hour coherence time~\cite{ma2021a}.

On the other hand, we believe even establishing (tight) bounds on the achievable EDR and fidelity is still very useful for near-term quantum network design, such as when comparing different network topologies and parameters, or comparing practical protocol design with these theoretical upper bounds.
Furthermore, we have also preliminarily tested the performance of the buffered protocol above with limited buffer space, and found that it can still maintain an EDR close to the theoretical bound with a relatively small buffer size---such as equal to the capacity of each link.
While out of the scope of the current paper which focuses on computing the theoretical bounds, we believe smart buffer management can further reduce the buffer size and increase achievable EDR and fidelity, which we will explore in our future work.
%
%

%===========================================================%
%	Back to parent: main.tex										%
%===========================================================%

%===========================================================%
%	Parent: main.tex											%
%	Child: bios/bio-hyGU.tex												%
%===========================================================%

%
\begin{IEEEbiography}[{\includegraphics[width=1in,height=1.25in,clip,keepaspectratio]{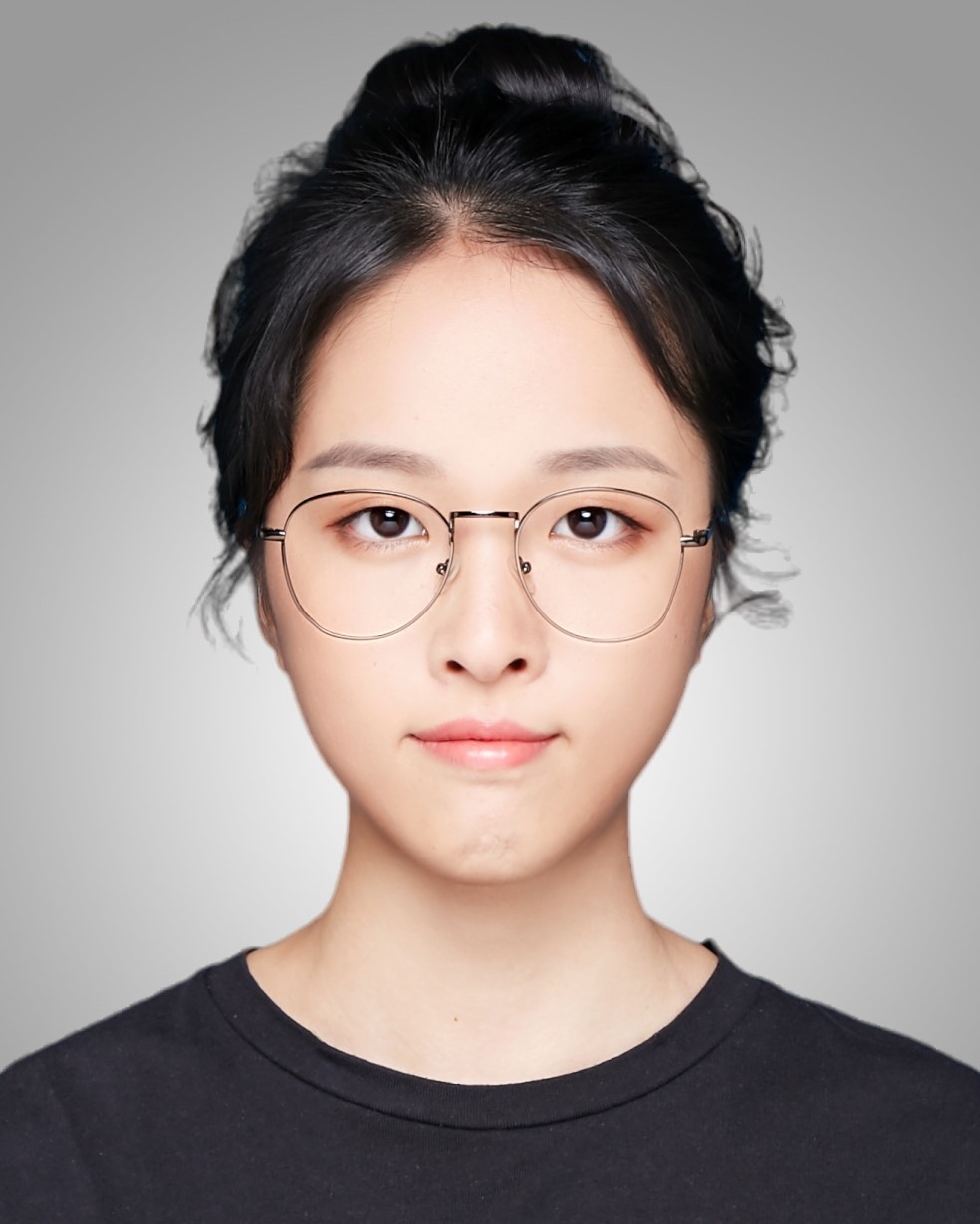}}]{Huayue Gu}(Student Member 2021) received her M.S. degree from the University of California, Riverside, CA, USA, in 2021. Currently, she is a Ph.D. student in the Computer Science department at North Carolina State University. Her research interests are quantum networking, quantum communication, data analytics, etc.
\end{IEEEbiography}

%===========================================================%
%	Back to parent: main.tex										%
%===========================================================%

%===========================================================%
%	Parent: main.tex											%
%	Child: bios/bio-zyLi.tex												%
%===========================================================%

\begin{IEEEbiography}[{\includegraphics[width=1in,height=1.25in,clip,keepaspectratio]{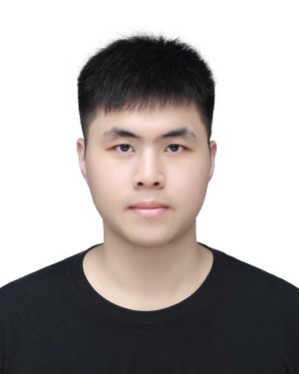}}]{Zhouyu Li}
(Student Member 2021) received his B.E. degree from Central South University, Changsha, China, in 2019 and his M.S. degree from Georgia Institute of Technology, Atlanta, U.S., in 2020. Currently, he is a Ph.D. student of Computer Science at North Carolina State University. His research interests include privacy, cloud/edge computing, network routing, etc.
\end{IEEEbiography}

%===========================================================%
%	Back to parent: main.tex										%
%===========================================================%

%===========================================================%
%	Parent: main.tex											%
%	Child: bios/bio-rzYU.tex												%
%===========================================================%

\begin{IEEEbiography}
[{\includegraphics[width=1in,height=1.25in,clip,keepaspectratio]{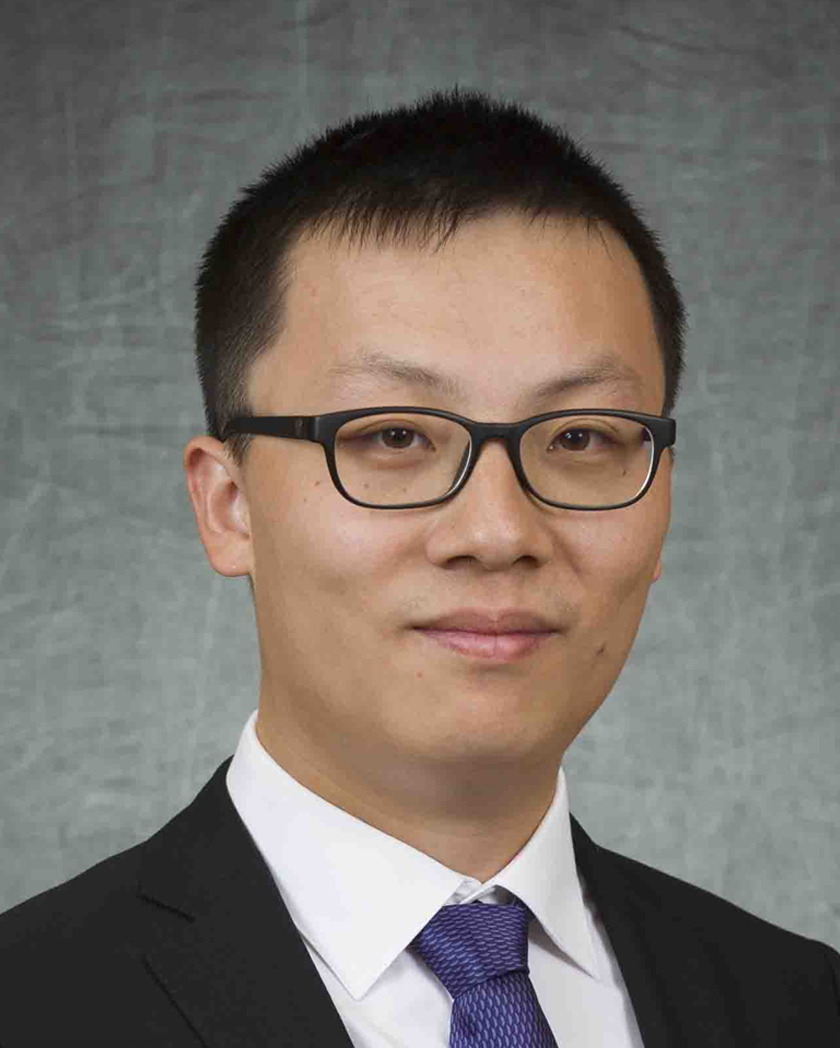}}]
{Ruozhou Yu} (Student Member 2013, Member 2019, Senior Member 2021) is an Assistant Professor of Computer Science at NC State University, USA. He received his Ph.D. degree (2019) in Computer Science from Arizona State University, USA. His interests include quantum networking, edge computing, algorithms and optimization, distributed learning, and security and privacy. He has served on the Organizing Committees of IEEE INFOCOM 2022-2024 and IEEE IPCCC 2020-2024, as a TPC Track Chair for IEEE ICCCN 2023, and as TPC members of IEEE INFOCOM 2020-2024 and ACM Mobihoc 2023. He is an Area Editor for Elsevier Computer Networks. He received the NSF CAREER Award in 2021.
\end{IEEEbiography}

%===========================================================%
%	Back to parent: main.tex										%
%===========================================================%

%===========================================================%
%	Parent: main.tex											%
%	Child: bios/bio-xjWang.tex												%
%===========================================================%

\begin{IEEEbiography}[{\includegraphics[width=1in,height=1.25in,clip,keepaspectratio]{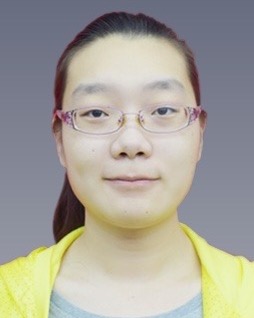}}]{Xiaojian Wang}(Student Member 2021) received her B.E. degree from Taiyuan University of Technology, China, in 2017 and received her M.S. degree in Computer Science from University of West Florida, FL, USA and Taiyuan University of Technology, China, in 2020. She is now a Ph.D. student in the department of Computer Science, College of Engineering at North Carolina State University. Her research interests include payment channel network, security, blockchain. 
\end{IEEEbiography}

%===========================================================%
%	Back to parent: main.tex										%
%===========================================================%

%===========================================================%
%	Parent: main.tex											%
%	Child: bios/bio-ftZhou.tex												%
%===========================================================%

\begin{IEEEbiography}[{\includegraphics[width=1in,height=1.25in,clip,keepaspectratio]{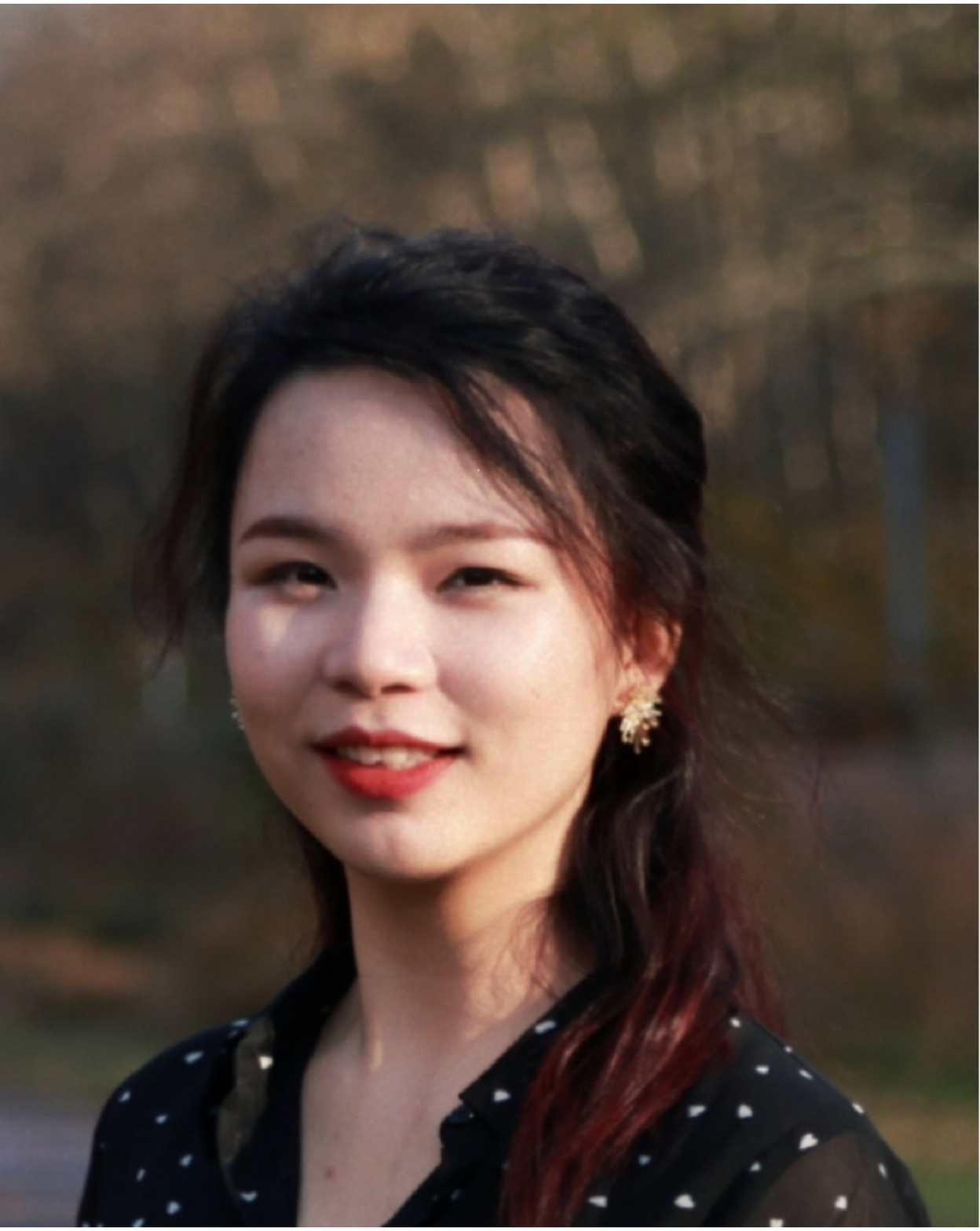}}]{Fangtongzhou}(Student Member 2021) received her B.E. degree (2018) in Electrical Engineering and Automation from Harbin Institute of Technology, Harbin, China and M.S. degree (2020) in Electrical Engineering from Texas A\&M University, College Station, Texas, USA. Currently she is a Ph.D candidate in the School of Computer Science at North Carolina State University. Her research interests include machine learning in computer networking, like federated learning, reinforcement learning for resource provisioning. 
\end{IEEEbiography}

%===========================================================%
%	Back to parent: main.tex										%
%===========================================================%

%===========================================================%
%	Parent: main.tex											%
%	Child: bios/bio-jqLIU.tex												%
%===========================================================%

\begin{IEEEbiography}
[{\includegraphics[width=1in,height=1.25in,clip,keepaspectratio]{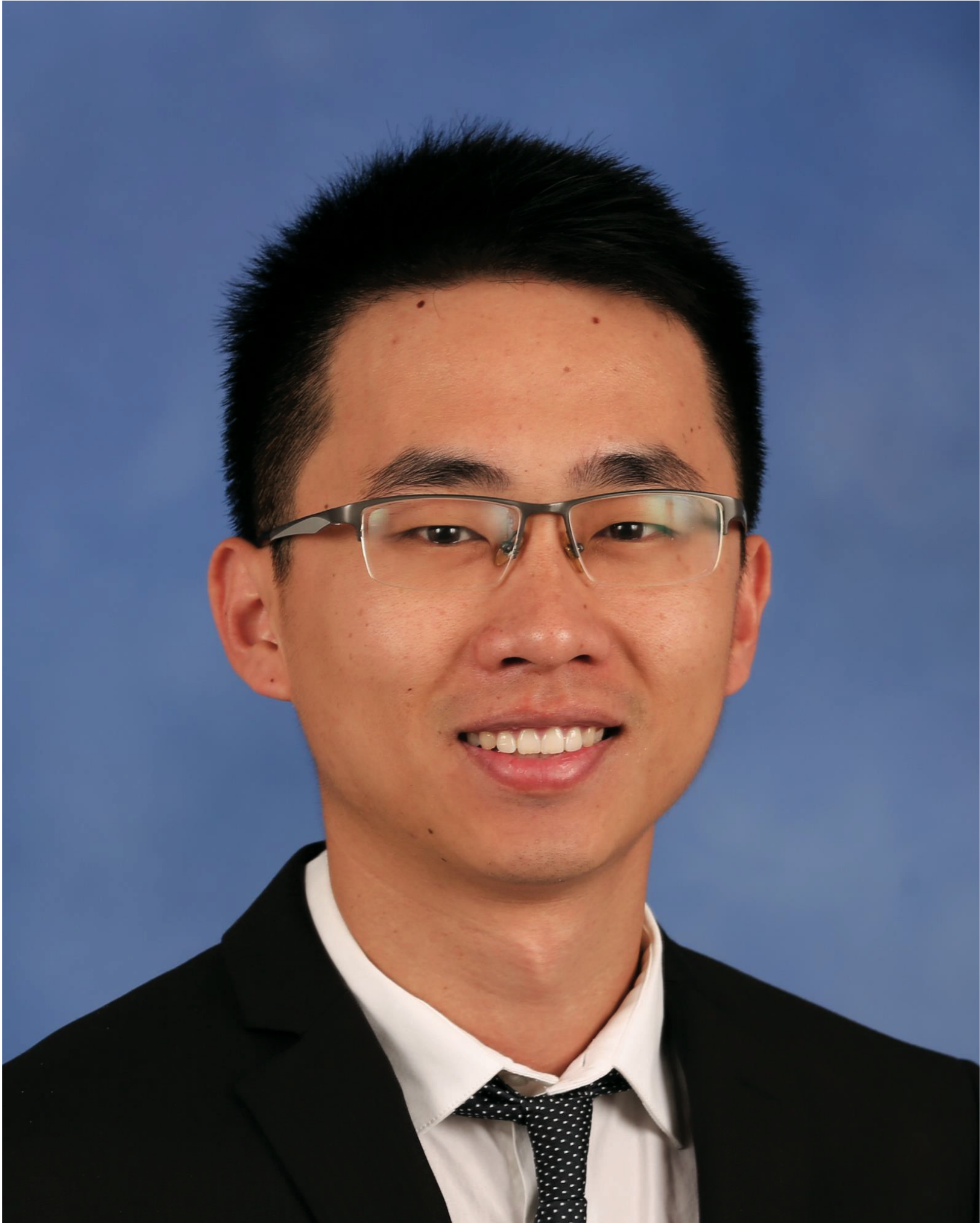}}]
{Jianqing Liu}(Member 2018) is currently an Assistant Professor of Computer Science at NC State University. He received the Ph.D. degree from The University of Florida in 2018 and the B.S. degree from University of Electronic Science and Technology of China in 2013. His research interest is wireless communications and networking, security and privacy. He received the US NSF CAREER Award in 2021. He also received several best paper awards including 2018 Best Journal Paper Award from IEEE TCGCC.
\end{IEEEbiography}

%===========================================================%
%	Back to parent: main.tex										%
%===========================================================%

%===========================================================%
%	Parent: main.tex											%
%	Child: bios/bio-glXUE.tex												%
%===========================================================%

\begin{IEEEbiography}[{\includegraphics[width=1in,height=1.25in,clip,keepaspectratio]{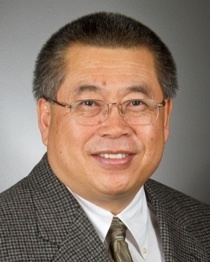}}]
{Guoliang Xue} (Member 1996, Senior Member 1999, Fellow 2011) is
a Professor of Computer Science in the School of Computing and Augmented
Intelligence at Arizona State University.
His research interests span the areas of
Internet-of-things,
cloud/edge/quantum computing and networking,
crowdsourcing and truth discovery,
QoS provisioning and network optimization,
security and privacy,
optimization and machine learning.
He received the IEEE Communications Society William R. Bennett Prize in 2019.
He is an Associate Editor of IEEE Transactions on Mobile Computing,
as well as a member of the Steering Committee of this journal.
He served on the editorial boards of
IEEE/ACM Transactions on Networking
and
IEEE Network Magazine,
as well as the Area Editor of
IEEE Transactions on Wireless Communications, overseeing 13 editors
in the Wireless Networking area.
He has served as VP-Conferences of the IEEE Communications Society.
He is the Steering Committee Chair of IEEE INFOCOM.
\end{IEEEbiography}

%===========================================================%
%	Back to parent: main.tex										%
%===========================================================%

%===========================================================%
%	Back to parent: paper.tex										%
%===========================================================%

\end{document}